\setlist[enumerate]{nosep}
\setlist[itemize]{nosep}
\newtheorem{problem}{Problem}[]
\newtheorem{theorem}{Theorem}[section]
\newtheorem{question}{Question}[]
\newtheorem{fact}{Fact}[]
\newtheorem{note}{Note}[]
\newtheorem{definition}{Definition}[]
\newtheorem{remark}{Remark}[]
\newtheorem{corollary}{Corollary}[]
\newtheorem{lemma}{Lemma}[section]
\newtheorem{observation}{Observation}[]
\begin{document}

\begin{titlepage}
\title{Vital Edges for $(s,t)$-mincut: Efficient Algorithms, Compact Structures, and Optimal Sensitivity Oracle}

\renewcommand\Affilfont{\normalsize}

\author[1]{Surender Baswana}
\author[2]{Koustav Bhanja}

\affil[1,2]{Indian Institute of Technology Kanpur, India}
\affil[ 1]{\texttt{sbaswana@cse.iitk.ac.in}}
\affil[ 2]{\texttt{kbhanja@cse.iitk.ac.in}}
\date{}

\maketitle

\begin{abstract}
 Let $G$ be a directed weighted graph on $n$ vertices and $m$ edges with designated source and sink vertices $s$ and $t$. An edge in $G$ is vital if its removal reduces the capacity of $(s,t)$-mincut. Since the seminal work of Ford and Fulkerson [CJM 1956], a long line of work has been done on computing the \textit{most vital edge} and \textit{all vital edges} of $G$. However, even after $60$ years, the existing results are for either undirected or unweighted graphs. We present the following result for \textit{directed weighted graphs} that also solves an open problem by Ausiello, Franciosa, Lari, and Ribichini [NETWORKS 2019].
 
\noindent
\textbf{1. Algorithmic Results:} There is an algorithm that computes all vital edges as well as the most vital edge of $G$ using ${\mathcal O}(n)$ maximum $(s,t)$-flow computations. \\

\noindent
Vital edges play a crucial role in the design of \textit{sensitivity oracle} for $(s,t)$-mincut -- a compact data
structure for reporting $(s,t)$-mincut after insertion/failure of any edge. For directed graphs, the only existing sensitivity oracle is for unweighted graphs by Picard and Queyranne [MPS 1982]. We present the first and optimal sensitivity oracle for \textit{directed weighted graphs} as follows.

\noindent
\textbf{2.} \textbf{Sensitivity Oracles:} \textbf{(a)} There is an optimal ${\mathcal O}(n^2)$ space data structure that can report an $(s,t)$-mincut $C$ in ${\mathcal O}(|C|)$ time after the failure/insertion of any edge.  \\
\textbf{(b)} There is an ${\mathcal O}(n)$ space data structure that can report the capacity of $(s,t)$-mincut after failure or insertion of any edge $e$ in ${\mathcal O}(1)$ time if the capacity of edge $e$ is known. \\

\noindent
A \textit{mincut for a vital edge} $e$ is an $(s,t)$-cut of the least capacity in which edge $e$ is outgoing. For unweighted graphs, in a classical work, Picard and Queyranne [MPS 1982] designed an ${\mathcal O}(m)$ space directed acyclic graph (DAG) that stores and characterizes all mincuts for all vital edges. Conversely, there is a set containing at most $n-1$ $(s,t)$-cuts such that at least one mincut for every vital edge belongs to the set. We generalize these results for \textit{directed weighted graphs} as follows.

\noindent
\textbf{3.} \textbf{Structural \& Combinatorial Results: (a)} There is a set ${\mathcal M}$ containing at most $n-1$ $(s,t)$-cuts such that at least one mincut for every vital edge belongs to the set. This bound is tight as well. We also show that set ${\mathcal M}$ can be computed using ${\mathcal O}(n)$ maximum $(s,t)$-flow computations.\\
\textbf{(b)} We design two compact structures for storing and characterizing all mincuts for all vital edges -- (i) an ${\mathcal O}(m)$ space DAG for \textit{partial} and (ii) an ${\mathcal O}(mn)$ space structure for \textit{complete} characterization.\\

\noindent
To arrive at our results, we develop new techniques, especially a generalization of \textit{maxflow-mincut} Theorem by Ford and Fulkerson [CJM 1956], which might be of independent interest.
\end{abstract}

\end{titlepage}

\section{Introduction} \label{sec : introduction}
For any graph problem, there are edges whose removal affects the solution of the given problem. These edges are called {\em vital} edges for the problem. There has been extensive research on the vital edges for various fundamental problems -- shortest paths/distance \cite{roditty2012replacement, williams2022algorithms, hershberger2002erratum, nardelli2001faster, DBLP:journals/talg/0001W20}, minimum spanning trees \cite{frederickson1999increasing, DBLP:journals/ipl/LinC93, tsen1994finding}, strongly connected components (SCC) \cite{georgiadis2020strong}. The concept of vital edge for $(s,t)$-mincut has existed ever since the seminal work of Ford and Fulkerson \cite{ford_fulkerson_1956} on maximum $(s,t)$-flow. In this article, for directed weighted graphs, we present the following two main results --~(1) an efficient \textit{algorithm} for computing all vital edges,~
and (2) optimal {\em sensitivity oracles} for $(s,t)$-mincut. 
The algorithm in (1) is the first nontrivial algorithm for computing all vital edges in directed weighted graphs, which also answers an open question in \cite{DBLP:journals/networks/AusielloFLR19}. Our optimal sensitivity oracle in (2) is the first sensitivity oracle for directed weighted graphs in the area of minimum cuts. 
In order to arrive at these results, we present
interesting \textit{structural} \& optimal \textit{combinatorial} results on mincuts for vital edges. These results provide a generalization of the classical work of Picard and Queyranne \cite{DBLP:journals/mp/PicardQ80} and the recent work of Baswana, Bhanja, and Pandey \cite{baswana2023minimum+}. 

Let $G=(V,E)$ be a directed graph on $n=|V|$ vertices and $m=|E|$ edges. 
Each edge $e\in E$ has a capacity, denoted by $w(e)$, which is a positive real number. Let $s$ be a designated source vertex and $t$ be a designated sink vertex in $G$. 

A set $C\subset V$ is said to be a cut if $C\not= \emptyset$. The outgoing edges from $C$ are called \textit{contributing} edges of $C$. 
The \textit{capacity} of cut $C$, denoted by $c(C)$, is defined as the sum of the capacities of all contributing edges of $C$. 
A cut $C$ is said to be an $(s,t)$-cut if $s\in C$ and $t\in \overline{C}=V\setminus C$. 
An $(s,t)$-cut $C$ with the least capacity is called an \textit{$(s,t)$-mincut}. We denote the capacity of $(s,t)$-mincut by $f^*$. The $(s,t)$-mincut of a graph is a fundamental concept in graph theory and is used to design efficient algorithms for numerous real-world problems \cite{DBLP:books/daglib/0069809}. 
Now we formally define the vital edges for $(s,t)$-micuts.
\begin{definition}[vital edge]
An edge $e\in E$ is said to be a vital edge if the removal of $e$ decreases the capacity of $(s,t)$-mincut in $G$. $E_{vit}$ denotes the set of all vital edges in $G$.
\label{def:relevant-edges}
\end{definition}
 Observe that each edge that contributes to an $(s,t)$-mincut is definitely a vital edge. However, a vital edge might not necessarily contribute to any $(s,t)$-mincut (e.g., edge $(v_1,v_4)$ in Figure \ref{fig : each mincut has a nonvital edge}). An edge that is not vital is called a \textit{nonvital} edge. At first glance, it may appear that we may remove all nonvital edges from the graph without affecting the $(s,t)$-mincut. But it is not true, as stated in the following note.
\begin{note} \label{note : removal of gamma edges}
    Although the removal of any single nonvital edge does not decrease the capacity of $(s,t)$-mincut, the removal of a set of nonvital edges might lead to the reduction in the capacity of $(s,t)$-mincut (e.g., edges $(v_2,v_6)$ and $(v_3,v_5)$ in Figure \ref{fig : each mincut has a nonvital edge}). 
\end{note}
\begin{figure}[ht]
  \begin{center}
    \includegraphics[width=0.4\textwidth]{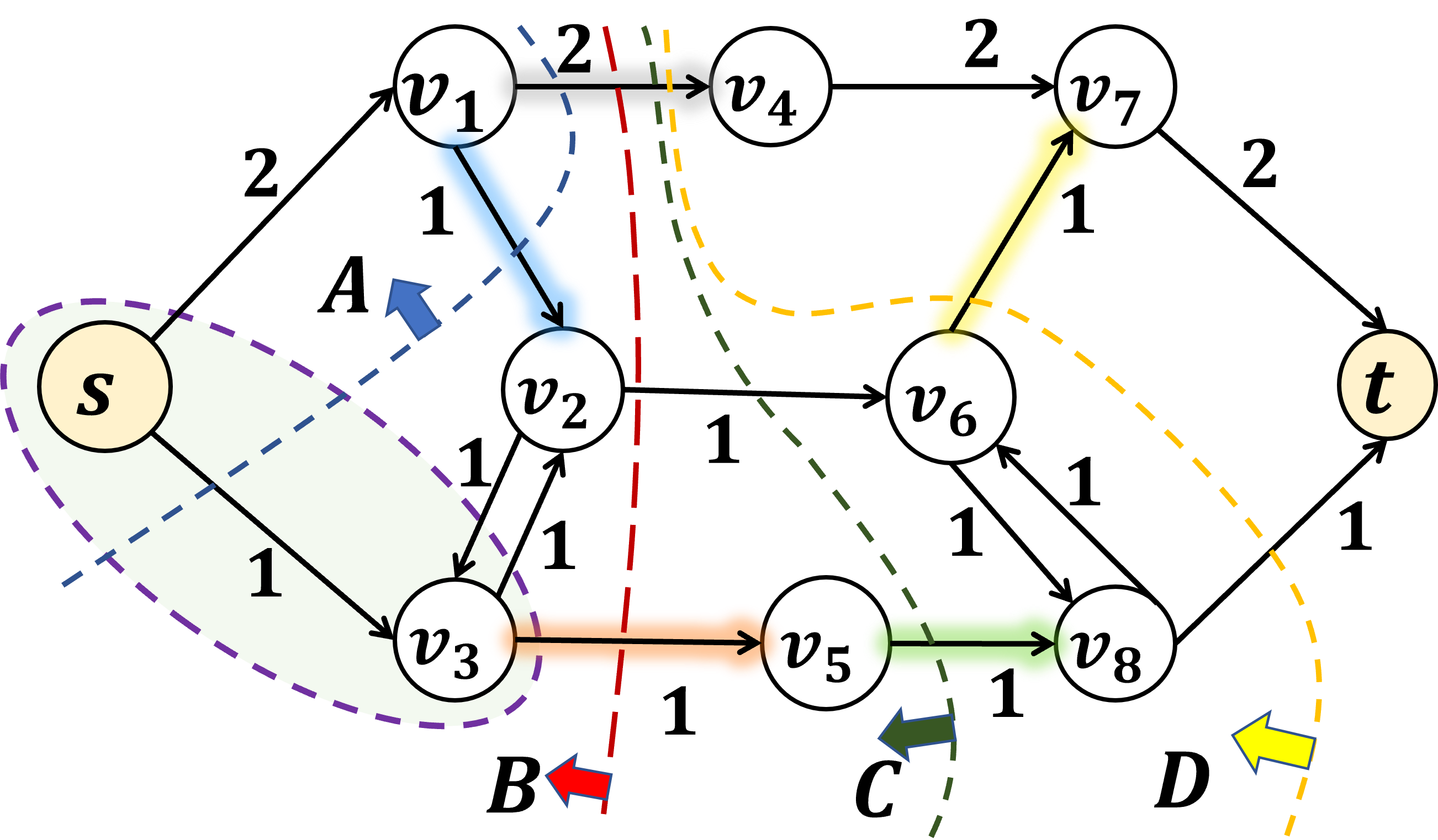}
  \end{center}
  \caption{Each mincut for vital edge $(v_1,v_4)$ contains a nonvital edge (shown in the same color).} 
  \label{fig : each mincut has a nonvital edge}
\end{figure}
The design of efficient algorithms for various problems \cite{ratliff1975finding, wollmer1963some, lubore1971determining, phillips1993network} related to vital edges started just a few years after the seminal work of Ford and Fulkerson \cite{ford_fulkerson_1956}. The \textit{vitality of an edge} $e$ is the reduction in the capacity of $(s,t)$-mincut after the removal of edge $e$. Observe that one maximum $(s,t)$-flow computation is sufficient for computing the vitality of any edge. So, for computing all vital edges and their vitality, there is a trivial algorithm that requires ${\mathcal O}(m)$ maximum $(s,t)$-flow computations. 
An edge having the maximum vitality is said to be the \textit{most vital edge}. 
Aneja, Chandrasekaran, and Nair 
\cite{DBLP:journals/networks/AnejaCN01} designed an algorithm that performs ${\mathcal O}(n)$ maximum $(s,t)$-flow computations to compute the most vital edge in an undirected graph.  
Ausiello, Franciosa, Lari, and Ribichini 
\cite{DBLP:journals/networks/AusielloFLR19}  showed that ${\mathcal O}(n)$ maximum $(s,t)$-flow computations are sufficient even for computing all vital edges and their vitality in an undirected graph. 
 Unfortunately, even after 60 years, the following question has remained unanswered, which is also posed as an open problem in \cite{DBLP:journals/networks/AusielloFLR19}.
\begin{question} \label{ques : algorithm}
     For directed weighted graph $G$, does there exist an algorithm that can compute all vital edges along with their vitality using ${\mathcal O}(n)$ maximum $(s,t)$-flow computations? 
\end{question}   
We can generalize the notion of $(s,t)$-mincut to {\em `mincut for an edge'} as follows.
\begin{definition} [mincut for an edge] \label{def : relevant and mincut for an edge}
    Let $e$ be a contributing edge of an $(s,t)$-cut $C$. $C$ is a mincut for edge $e$ if $c(C)\le c(C')$ for each $(s,t)$-cut $C'$ in which edge $e$ appears as a contributing edge. 
\end{definition}
Not only the study of mincuts for vital edges is important from a graph theoretic perspective but it also plays a crucial role in designing sensitivity oracle for $(s,t)$-mincuts -- a compact data structure that efficiently reports an $(s,t)$-mincut after the failure/insertion of any edge.

 The minimum cuts in a graph can be quite large in number -- $\Omega(n^2)$ global mincuts \cite{dinitz1976structure}, $\Omega(2^n)$ $(s,t)$-mincuts \cite{DBLP:journals/mp/PicardQ80}. 
Interestingly, compact structures have been invented that compactly store and {\em characterize} various types of minimum cuts and cuts of capacity near minimum \cite{dinitz2000general, dinitz19952, 
 DBLP:journals/mp/PicardQ80, dinitz1976structure, baswana2023minimum+}. A compact structure $G'$ is said to characterize a set ${\mathcal C}$ of cuts using a property ${\mathcal P}$ if the following holds. A cut $C$ belongs to ${\mathcal C}$ if and only if cut $C$ satisfies property ${\mathcal P}$ in graph $G'$. 
 Specifically for $(s,t)$-mincuts in any directed weighted graph $G$, 
Picard and Queyranne \cite{DBLP:journals/mp/PicardQ80} showed that there exists a directed acyclic graph (DAG), denoted by ${\mathcal D}_{PQ}(G)$, occupying ${\mathcal O}(m)$ space that compactly stores all $(s,t)$-mincuts. In addition, it provides the following characterization for each $(s,t)$-mincut. \\ 
\centerline{\em An $(s,t)$-cut $C$ is an $(s,t)$-mincut in $G$ if and only if $C$ is a $1$-transversal cut in ${\mathcal D}_{PQ}(G)$.}\\
An $(s,t)$-cut is said to be $1$-transversal if its edges intersect any simple path at most {\em once}; e.g., cut $A$ is $1$-transversal but cut $\{s,v_3\}$ is not $1$-transversal in Figure \ref{fig : each mincut has a nonvital edge}. The $1$-transversality property also plays a crucial role in designing sensitivity oracles for $(s,t)$-mincuts in unweighted graphs \cite{DBLP:journals/mp/PicardQ80, baswana2023minimum+}. Observe that ${\mathcal D}_{PQ}(G)$ stores all mincuts for all edges that contribute to $(s,t)$-mincuts. 
However, edges contributing to $(s,t)$-mincuts may constitute a small subset of the set of all vital edges.
Therefore,  ${\mathcal D}_{PQ}(G)$ may fail to preserve all mincuts for all vital edges. This raises the following question.

\begin{question} \label{question 0}
    Does there exist a compact graph structure that stores and characterizes all mincuts for all vital edges in directed weighted graph $G$? 
\end{question}
 Constructing the smallest set storing a minimum cut for every edge or every pair of vertices has been addressed extensively \cite{hassin2007flow, dinitz1976structure, DBLP:journals/anor/ChengH91, GH61, DBLP:journals/dam/GranotH86, gusfield1991efficient} since the remarkable work of Gomory and Hu \cite{GH61}. 
The smallest set of $(s,t)$-cuts that has at least one mincut for every edge is called a \textit{mincut cover}. 
 For directed unweighted graphs, Baswana, Bhanja, and Pandey \cite{baswana2023minimum+}, exploiting the DAG structure in \cite{DBLP:journals/mp/PicardQ80}, showed that there is a mincut cover of cardinality at most $n-1$ for all vital edges. 
 For undirected weighted graphs, 
 it is shown in \cite{DBLP:journals/networks/AusielloFLR19} that there is a mincut cover of cardinality at most $n-1$ for all edges, and hence for all vital edges. 
 Unfortunately, for directed weighted graphs, we establish that it is not possible to have a mincut cover of cardinality $o(n^2)$ for all edges (Theorem \ref{thm : su vt lower bound}). 
Therefore, the following question naturally arises. 
\begin{question}   \label{question 1}
    Does there exist a mincut cover of cardinality $o(n^2)$ for all vital edges in directed weighted graph $G$?
\end{question}
Design of sensitivity oracles have been studied quite extensively for various fundamental problems in both unweighted and weighted graphs --  shortest paths/distances \cite{bilo2023approximate, DBLP:journals/talg/0001W20}, 
reachability \cite{italiano2021planar, choudhary2016optimal}, 
strongly connected components \cite{baswana2019efficient, georgiadis2020strong}, all-pairs mincuts \cite{baswana2022sensitivity}. In
weighted graphs, the concept of failures/insertions of edges is, interestingly, more generic. Here, the aim is to report the solution to the given problem given that the capacities of a \textit{small} set of edges are decreased/increased by an amount $\Delta>0$. 
 
 For $(s,t)$-mincuts in directed graphs, the existing sensitivity oracles are only for unweighted graphs and completely based on DAG ${\mathcal D}_{PQ}(G)$ \cite{DBLP:journals/mp/PicardQ80}, which dates back to 1982. Firstly, there is an ${\mathcal O}(m)$ space sensitivity oracle given in
 \cite{DBLP:journals/mp/PicardQ80}. After the failure/insertion of an edge, the oracle takes ${\mathcal O}(1)$  time for reporting the capacity and ${\mathcal O}(m)$ time for reporting the corresponding $(s,t)$-mincut. 
Assuming the edge of the query exists in the graph, an ${\mathcal O}(n)$ space data structure can be designed \cite{baswana2023minimum+}.
This data structure can report the capacity of $(s,t)$-mincut in ${\mathcal O}(1)$ time and an $(s,t)$-mincut $C$ in ${\mathcal O}(|C|)$ time after the failure/insertion of an edge.
For various problems related to sensitivity analysis, 
 it is also important to efficiently report a compact structure that stores and characterizes \textit{all} $(s,t)$-mincuts after the failure of any edge, as shown in \cite{DBLP:journals/mp/PicardQ80}. For unweighted graphs, DAG ${\mathcal D}_{PQ}$ for the resulting graph can be reported in ${\mathcal O}(m)$ time \cite{DBLP:journals/mp/PicardQ80}. All these results crucially exploit the property that, in unweighted graphs, a mincut for a vital edge is also an $(s,t)$-mincut; hence, every contributing edge is also vital. However, for weighted graphs, this property no longer holds. In fact, nonvital edges may contribute to all mincuts for a vital edge (e.g., all mincuts $\{A,B,C,D\}$ for edge $(v_1,v_4)$ in Figure \ref{fig : each mincut has a nonvital edge}). So, the following question arises. 

\begin{question}\label{question 3}
    For directed weighted graph $G$, does there exist 
    \begin{enumerate}
        \item an ${\mathcal O}(n)$ space data structure that can report the capacity of $(s,t)$-mincut in ${\mathcal O}(1)$ time,
        \item a compact data structure that can report an $(s,t)$-mincut $C$ in ${\mathcal O}(|C|)$ time, and 
        \item a compact data structure that can report DAG ${\mathcal D}_{PQ}$  in ${\mathcal O}(m)$ time
    \end{enumerate}
for the resulting graph after increasing/decreasing the capacity of any given edge? 
\end{question}

\begin{note} It is a simple exercise, using the result of \cite{DBLP:journals/mp/PicardQ80}, to design an ${\mathcal O}(n^2)$ space data structure and an
${\mathcal O}(n)$ space data structure that, after insertion of any given edge, can report an $(s,t)$-mincut $C$ in ${\mathcal O}(|C|)$ time and its capacity in ${\mathcal O}(1)$ time, respectively. Henceforth, while addressing sensitivity oracles, we focus only on handling the decrease in the capacity of an edge. 
\end{note}

\subsection{Our results}
We provide an affirmative answer to all the four questions raised above. To arrive at our results, as one of our key technical contributions, we present a generalization of the well-known maxflow-mincut theorem \cite{ford_fulkerson_1956} (stated in Theorem \ref{thm : a special assignment of flow}), which might be of independent interest.
\paragraph*{A. Mincut Cover}
Our first result, in the following theorem, provides a bound on the cardinality of mincut cover for all vital edges, 
which answers Question \ref{question 1} in the affirmative. In addition, the insights used to establish this result act as the foundation for our sensitivity oracles. 
\begin{theorem}[Mincut Cover] \label{thm : n-1 cuts} 
    For any directed weighted graph $G$ on $n$ vertices with a designated source vertex $s$ and a designated sink vertex $t$, there exists a set ${\cal C}_{min}$ containing at most $n-1$ $(s,t)$-cuts such that, for any vital edge $e$ in $G$, at least one mincut for edge $e$ is present in set ${\cal C}_{min}$.
\end{theorem}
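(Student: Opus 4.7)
The plan is to prove the theorem by strong induction on $n$, leveraging \textsc{GenFlowCut} together with a recursive construction via contraction. The base case $n=2$ is trivial: the unique $(s,t)$-cut $\{s\}$ serves as a mincut for every vital edge, meeting the bound $n-1=1$.

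The first key ingredient for the inductive step is a \emph{three-way partition lemma}: given any vital edge $e$ and any mincut $C$ for $e$, every vital edge of $G$ falls into exactly one of three classes --- both endpoints in $C$, both endpoints in $\overline{C}$, or contributing to $C$ (tail in $C$, head in $\overline{C}$). The nontrivial point is the absence of any vital edge directed from $\overline{C}$ into $C$. This will follow from \textsc{GenFlowCut}: the witnessing maximum flow assigns zero flow to every incoming edge of $C$, while any vital edge must carry strictly positive flow in \emph{every} maximum flow (else removing it would not reduce $f^{*}$), so no incoming edge of $C$ can be vital.

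For the recursion, I would pick any vital edge $e=(u,v)$, compute a mincut $C$ for $e$, add $C$ to the cover, and form the contracted graphs $G_{1}=G/\overline{C}$ (collapsing $\overline{C}$ to a super-sink $t_{1}$) and $G_{2}=G/C$ (collapsing $C$ to a super-source $s_{2}$). Contraction preserves capacities, and the three-way partition makes the recursion behave cleanly: a vital edge internal to $C$ (resp.\ $\overline{C}$) remains a vital edge of $G_{1}$ (resp.\ $G_{2}$), and its mincuts in $G$ lift from mincuts in the contracted graph. Vital edges contributing to $C$ are handled either by $C$ itself, when $C$ happens to be a mincut for them, or by a lifted cut from one of the recursive covers that captures their actual mincut in $G$ (this is necessary because, as the paper's examples illustrate, $C$ need not be a mincut for every edge it contains).

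The main obstacle is the counting. Since $G_{1}$ has $|C|+1$ vertices and $G_{2}$ has $|\overline{C}|+1$ vertices, a direct appeal to the induction hypothesis yields $|\mathcal{C}_{1}|\le |C|$ and $|\mathcal{C}_{2}|\le|\overline{C}|$, giving $1+|C|+|\overline{C}|=n+1$ --- two too many. The key saving is that the cut $C$ is \emph{recoverable from each recursion}: viewed as a subset of $V(G_{1})$, $C$ is a mincut for the edge $(u,t_{1})$, because $(s,t_{1})$-cuts in $G_{1}$ correspond precisely to $(s,t)$-cuts of $G$ with $\overline{C}$ entirely on the $t$-side, so their minimum matches $c_G(C)$; symmetrically, $C$ corresponds to the trivial cut $\{s_{2}\}$ in $G_{2}$ and is a mincut there for a contributing vital edge. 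Choosing the recursive covers to include these copies of $C$ lets us identify them with the $C$ already placed in the cover, saving one cut per recursion and yielding the target $1+(|C|-1)+(|\overline{C}|-1)=n-1$. The hardest technical point will be the degenerate cases $|C|=1$ or $|\overline{C}|=1$, where one contracted graph has the same vertex count as $G$; here one must either exploit the freedom in choosing the mincut for $e$ to ensure $|C|,|\overline{C}|\ge 2$ whenever possible, or argue that the "trivial" side contributes no new vital edges to cover, so the count still closes.
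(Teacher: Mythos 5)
Your three-way partition lemma is exactly the paper's Theorem~\ref{thm:3-partitions}, proved the same way (GenFlowCut gives a maximum flow with $f_{in}(C)=0$, and a vital edge must carry positive flow in every maximum flow), and the recursive skeleton is also the paper's. But there is a genuine gap in how you handle the vital edges \emph{contributing} to $C$. You pick an arbitrary vital edge $e$, and you concede that $C$ need not be a mincut for every other contributing vital edge $e'$; you then claim such an $e'$ is covered ``by a lifted cut from one of the recursive covers that captures their actual mincut in $G$.'' This cannot work: $e'$ has one endpoint in $C$ and one in $\overline{C}$, so in $G_1=G/\overline{C}$ it becomes an edge into the super-sink $t_1$, and every $(s,t_1)$-cut of $G_1$ lifts to an $(s,t)$-cut of $G$ with \emph{all} of $\overline{C}$ on the $t$-side (symmetrically for $G_2$). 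The actual mincut for $e'$ in $G$ may subdivide $\overline{C}$ and therefore does not lift from either contracted graph. The paper's Note~\ref{note : arbitrary edge does not work} (Figure~\ref{fig : su vt sv ut}) is a concrete counterexample to the arbitrary-choice recursion: picking $(s,a)$ with mincut $A=\{s\}$ and then $(b,t)$ with mincut $C=\{s,a,b\}$ produces a ``cover'' $\{A,C\}$ that contains no mincut for the vital edges $(a,t)$ and $(s,b)$. The missing idea is the selection rule: from the current set $\mathcal{E}$ of vital edges, choose the edge $e$ whose mincut has the \emph{least capacity} among all mincuts for edges in $\mathcal{E}$. Then any vital $e'\in\mathcal{E}$ contributing to $C(e)$ has mincut capacity at least $c(C(e))$ by minimality, while $C(e)$ is itself an $(s,t)$-cut of capacity $c(C(e))$ to which $e'$ contributes; hence $C(e)$ is a mincut for $e'$ (the paper's property $\mathcal{P}_2$), and the contributing class is fully covered by the single cut $C(e)$.

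A secondary issue is your counting. Inducting on the number of graph vertices forces you into the ``recover $C$ from each recursion'' bookkeeping and the degenerate cases $|C|=1$ or $|\overline{C}|=1$, and it also requires strengthening the induction hypothesis to assert that the recursive cover can be made to \emph{contain} a prescribed cut, which the plain statement does not give you. The paper avoids all of this by measuring progress with $\mu=|V(\mathcal{E})|$, the number of vertices that are endpoints of the remaining vital edges: since $V(\mathcal{E}_L)\subseteq C(e)$ and $V(\mathcal{E}_R)\subseteq\overline{C(e)}$ are disjoint and each misses an endpoint of $e$, one gets $\mu_1,\mu_2<\mu$ and $\mu_1+\mu_2\le\mu$, and the recurrence $\mathcal{N}(\mu)\le 1+\mathcal{N}(\mu_1)+\mathcal{N}(\mu_2)$ with $\mathcal{N}(0)=0$ yields $\mathcal{N}(\mu)\le\mu-1\le n-1$ directly, with no contraction and no degenerate cases.
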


  Let $H$ be a directed path from $s$ to $t$ consisting of $n-1$ edges. Observe that each edge $e$ in $H$ is a vital edge and there is a unique mincut for $e$.  
  Hence, the bound of $n-1$ on the cardinality of mincut cover, as mentioned in Theorem \ref{thm : n-1 cuts}, is tight. In addition, our bound of $n-1$ cuts also matches with the existing best-known bounds for both directed unweighted \cite{DBLP:journals/mp/PicardQ80, baswana2023minimum+} and undirected weighted graphs \cite{DBLP:journals/networks/AusielloFLR19}.

\paragraph*{B. Sensitivity Oracles for $(s,t)$-mincut}
We present a pair of data structures that act as sensitivity oracles for $(s,t)$-mincut. Our first data structure is stated in the following theorem, 
 which answers Question \ref{question 3}(2,3) in the affirmative. 
\begin{theorem} [Sensitivity Oracle] \label{thm : main result}
     Let $G$ be a directed weighted graph on $n$ vertices with a designated source vertex $s$ and a designated sink vertex $t$. There is an ${\mathcal O}(n^2)$ space data structure that, given any edge $e$ and any value $\Delta\ge 0$, can report 
     \begin{enumerate}
     \item the capacity of $(s,t)$-mincut in ${\mathcal O}(1)$ time, 
     \item an $(s,t)$-mincut $C$ in ${\mathcal O}(|C|)$ time for the resulting graph, and 
     \item the DAG ${\mathcal D}_{PQ}$ occupying ${\mathcal O}(m)$ space in ${\mathcal O}(m)$ time
     \end{enumerate}
     after reducing the capacity of the edge $e$ by $\Delta$.
\end{theorem}
Suppose the aim is to report only the capacity of $(s,t)$-mincut. Interestingly, in this case, we design a significantly compact data structure, which answers Question \ref{question 3}(1) in the affirmative. 
\begin{theorem} [Sensitivity Oracle for Reporting Capacity] \label{thm : reporting value}
   Let $G$ be a directed weighted graph on $n$ vertices with a designated source vertex $s$ and a designated sink vertex $t$. There is an ${\mathcal O}(n)$ space data structure that, given any edge $e\in E$ and a value $\Delta$ satisfying 
   $0\le \Delta\le w(e)$, can report in ${\mathcal O}(1)$ time the capacity of $(s,t)$-mincut after reducing the capacity of $e$ by $\Delta$.  
\end{theorem}
\noindent
\textbf{Lower Bound:} We complement the result in Theorem \ref{thm : main result} by a matching lower bound, which holds even for undirected graphs as follows.
\begin{theorem} \label{thm : lower bound}
    Any data structure for reporting the capacity of $(s,t)$-mincut after the failure of an edge in an (un)directed graph on $n$ vertices with positive edge capacities must require $\Omega (n^2\log{n})$ bits of space in the worst case, irrespective of the query time, for a designated source vertex $s$ and a designated sink vertex $t$.
\end{theorem}
\begin{remark}
The ${\mathcal O}(n)$ space upper bound in Theorem \ref{thm : reporting value} does not violate the $\Omega(n^2)$ space lower bound in Theorem \ref{thm : lower bound}. This is because Theorem \ref{thm : reporting value} assumes that the query edge $e$ belongs to $E$ and 
the reduction in capacity of $e$ is at most $w(e)$. 
However, this assumption seems practically justified since, in real world, the capacity of an edge can decrease only if the edge {\em actually} exists, and furthermore, it can decrease by an amount at most the capacity of the edge. 
\end{remark}
\noindent
 We show that, using data structure in Theorem \ref{thm : reporting value}, a labeling scheme of ${\mathcal O}(\log^2{n}+\log{n}\log{W})$ bits per vertex can be designed for reporting capacity of $(s,t)$-mincut after decreasing capacity of any edge $e$ in $G$ by a value $\Delta$ satisfying $0\le \Delta \le w(e)$.  Here $W$ is the maximum capacity of any edge in $G$.
\paragraph*{C. Algorithm for Computing All Vital Edges}  
Our algorithmic result on computing all vital edges is stated in the following theorem, which answers Question \ref{ques : algorithm} in the affirmative. 
\begin{theorem}[Computing All Vital Edges] \label{thm : computing all vital edges}
    For any directed weighted graph on $n$ vertices with a designated source vertex $s$ and designated sink vertex $t$, 
    there is an algorithm that computes all vital edges and their vitality using ${\mathcal O}(n)$ maximum $(s,t)$-flow computations. 
\end{theorem}
Note that the running time of our algorithm also matches with the best-known result for undirected weighted graphs \cite{DBLP:journals/networks/AusielloFLR19}.
\begin{note}
    Let $G\setminus e$ be the graph obtained from $G$ after the removal of edge $e$. Computing an $(s,t)$-mincut in graph $G\setminus e$ for every edge $e$ can be accomplished easily using $m$ maximum $(s,t)$-flow computations. However, we show, using Theorem \ref{thm : computing all vital edges}, that ${\mathcal O}(n)$ maximum $(s,t)$-flow computations are sufficient. 
\end{note}
 

Observe that there is a trivial data structure of ${\mathcal O}(m)$ space 
that can report the vitality of any given edge in ${\mathcal O}(1)$ time. Can we have a data structure that is even more compact and still achieves ${\mathcal O}(1)$ query time? The following theorem answers this question in the affirmative 
(Refer to Note \ref{note : vitality of an edge}).

\begin{theorem} \label{thm : data structure for vitality}
    For any directed weighted graph $G$ on $n$ vertices, there is an ${\mathcal O}(n)$ space data structure that, given any edge $e$ in $G$ and its capacity, can report the vitality of edge $e$ in ${\mathcal O}(1)$ time. 
\end{theorem}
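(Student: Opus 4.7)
The plan hinges on the identity
$$\text{vitality}(e) \;=\; \max\bigl(0,\; f^{*} + w(e) - c^{*}_{u,v}\bigr),$$
where $e=(u,v)$ and $c^{*}_{u,v}$ denotes the capacity of the minimum $(s,t)$-cut with $u$ on the source side and $v$ on the sink side. Indeed, the $(s,t)$-mincut of $G\setminus\{e\}$ equals $\min\{f^{*},\, c^{*}_{u,v}-w(e)\}$, so the task reduces to answering $c^{*}_{u,v}$ in ${\mathcal O}(1)$ time within ${\mathcal O}(n)$ space.

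I would combine two ${\mathcal O}(n)$-size structures. First, I build the Ancestor tree ${\mathcal T}_{(s,t)}$ for the function $F$ of Equation \eqref{eq : s,t cuts}; it has at most $n-1$ internal nodes, each labelled with an $(s,t)$-cut (together with its capacity and the orientation recording which side contains $s$), and, via standard LCA preprocessing, reports in ${\mathcal O}(1)$ time the least-capacity $(s,t)$-cut separating any queried pair $x,y$ and the side on which $x$ lies. Second, I precompute and tabulate the vitality of every edge of the subgraph $G_{\mathcal F}$ from the loose-edges discussion; since $|E(G_{\mathcal F})|\le n-1$, and the precomputation can reuse the ${\mathcal O}(n)$ max-flow computations behind Theorem \ref{thm : computing all vital edges}, this contributes another ${\mathcal O}(n)$ space and is stored in a static dictionary keyed by endpoint pairs.

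The query for $e=(u,v)$ of capacity $w(e)$ proceeds as follows. Probe the dictionary: if $e\in G_{\mathcal F}$, return the precomputed answer. Otherwise $e$ is either tight-vital or nonvital. Retrieve from ${\mathcal T}_{(s,t)}$ the least-capacity separating cut $C^{*}$ of $\{u,v\}$, its capacity $V$, and the side of $C^{*}$ on which $u$ lies. If $u$ lies on the sink side of $C^{*}$, then by Theorem \ref{thm : tight edge property} the edge $e$ cannot be tight-vital, because a tight vital edge's mincut always places its tail on the source side; hence $e$ must be nonvital and the answer is $0$. If $u$ lies on the source side, then $V=c^{*}_{u,v}$; by Theorem \ref{thm : tight edge property} this equals the mincut capacity for $e$ in the tight-vital case, and in the nonvital case $c^{*}_{u,v}\ge f^{*}+w(e)$ holds automatically, so $\max(0,\, f^{*}+w(e)-V)$ is the correct vitality in both sub-cases.

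The main obstacle I anticipate is the directionality gap: the Ancestor tree naturally reports the orientation-agnostic minimum $\min(c^{*}_{u,v}, c^{*}_{v,u})$, while the vitality formula requires the directed value $c^{*}_{u,v}$. My resolution is the orientation-bit augmentation of tree nodes described above, and its correctness in the ``wrong orientation'' case depends crucially on Theorem \ref{thm : tight edge property}: once loose edges have been peeled away into $G_{\mathcal F}$, any remaining query edge whose tail sits on the sink side of the min separator must already be nonvital, so returning $0$ is safe without ever recovering $c^{*}_{u,v}$ explicitly. The final accounting is ${\mathcal O}(n)$ space and ${\mathcal O}(1)$ query time, as required.
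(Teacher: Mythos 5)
Your proposal is correct, but it reaches the result by a route that differs from the paper's. The paper builds a single binary tree ${\mathcal T}_{vit}(G)$ (Algorithm \ref{alg : hierarchy tree}) by recursing on the set of vital edges and, at each step, selecting the vital edge whose mincut has the least capacity so that property ${\mathcal P}_2$ holds; the capacity stored at the \textsc{lca} is then a mincut capacity for \emph{every} vital edge (tight or loose) separated there, and vitality is read off via the left/right-subtree test of Lemma \ref{lem : condition for irrelevant}. You instead reduce everything to the directed quantity $c^*_{u,v}$ through the identity $\text{vitality}(e)=\max(0,f^*+w(e)-c^*_{u,v})$, handle the at most $n-1$ candidate loose edges of $G_{\mathcal F}$ by an explicit table, and handle the remaining (tight or nonvital) edges with the Cheng--Hu Ancestor tree. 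Your orientation argument is exactly Lemma \ref{lem : tight and separation}: a tight edge $(u,v)$ satisfies $c^*_{u,v}<c^*_{v,u}$, so if the minimum separator places $u$ on the sink side the edge cannot be tight and, being non-loose, must be nonvital; and in the correctly oriented case Theorem \ref{thm : tight edge property} identifies $c^*_{u,v}$ with the mincut capacity for $e$, while for nonvital edges $c^*_{u,v}\ge f^*+w(e)$ makes the formula return $0$ automatically. What your route buys is that it sidesteps Theorem \ref{thm:3-partitions} and the property-${\mathcal P}_2$ selection rule entirely, reusing the tight/loose machinery already built for the algorithmic result; what the paper's route buys is a single uniform tree that extends directly to reporting the cuts themselves (Query \textsc{cut}) and to the labeling scheme.

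One implementation point deserves care: answering ``which side of the stored minimum separator does $u$ lie on'' in ${\mathcal O}(1)$ time within ${\mathcal O}(n)$ space requires that each child subtree of an Ancestor-tree node lie wholly on one side of that node's cut, so that a one-bit annotation per node (plus a child-of-\textsc{lca} query) suffices. The paper's ${\mathcal T}_{vit}(G)$ guarantees this by construction (Observation \ref{obs : left and right child}); you should state the analogous property for ${\mathcal T}_{(s,t)}$ rather than assume it. A second minor point is that a dictionary keyed by endpoint pairs conflates parallel edges, one of which may lie in $G_{\mathcal F}$ and another not; keying by edge identity fixes this. Neither issue affects the substance of the argument.
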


We show, using Theorem \ref{thm : computing all vital edges}, that the data structure in Theorem \ref{thm : data structure for vitality} can be built using ${\mathcal O}(n)$ maximum $(s,t)$-flow computations. As a byproduct, this computes a mincut cover for all vital edges. 


\paragraph*{D. Compact Structures for All Mincuts for All Vital Edges}
We present a pair of compact structures for storing and characterizing all mincuts for all vital edges -- one is a single DAG that provides a \textit{partial} characterization and the other consists of a set of ${\mathcal O}(n)$ DAGs that provides a complete characterization. These two results gives an affirmative answer to Question \ref{question 0}.
\begin{theorem}[Partial Characterization] \label{thm : dag and 1 transversal}
    For a directed weighted graph $G$ on $m$ edges with a designated source vertex $s$ and a designated sink vertex $t$, there is an ${\mathcal O}(m)$ space directed acyclic graph ${\mathcal D}_{vit}(G)$ that preserves the capacity of $(s,t)$-mincut, and for each vital edge $e$ in $G$, 
    \begin{enumerate}
        \item  Every mincut for edge $e$ in $G$ is a $1$-transversal cut in ${\mathcal D}_{vit}(G)$ and
        \item  A mincut $C$ for $e$ in $G$ appears as a  \textit{relevant cut} in ${\mathcal D}_{vit}(G)$, that is, $c(C)-w(e)<f^*$ in ${\mathcal D}_{vit}(G)$.  
    \end{enumerate}
\end{theorem}
    Existing DAG structures for $(s,t)$-mincuts \cite{DBLP:journals/mp/PicardQ80, baswana2023minimum+} turn out to be just a special case of DAG ${\mathcal D}_{vit}(G)$ in Theorem \ref{thm : dag and 1 transversal}. Moreover, ${\mathcal D}_{vit}(G)$ additionally guarantees the property (2) in Theorem \ref{thm : dag and 1 transversal}, which does not hold in the existing DAGs \cite{DBLP:journals/mp/PicardQ80, baswana2023minimum+}.
\begin{theorem}[Complete Characterization] \label{thm : vital complete characterization}
    For any directed weighted graph $G$ on $n$ vertices and $m$ edges with a designated source vertex $s$ and a designated sink vertex $t$, there is an ${\mathcal O}(mn)$ space structure ${\mathcal S}_{vit}(G)$ consisting of ${\mathcal O}(n)$ DAGs such that for any vital edge $e$,
    \begin{enumerate}
        \item There exists at most two DAGs, ${\mathcal D}^s(e)$ and ${\mathcal D}^t(e)$, that store all mincuts for edge $e$.
        \item An $(s,t)$-cut $C$ in $G$ is a mincut for edge $e$ if and only if $C$ is a $1$-transversal cut in either ${\mathcal D}^s(e)$ or ${\mathcal D}^t(e)$  and $e$ is a contributing edge of $C$.
    \end{enumerate}
\end{theorem}
 An important application of the structure in Theorem \ref{thm : vital complete characterization} is the construction of sensitivity oracle for $(s,t)$-mincut stated in Theorem \ref{thm : main result}(3). In addition, it provides a complete characterization for all $(s,t)$-cuts of the least capacity separating pair of vertices $u,v$, for every pair  $(u,v)\in V\times V$. 
 
 All the results of this manuscript are compactly represented in Table \ref{tab : Algorithmic results}, Table \ref{tab : Sensitivity Oracle}, and Table \ref{tab : structural results}.

\begin{table}[H]
    \centering
    \small
    \begin{tabular}{|c|c|c|c|}
        \hline
        \textbf{Algorithms for} & Unweighted & Weighted & {\color{blue}\textbf{Weighted}}\\
        \textbf{Vital Edges} & Directed \cite{DBLP:journals/mp/PicardQ80} & Undirected \cite{DBLP:journals/anor/ChengH91, DBLP:journals/networks/AusielloFLR19, DBLP:journals/networks/AnejaCN01} & {\color{blue}\textbf{Directed (New)}} \\
        \hline 
        & & &\\
        \textsc{All Vital Edges} & ${\mathcal O}(T_{\textsc{mf}})$ & ${\mathcal O}(nT_{\textsc{mf}})$ & {\color{blue}$\mathbf{{\mathcal O}(nT_{\textsc{mf}})}$}\\
        \hline
        & & &\\
        \textsc{Most Vital edge} & ${\mathcal O}(T_{\textsc{mf}})$ & ${\mathcal O}(nT_{\textsc{mf}})$ & {\color{blue}$\mathbf{{\mathcal O}(nT_{\textsc{mf}})}$}\\
        \hline
    \end{tabular}    
    \caption{ This table represents a comparison between existing and new algorithms for computing vital edges. Here, $T_{\textsc{mf}}$ denotes the time taken for computing a maximum $(s,t)$-flow.}
    \label{tab : Algorithmic results}
\end{table}

\begin{table}[H]
    \centering
    \small
    \begin{tabular}{|c|c|c|c|c|c|}
        \hline
         \textbf{Sensitivity Oracles} & Reporting & Reporting & Lower & Reporting & PreProcessing \\
         \textbf{for} & Capacity & Cut & Bound & ${\mathcal D}_{PQ}(G_{e,\Delta})$ & Time \\
         \textbf{$(s,t)$-mincuts} & (Space, Time) & (Space, Time) & (Space) & (Space, Time) & \\
         \hline
         Unweighted &   &   &  & & \\ 
         Directed \cite{DBLP:journals/mp/PicardQ80, baswana2023minimum+} & ${\mathcal O}(n)$, ${\mathcal O}(1)$ & ${\mathcal O}(n)$, ${\mathcal O}(n)$ & ${{\Omega}(m)}$ & ${\mathcal O}(m)$, ${\mathcal O}(m)$ & ${\mathcal O(T_{\textsc{mf}})}$  \\
         \hline
        Weighted &  &  &  Not & Not & \\
         Undirected \cite{DBLP:journals/anor/ChengH91} & ${\mathcal O}(m)$, ${\mathcal O}(1)$ &  ${\mathcal O}(n^2)$, ${\mathcal O}(n)$ & Addressed & Addressed &${\mathcal O(nT_{\textsc{mf}})}$ \\
         \hline
         {\color{blue}\textbf{Weighted}} &   &   &  & & \\ 
        \textbf{ {\color{blue}Directed (New)}} & {\color{blue}$\mathbf{{\mathcal O}(n)}$}, {\color{blue}$\mathbf{{\mathcal O}(1)}$ }&{\color{blue} $\mathbf{{\mathcal O}(n^2)}$},{\color{blue} $\mathbf{{\mathcal O}(n)}$} & {\color{blue}  ${\mathbf{{\Omega}(n^2)}}$} &{\color{blue} $\mathbf{{\mathcal O}(n^2),~{\mathcal O}(m)}$ }& ${\color{blue}\mathbf{{\mathcal O(nT_{\textsc{mf}})}}}$\\ 
         \hline
    \end{tabular}    
    \caption{ ${\mathcal O}(n)$ space data structures for reporting capacity assume that query edge $e$ is present in $G$ and, for weighted graphs, change in capacity $\Delta$, satisfying $0\le \Delta \le w(e)$, is provided with the query. Here $G_{e,\Delta}$ denote the graph obtained from $G$ after increasing/decreasing the capacity of edge $e$ by $\Delta$. $T_{\textsc{mf}}$ denotes the time taken for computing a maximum $(s,t)$-flow. The preprocessing time is for the data structures for reporting capacity (Theorem \ref{thm : reporting value}) and reporting cut (Theorem \ref{thm : main result}).}

    \label{tab : Sensitivity Oracle}
\end{table}

\begin{table}[H]
    \centering
    \small
    \begin{tabular}{|c|c|c|c|c|}
        \hline
         \textbf{Structural \&  Com-} & Structure & Space & Characterization & Mincut \\
         \textbf{binatorial Results} &  &  &  & Cover \\
         \hline
          Unweighted   &  &  &  &  \\
         Directed \cite{DBLP:journals/mp/PicardQ80, baswana2023minimum+} & Single DAG &  ${\mathcal O}(m)$ & Complete: $1$-transversal & $ n-1$ \\
        
         \hline
        Weighted &  Not &  Not &  Not & \\
         Undirected \cite{DBLP:journals/anor/ChengH91, DBLP:journals/networks/AusielloFLR19} & Addressed & Addressed & Addressed & $n-1$ \\
         \hline
        {\color{blue}\textbf{Weighted}} & {\color{blue}\textbf{Single DAG}} & {\color{blue}$\mathbf{{\mathcal O}(m)}$} & {\color{blue}\textbf{Partial: $1$-transversal}} &   \\
        {\color{blue}\textbf{ Directed (New)}} & {\color{blue}$\mathbf{{\mathcal O}(n)}$ \textbf{DAGs}} & {\color{blue}$\mathbf{{\mathcal O}(mn)}$ } & {\color{blue}\textbf{Complete: $1$-transversal}} & {\color{blue}$\mathbf{n-1}$}  \\
         \hline
    \end{tabular}    
    \caption{ This table represents a comparison between existing and new structural and combinatorial results for mincuts for vital edges.}
    \label{tab : structural results}
\end{table}

\subsection{Related works} \label{sec : related works}
We present, related to our results, the state-of-the-art algorithmic, structural, and combinatorial results on various minimum cuts of a graph as well as existing sensitivity oracles for them.

\paragraph*{Mincut Cover:} In undirected weighted graphs, \cite{GH61} showed that there is a set containing at most $n-1$ cuts such that at least one cut of the least capacity separating every pair of vertices is present in the set. 
The result of \cite{GH61} has been extended with the matching bounds for undirected graphs with vertex capacities \cite{DBLP:journals/dam/GranotH86, gusfield1991efficient}. There also exist similar results for vertex cuts in undirected graphs with both vertex and edge capacities \cite{hassin2007flow, DBLP:journals/dam/GranotH86, gusfield1991efficient}. 


\paragraph*{Algorithmic Results for Vital Edges:} 
There exist algorithms related to vital edges for various graph problems, such as computing the most vital edge in matching \cite{hung1993most}, and in spanning trees \cite{tsen1994finding, DBLP:conf/soda/FredericksonS96, DBLP:journals/ipl/LinC93}, computing replacement paths in directed graphs between designated source vertex $s$ and designated sink vertex $t$ for every possible edge \cite{DBLP:journals/talg/0001W20} and every possible subset of edges of cardinality two or more \cite{williams2022algorithms}, computing total number of strongly connected components in $G\setminus \{e\}$ for every possible edge $e$ \cite{georgiadis2020strong}.  

\paragraph*{Cardinality of Cuts:} In undirected graphs, for any fixed $k$, the number of $k^{th}$ minimum cuts is  ${\mathcal O}(n^{3k-1})$ \cite{vazirani1992suboptimal}. However, in directed graphs, it can be exponential. In directed weighted graphs, there can be $\Omega(n^2)$ different capacities of all-pairs mincuts \cite{frank1971communication} and minimum vertex cuts.  In undirected weighted graphs, Hassin \cite{DBLP:journals/mor/Hassin88} showed that, for any $k\in [n]$, there can be $\binom{n-1}{k-1}$ different capacities of minimum multi-terminal $k$-cuts. 

\paragraph*{Compact Structures for (Minimum) Cuts:} There exist elegant structures for storing and characterizing various minimum cuts of a graph, such as, cactus graph for all global mincuts \cite{dinitz1976structure}, \textit{connectivity carcass} for all Steiner mincuts \cite{dinitz2000general}. There also exist compact structures for storing and characterizing near minimum cuts, namely, 2-level cactus for all minimum+1 global cuts \cite{dinitz19952}, 2-level dag for minimum+1 $(s,t)$-cuts \cite{baswana2023minimum+}, cuts of capacity within $\frac{6}{5}\lambda_g$ \cite{DBLP:conf/focs/Benczur95}, within $\frac{3}{2}\lambda_g$ \cite{DBLP:journals/jacm/Karger00}, within $\frac{3}{2}\lambda_g$ in multigraphs \cite{DBLP:journals/jacm/KawarabayashiT19}, where $\lambda_g$ is the capacity of global mincut.

\paragraph*{Sensitivity Oracles:} For directed multi-graphs, there is an ${\mathcal O}(n^2)$ space dual edge Sensitivity Oracle for $(s,t)$-mincut \cite{baswana2023minimum+} that, upon failure/insertion of any pair of edges, can report the capacity of $(s,t)$-mincut in ${\mathcal O}(1)$ time and a resulting $(s,t)$-mincut in ${\mathcal O}(n)$ time. It is also established that the space occupied by the oracle is optimal if Reachability Hypothesis \cite{goldstein2017conditional} holds. For unweighted graphs, there also exists a sensitivity data structure that can handle a single edge insertion for single source mincuts \cite{DBLP:journals/algorithmica/BaswanaGK22}.

\subsection{Organization of the manuscript}
We present basic notations and terminologies in the following section. Thereafter, we present a generalization of \textit{maxflow-mincut} Theorem by Ford and Fulkerson \cite{ford_fulkerson_1956} in Section \ref{sec : flowcut}.
Mincut cover for all vital edges and sensitivity oracles for $(s,t)$-mincut
are presented in Section \ref{sec: mincut-cover} and \ref{sec: sensitivity-oracles} respectively.
The algorithm for computing all vital edges is presented in Section \ref{sec: algorithm}. Compact structures for storing and characterizing all mincuts for all vital edges are described in Section \ref{sec: compact-structures}. An application of compact structures in sensitivity oracle is given in Section \ref{sec : application sensitivity oracle}. Finally, We present various lower bounds in Section \ref{sec : lower bounds}.

\section{Preliminaries} \label{sec : basic preliminaries}
For any directed weighted graph $H$ with a designated source vertex $s$ and a designated sink vertex $t$, we define the following notations to be used throughout the article.
\begin{itemize}
\item $H\setminus\{e\}$: Graph after the failure of an edge $e$ in $H$.
\item value($f,H$): value of an $(s,t)$-flow $f$ in graph $H$. 
\item $f(e,H)$: value of $(s,t)$-flow $f$ along edge $e$ in $H$.  
\item $f_{in}(C,H)$: For any $(s,t)$-flow $f$ and an $(s,t)$-cut $C$ in $H$, $f_{in}(C,H)$ is the sum of the flow through all edges that leave $\overline{C}$ and enter $C$.
\item $f_{out}(C,H)$: For any $(s,t)$-flow $f$ and an $(s,t)$-cut $C$ in $H$, $f_{out}(C,H)$ is the sum of the flow through all edges that leave $C$ and enter $\overline{C}$.
\end{itemize}
Without causing any ambiguity, we use $f_{in}(C)$ and $f_{out}(C)$ to denote $f_{in}(C,G)$ and $f_{out}(C,G)$ respectively. 

The following lemma can be proved easily using the conservation of an $(s,t)$-flow.  
\begin{lemma} \label{lem : conservation of flow}
    For any $(s,t)$-cut $C$ in $H$,
    $f_{out}(C,H)-f_{in}(C,H)=\text{value}(f,H)$.
\end{lemma}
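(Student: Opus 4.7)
The plan is to derive the identity by summing the flow conservation constraint over every vertex in the cut side $C$, and watching edges cancel based on which side their endpoints lie on.

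First I would recall the two defining properties of an $(s,t)$-flow $f$: at the source $s$, the net out-flow equals $\text{value}(f,H)$, and at every other non-terminal vertex $v$, the net out-flow is zero. Since $t \in \overline{C}$, every vertex $v \in C$ is either $s$ or a non-terminal vertex, so summing the net out-flow over $v \in C$ gives exactly $\text{value}(f,H)$:
\[
\sum_{v \in C} \Bigl( \sum_{e \text{ leaves } v} f(e) - \sum_{e \text{ enters } v} f(e) \Bigr) = \text{value}(f,H).
\]

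Next I would rewrite the left-hand side by grouping terms edge by edge, classifying each edge $e=(x,y)$ of $H$ according to which side of the cut its endpoints lie on. If both $x,y \in C$, then $f(e)$ appears once with a $+$ sign (from the sum at $x$) and once with a $-$ sign (from the sum at $y$), so it cancels. If $x,y \in \overline{C}$, then $f(e)$ does not appear at all. If $x \in C$ and $y \in \overline{C}$, then $f(e)$ appears with a $+$ sign only, contributing to $f_{out}(C,H)$. If $x \in \overline{C}$ and $y \in C$, then $f(e)$ appears with a $-$ sign only, contributing to $-f_{in}(C,H)$. Hence the left-hand side equals $f_{out}(C,H) - f_{in}(C,H)$, and combining with the previous display completes the proof.

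There is essentially no obstacle here; the proof is a standard double-counting argument and should take only a few lines. The only thing to be careful about is making sure the flow conservation is applied correctly at $s$ (where it contributes $\text{value}(f,H)$) and using the hypothesis $t \in \overline{C}$ so that $t$ never enters the sum and no anomalous term appears.
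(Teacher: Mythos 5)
Your proof is correct and is exactly the standard argument the paper alludes to: the paper offers no explicit proof, remarking only that the lemma ``can be proved easily using the conservation of an $(s,t)$-flow,'' and your summation-over-$C$ with edge-by-edge cancellation is the canonical way to carry that out. No gaps; the care you take at $s$ and with $t\in\overline{C}$ is appropriate.
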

For a set $U\subseteq V$, $E_{vit}(U)$ denotes the set of all vital edges whose both endpoints belong to $U$. 
We now introduce a notation that quantitatively captures the {\em vitality} of an edge. 
\begin{itemize}
    \item $w_{min}(e)$: the reduction in the capacity of $(s,t)$-mincut in $G$ after the failure of edge $e$.
\end{itemize}
The following observation is immediate from Definition \ref{def:relevant-edges}.
\begin{observation} \label{obs: min-capacity-of-an-edge}
For any vital edge $e$, $w_{min}(e)>0$. 
\end{observation}
\begin{definition}[A mincut cover for a set of edges]
    A set ${\mathcal A}$ consisting of $(s,t)$-cuts is said to be a mincut cover for a set of edges $E'$ if, for each edge $e\in E'$, at least one mincut for $e$ is present in ${\mathcal A}$, and 
    $|{\mathcal A}|$ is the smallest. 
\end{definition}
The following lemma provides a maximum $(s,t)$-flow based characterization for a vital edge. 
\begin{lemma} \label{lem : vital edge in every maximum flow}
    An edge $e$ is vital if and only if $f(e)>0$ in every maximum $(s,t)$-flow $f$ in $G$. 
\end{lemma}
Lemma \ref{lem : vital edge in every maximum flow} can be proved easily using the strong duality between maximum $(s,t)$-flow and $(s,t)$-mincut \cite{ford_fulkerson_1956}.
\begin{definition}[Edge-set of a cut]
    A cut $C$ is said to {\em separate} a pair of vertices $u,v\in V$ if either $u\in C$ and $v\in \overline{C}$ or $v\in C$ and $u\in \overline{C}$. The edge-set of $C$ is the set of all those edges whose endpoints are separated by $C$. 
    \label{def: edge-set}
\end{definition}
For an undirected graph, the set of contributing edges of a cut is the edge-set of the cut. In a seminal work, Gomory and Hu \cite{GH61} established the following two results for an undirected graph. (1)~There is a set of $n-1$ cuts such that the cut of the least capacity separating each pair of vertices is present in this set. (2)~Each of these $n-1$ cuts appears as a cut in a spanning tree on the vertex set $V$, which can be computed using ${\mathcal O}(n)$ maximum flow computations. This tree is widely-known as Gomory-Hu tree. 

The construction of Gomory-Hu tree crucially exploits that the capacity of each cut in graph is defined as the sum of the capacities of all contributing edges of the cut. 
Suppose capacity of each cut is defined by any arbitrary real-valued function $F$. Even for this generic setting, Cheng and Hu \cite{DBLP:journals/anor/ChengH91} showed that there exists a set consisting of $n-1$ cuts such that a cut of the least capacity ($F$-value) separating any pair of vertices is present in the set. Furthermore,
Cheng and Hu \cite{DBLP:journals/anor/ChengH91} showed that all these cuts can be stored in a suitably augmented rooted full binary tree, called {\em ancestor tree} as follows. Every vertex of the given graph is mapped to a unique leaf node in the ancestor tree, and the cut of the least capacity separating any pair of vertices is stored at their lowest common ancestor ({\textsc{lca}}) in the tree. The ancestor tree occupies ${\mathcal O}(n^2)$ space and, Cheng and Hu \cite{DBLP:journals/anor/ChengH91} also designed an efficient algorithm to construct the tree --
It performs only ${\mathcal O}(n)$ calls to an algorithm that, given any pair of vertices, can report a cut of the least capacity separating them.

Ausiello, Franciosa, Lari, and Ribichini \cite{DBLP:journals/networks/AusielloFLR19} showed that 
the ancestor tree of Cheng and Hu \cite{DBLP:journals/anor/ChengH91} can be constructed for $(s,t)$-cuts as well if function $F$ is defined as follows. 
\begin{equation} \label{eq : s,t cuts}
\text{For a set $C\subset V$,~}F(C)=\begin{cases}
            c(C), \text{ if $s\in C$ and $t\in \overline{C}$} \\
            \infty, \quad \text{otherwise.}
     \end{cases}
\end{equation}
This insight played a crucial role in the computation of all vital edges in an undirected graph using ${\cal O}(n)$ maximum $(s,t)$-flow computations \cite{DBLP:journals/networks/AusielloFLR19}.

\section{A Generalization of \textsc{FlowCut} Property} \label{sec : flowcut}
Let $E_{min}\subseteq E$ be the set of all edges contributing to any $(s,t)$-mincut. Ford and Fulkerson \cite{ford_fulkerson_1956} established a strong duality between $(s,t)$-mincut and maximum $(s,t)$-flow. 
Exploiting this, the following property 
provides a maximum $(s,t)$-flow based characterization for mincut for any edge $e\in E_{min}$.
\newline

\noindent
\textsc{FlowCut}:~{\em Let $C$ be an $(s,t)$-cut and $e\in E_{min}$ is a contributing edge of $C$. $C$ is a mincut for edge $e$ if and only if for any maximum $(s,t)$-flow, each contributing edge of $C$ is fully saturated and each incoming edge of $C$ carries no flow.}
\newline


%
\noindent
For directed weighted graphs, we know that $E_{min}$ 
maybe only a proper subset of the set of all vital edges.
We now present a generalization of $\textsc{FlowCut}$ property that provides a maximum $(s,t)$-flow based characterization for each mincut $C(e)$, $\forall e \in E_{vit}$. 
\begin{theorem}[\textsc{GenFlowCut}] \label{thm : a special assignment of flow}
    Let $C$ be an $(s,t)$-cut and a vital edge $e=(u,v)$ contributes to $C$ in $G$. $C$ is a mincut for $e$ if and only if there is a maximum $(s,t)$-flow $f$ such that 
      \begin{enumerate}[nosep]
        \item $f_{in}(C)=0$.
        \item $e$ carries exactly $w_{min}(e)$ amount of $(s,t)$-flow and every other contributing edge $e'$ in $C$ is fully saturated, that is, $f(e')=w(e')$.
    \end{enumerate}  
\end{theorem}
\begin{proof}
    Suppose $C$ is a mincut for vital edge $e$ in $G$. After the removal of edge $e$, the capacity of $(s,t)$-cut $C$ is related to $f^*$ as follows.
\begin{equation}\label{eq : relevant edge use}
    \text{In graph } G\setminus\{e\},~ c(C)=f^*-w_{min}(e)    
\end{equation}
 Let $G'$ be the graph obtained from $G$ after reducing the capacity of edge $e$ from $w(e)$ to $w_{min}(e)$. 
 Therefore, using Equation \ref{eq : relevant edge use}, the capacity of  $C$ in $G'$ is $(f^*-w_{min}(e))+w_{min}(e)=f^*$. 
 Let ${\cal C}_{u,v}$ be the set of all $(s,t)$-cuts that keep $u$ on the side of $s$ and $v$ on the side of $t$. The capacity of each cut in ${\cal C}_{u,v}$ gets reduced by the same amount due to reduction in the capacity of $e$. Therefore, $C$ is a mincut for edge $e$ in $G'$ as well. 
 Hence the capacity of every cut belonging to ${\cal C}_{u,v}$ in $G'$ is at least $f^*$. 
 Capacity of any ($s,t$)-cut in $G'$, that does not belong to ${\cal C}_{u,v}$, is at least $f^*$ since it remains unaffected by the reduction in the capacity of edge $e$. 
 These facts imply that $f^*$ is the capacity of $(s,t)$-mincut in $G'$. Thus, using the strong duality between maximum $(s,t)$-flow and $(s,t)$-mincut, it follows that there exists a maximum $(s,t)$-flow $f$ in $G'$ such that value$(f,G')=f^*$. Using Lemma \ref{lem : conservation of flow} for $C$ in $G'$, we get the following equality.
 \begin{equation}\label{eq : conservation}
         f_{out}(C,G')-f_{in}(C,G')=f^*
\end{equation}
It follows from the capacity constraint that $f_{out}(C, G')\le c(C)$. So, Equation \ref{eq : conservation} implies that $f_{in}(C,G')\le c(C)-f^*$. Since $c(C)=f^*$ in $G'$ as shown above, we arrive at the following.
    \begin{equation}\label{eq : outflow f* and inflow 0}
        f_{in}(C,G')=0 ~~~~~\text{ and }~~~~~ f_{out}(C,G')=f^*
    \end{equation}
    This implies that, in $G'$, each outgoing edge of $C$ is fully saturated and each incoming edge does not carry any amount of flow. $f$ is also a valid maximum $(s,t)$-flow for graph $G$ because $f^*$ is the value of maximum $(s,t)$-flow in $G$ and $f((u,v))\le w((u,v))$ in $G$. 
    Therefore, it follows from Equation \ref{eq : outflow f* and inflow 0} that in graph $G$, $f_{in}(C)=0$, each outgoing edge of $C$ is fully saturated, and edge $e$ carries exactly $w_{min}(e)$ amount of $(s,t)$-flow.

     We now prove the converse part. Suppose there is a maximum $(s,t)$-flow in $G$ such that each outgoing edge of $C$, except $e$, is fully saturated and each incoming edge carries no flow. Therefore, $c(C)$ in $G$ is $f^*-w_{min}(e)+w(e)$. So, in graph $G\setminus \{e\}$, capacity of $C$ is $f^*-w_{min}(e)$. Since $e$ is a vital edge, it follows from definition that $f^*-w_{min}(e)$ is the capacity of $(s,t)$-mincut in $G\setminus \{e\}$. Therefore, $C$ is an $(s,t)$-mincut in $G\setminus \{e\}$. Hence, each $(s,t)$-cut in $G$ that keeps $u$ on the side of $s$ and $v$ on the side of $t$ has a capacity at least $f^*-w_{min}(e)+w(e)$. This implies that $C$ is a mincut for edge $e$.
\end{proof}

The following lemma can be seen as an immediate corollary of Theorem \ref{thm : a special assignment of flow}($2$).

\begin{lemma}
\label{lem:vital-edge-remains-vital}
Let $e$ be a vital edge and let $C(e)$ be a mincut for $e$. If the capacity of $(s,t)$-mincut remains the same after removing or reducing the capacities of any number of edges from the edge-set of $C(e)$, then $e$ remains a vital edge and $C(e)$ continues to be a mincut for edge $e$.
\end{lemma}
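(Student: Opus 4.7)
My plan is to prove the two conclusions separately, relying on the simple inequality $c(C(e), \tilde{G}) \leq c(C(e), G) = f^* + w(e) - w_{min}(e)$ -- which holds because the modifications only reduce capacities of edges in the edge-set of $C(e)$, and therefore can only decrease the contributing part of $c(C(e))$ -- together with the submodularity of the directed cut function. Throughout, $\tilde{G}$ denotes the modified graph and $\tilde{w}$ its capacities.

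First, I will show that $e$ is still vital in $\tilde{G}$. The hypothesis that the $(s,t)$-mincut of $\tilde{G}$ is still $f^*$ forces $c(C(e), \tilde{G}) \geq f^*$; combined with the upper bound above, this also forces $\tilde{w}(e) \geq w_{min}(e) > 0$, so $e$ cannot have been removed. A direct calculation then yields
\[ c(C(e), \tilde{G}\setminus\{e\}) = c(C(e), \tilde{G}) - \tilde{w}(e) \leq f^* - w_{min}(e) < f^*, \]
so removing $e$ strictly drops the $(s,t)$-mincut capacity of $\tilde{G}$, proving $e$ is vital in $\tilde{G}$.

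Next, I will show $C(e)$ is still a mincut for $e$. Fix an arbitrary $(s,t)$-cut $C'$ in $\tilde{G}$ with $e$ contributing, and decompose $X = C(e)\cap C'$, $Y = C(e)\setminus C'$, $Z = C'\setminus C(e)$. Since $e=(u,v)$ contributes to both $C(e)$ and $C'$, we have $u\in X$ and $v\notin C(e)\cup C'$, so both $X$ and $X\cup Y\cup Z$ are $(s,t)$-cuts in which $e$ also contributes. The standard submodularity identity for directed cuts gives
\[ c(C(e), \tilde{G}) + c(C', \tilde{G}) = c(X, \tilde{G}) + c(X\cup Y\cup Z, \tilde{G}) + \tilde{w}(E(Y,Z)) + \tilde{w}(E(Z,Y)), \]
where $E(A,B)$ denotes the edges with tails in $A$ and heads in $B$. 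It therefore suffices to prove $c(X, \tilde{G}), c(X\cup Y\cup Z, \tilde{G}) \geq c(C(e), \tilde{G})$.

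This last step is the main obstacle, since the crude bound $c(X, \tilde{G}) \geq f^*$ is too weak whenever $c(C(e), \tilde{G}) > f^*$. I plan to resolve it by a combinatorial comparison of the capacity reductions: any edge outgoing from $X$ that lies in the edge-set of $C(e)$ must have its tail in $X\subseteq C(e)$ and, being on the boundary of $C(e)$, its head in $\overline{C(e)}$, hence it is itself a contributing edge of $C(e)$. Thus the total reduction on outgoing edges of $X$ is at most the total reduction on contributing edges of $C(e)$, and the same reasoning applies to $X\cup Y\cup Z$. Combining these bounds with the in-$G$ inequalities $c(X, G) \geq c(C(e), G)$ and $c(X\cup Y\cup Z, G) \geq c(C(e), G)$ -- which hold because $C(e)$ is a mincut for $e$ in $G$ and both $X$ and $X\cup Y\cup Z$ are cuts in which $e$ contributes -- yields $c(X, \tilde{G}), c(X\cup Y\cup Z, \tilde{G}) \geq c(C(e), \tilde{G})$. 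Plugging back into the submodularity identity and using $\tilde{w}(E(Y,Z)), \tilde{w}(E(Z,Y)) \geq 0$ delivers $c(C', \tilde{G}) \geq c(C(e), \tilde{G})$ for every competing $C'$, completing the proof.
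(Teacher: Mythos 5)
Your proof is correct, and it takes a genuinely different route from the paper. The paper disposes of this lemma in one line, declaring it an immediate corollary of the \textsc{GenFlowCut} characterization (Theorem \ref{thm : a special assignment of flow}(2)), i.e., it leans on the existence of a maximum flow that saturates the contributing edges of $C(e)$ and sends nothing across the incoming edges. You instead give a self-contained, purely cut-based argument: you uncross $C(e)$ against an arbitrary competitor $C'$ via the exact directed submodularity identity and compare capacity reductions combinatorially. Your key observation --- that every \emph{reduced} edge which is outgoing from $C(e)\cap C'$ or from $C(e)\cup C'$ must in fact be a contributing edge of $C(e)$, while the problematic reduced \emph{incoming} edges of $C(e)$ that contribute to $C'$ land precisely in the discarded nonnegative cross term $E(C'\setminus C(e),\, C(e)\setminus C')$ --- is exactly what is needed, because the naive direct comparison of $c(C',\tilde G)$ with $c(C(e),\tilde G)$ fails on those incoming edges. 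In that sense your argument supplies substance that the paper leaves implicit; the trade-off is that it needs the full submodularity \emph{identity} (with the cross terms), which is standard but stronger than the inequality the paper records in Lemma \ref{lem : Submodularity of cuts}, whereas the paper's flow-based viewpoint reuses machinery it has already built. Two small presentational points: your displayed bound $c(C(e),\tilde G)-\tilde w(e)\le f^*-w_{min}(e)$ should be justified via the refined estimate $\sum_{e'\ne e}\tilde w(e')\le \sum_{e'\ne e} w(e') = c(C(e),G)-w(e)$ rather than the crude bound $c(C(e),\tilde G)\le c(C(e),G)$ (which loses the term $w(e)-\tilde w(e)$), and the ``same reasoning'' for $C(e)\cup C'$ needs the one-line extra remark that an outgoing edge of $C(e)\cup C'$ cannot be incoming to $C(e)$ because its head lies outside $C(e)\cup C'\supseteq C(e)$. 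Both are easily repaired and do not affect correctness.
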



\begin{remark}
Theorem \ref{thm : a special assignment of flow} crucially exploits Equation \ref{eq : relevant edge use} which holds for vital edges only. Note that $w_{min}(e)=0$ for a nonvital edge $e$. So, if $C$ is a mincut for $e$, $C$ might have capacity 
$>f^*$ even after removing nonvital edge $e$ from $G$. 
\end{remark}

\section{A Mincut Cover for All Vital Edges} \label{sec: mincut-cover}
For any subset ${\mathcal E}$ of vital edges, let $V({\mathcal E})$ denote the smallest set of vertices such that for each edge $(u,v)\in {\mathcal E}$, both $u$ and $v$ belong to $V({\mathcal E})$. We establish an upper bound on the cardinality of the mincut cover of ${\cal E}$ in terms of $|V({\mathcal E})|$.

Let $e$ be a vital edge, and let $C$ be a mincut for $e$. 
In undirected graphs, recall that each edge belonging to the edge-set of $C$ is a contributing edge of $C$. However, in directed graphs, there may exist edges incoming to $C$. Any such incoming edge, say $e'$, is not a contributing edge of $C$, and certainly, any mincut for edge $e'$ is different from $C$. Can $e'$ be a vital edge? 
The following lemma answers this question in negation. It crucially exploits the properties of maximum $(s,t)$-flow across $C$ as stated in \textsc{GenFlowCut} Property (Theorem \ref{thm : a special assignment of flow}). 
\begin{lemma}
If $C$ is a mincut for a vital edge, each incoming edge of $C$ must be a nonvital edge.
\label{lem:incoming-edge-irrelevant}
\end{lemma}
\begin{proof}
    Let $e'$ be an incoming edge of $C$. It follows from Theorem \ref{thm : a special assignment of flow}(1) that there is a maximum $(s,t)$-flow $f$ such that $f_{in}(C)=0$. So $e'$ does not carry any flow in maximum $(s,t)$-flow $f$. Hence, it follows from Lemma \ref{lem : vital edge in every maximum flow} that $e'$ is not a vital edge in $G$. 
\end{proof}
The following lemma, which can be seen as a corollary of Lemma \ref{lem:incoming-edge-irrelevant}, establishes that a mincut for a vital edge
partitions all the vital edges into three sets.

\begin{lemma}
  Let $G=(V,E)$ be a directed weighted graph with a designated source vertex $s$ and a designated sink vertex $t$. Let ${\cal E}$ be a subset of vital edges. For any edge $e\in {\mathcal E}$, let $C$ be a mincut for $e$. $C$ partitions the entire set ${\cal E}$ into three subsets -- $(i)$ ${\mathcal E}_{C}$: edges that are contributing to $C$, $(ii)$ ${\mathcal E}_{L}$: edges whose both endpoints belong to $C$, and $(iii)$ ${\mathcal E}_{R}$: edges whose both endpoints belong to $\overline{C}$. 
\label{lem:3-partitions}
\end{lemma}
Let ${\mathcal M}({\mathcal E})$ denote the mincut cover for a subset ${\mathcal E}$ of vital edges.
Suppose Lemma  \ref{lem:3-partitions} additionally guarantees the following property.

{\em $\mathbf{{\mathcal P}}:$~For each vital edge $e'\in {\mathcal E}$ that contributes to $C$, $C$ is a mincut for $e'$ as well. }

 Property ${\mathcal P}$ ensures that $C$ is a mincut for all vital edges that belong to ${\mathcal E}_{C}$. This implies that ${\mathcal M}({\mathcal E}_L)\cup {\mathcal M}({\mathcal E}_R) \cup \{C\}$ is a mincut cover for ${\mathcal E}$ as well. Therefore, the cardinality of mincut cover for set ${\mathcal E}$ can be bounded as follows.
\begin{equation} \label{eq : equation size}
    |{\mathcal M}({\mathcal E})| \le |{\mathcal M}({\mathcal E}_L)|+|{\mathcal M}({\mathcal E}_R)|+1
\end{equation}
Let ${\mathcal N}(\mu)$ denote the cardinality of a mincut cover for any set ${\mathcal E}$ of vital edges with ${\mu=|{V({\mathcal E})}|}$. Note that ${\mathcal E}=\emptyset$ implies $\mu=0$, and ${\mathcal E}\not= \emptyset$ implies $\mu\ge 2$. 
Equation \ref{eq : equation size}, Lemma \ref{lem:3-partitions}, and Property ${\cal P}$ 
lead to the following recurrence for ${\mathcal N}(\mu)$. \\

Base case:~~~~~~~~~
${\mathcal N}(0)=0$.
\begin{equation} \text{For any}~\mu\ge 2,~~~~~~ {\mathcal N}(\mu) \le  1 + {\mathcal N}(\mu_1) + {\mathcal N}(\mu_2), ~~~\mbox{where}~ \mu_1=|V({\mathcal E}_L)|,~ \mu_2=|V({\mathcal E}_R)|
\label{eq:recurrence-for-mincutcover-size}
\end{equation}

Note that $\mu_1,\mu_2<\mu$, and $\mu_1+\mu_2\le \mu$. Using induction on $\mu$, it is a simple exercise to show that ${\mathcal N}(\mu)\le \mu-1$ for all $\mu\ge 2$.

Though property ${\mathcal P}$ is crucially used in establishing an upper bound on the mincut cover of a set of vital edges, Lemma \ref{lem:3-partitions}, in its current form, does not guarantee it. However, we can enforce ${\mathcal P}$ easily as follows. 
%
%
In Lemma \ref{lem:3-partitions}, the mincut for edge $e$ is of the least capacity among mincuts for all edges in ${\mathcal E}$. 

 Using Recurrence \ref{eq:recurrence-for-mincutcover-size} for the entire set of vital edges, we get an upper bound of $n-1$ on the cardinality of the mincut cover for all vital edges in $G$. This leads to Theorem \ref{thm : n-1 cuts}. 
The following note emphasizes the need of property ${\mathcal P}$ in establishing the upper bound derived above.
\begin{figure}[ht]
  \begin{center}
    \includegraphics[width=0.2\textwidth]{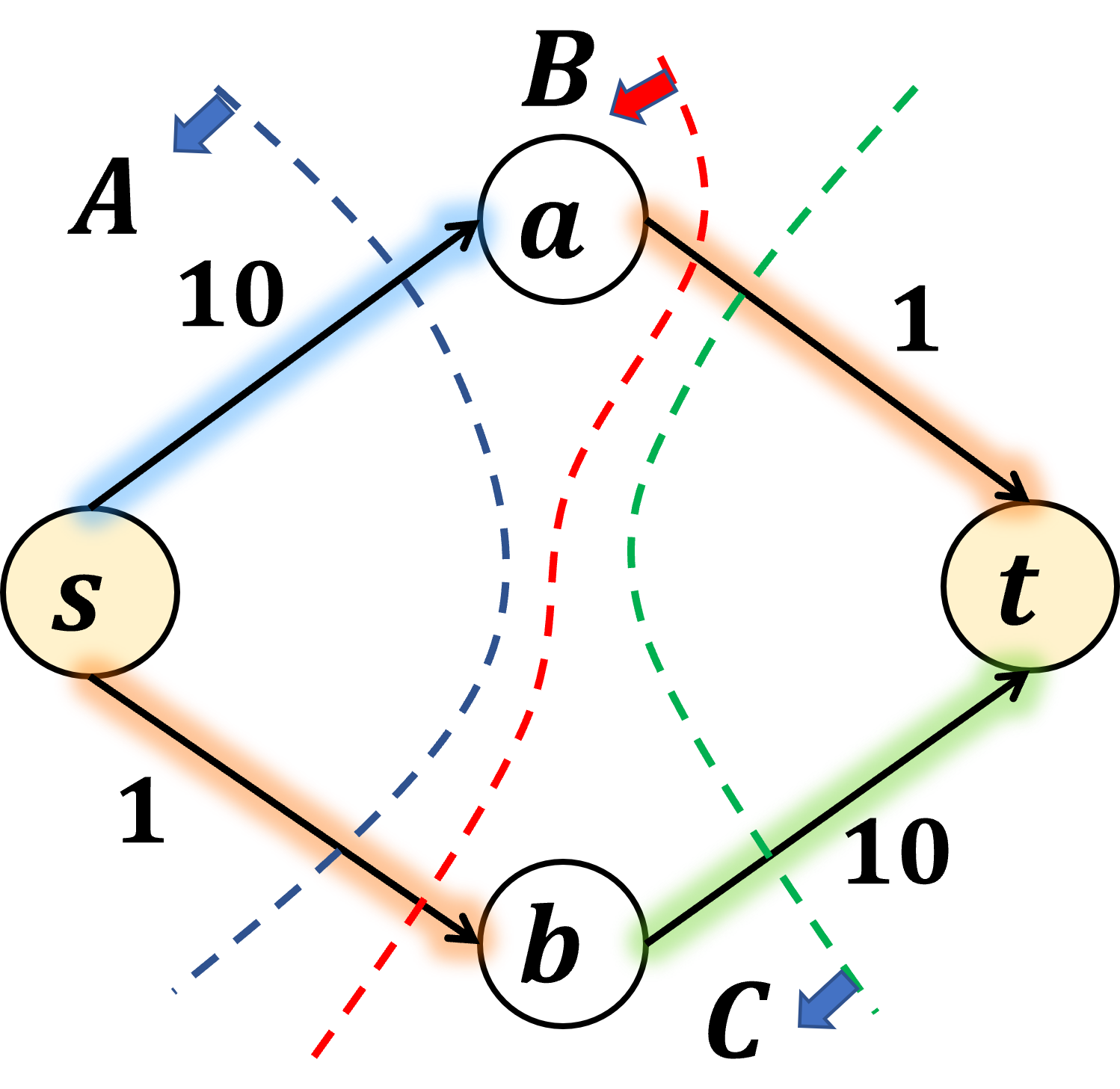}
  \end{center}
  \caption{A mincut cover fails to include mincut $B$ for edges $(a,t)$ and $(s,b)$ if property ${\mathcal P}$ is not ensured in the construction. An edge and the mincut for the edge are shown in the same color.} 
  \label{fig : mincut cover}
\end{figure}
\begin{note}  \label{note : arbitrary edge does not work}
Selecting any arbitrary edge $e\in {\cal E}$ might skip some mincuts for vital edges. Refer to Figure \ref{fig : mincut cover}. The set of vital edges, denoted by ${\mathcal E}$, contains four vital edges $(s,a)$, $(s,b)$, $(a,t)$, and $(b,t)$. Suppose we first select vital edge $(s,a)$ and mincut $A=\{s\}$ for edge $(s,a)$. It follows from Lemma \ref{lem:3-partitions} that ${\mathcal E}$ is partitioned into ${\mathcal E}_L$, ${\mathcal E}_R$ and ${\mathcal E}_A$. Observe that ${\mathcal E}_L$ does not contain any vital edge, vital edges $(a,t)$ and $(b,t)$ belong to ${\mathcal E}_R$, and $(s,b)$ belongs to ${\mathcal E}_A$. We now recurse on set ${\mathcal E}={\mathcal E}_R$ and select the vital edge $(b,t)$. Mincut $C=\{s,a,b\}$ partitions ${\mathcal E}$ into ${\mathcal E}_L$, ${\mathcal E}_R$, and ${\mathcal E}_{C}$. Observe that both ${\mathcal E}_L$ and ${\mathcal E}_R$ do not contain any vital edge, and vital edge $(a,t)$ belongs to ${\mathcal E}_C$. The process terminates as $V({\mathcal E}_L)$ and $V({\mathcal E}_R)$ are empty. We get a pair of $(s,t)$-cuts $A$ and $C$ as mincut cover for ${\mathcal E}$. Unfortunately, it does not contain mincut $B=\{s,a\}$ for edges $(a,t)$ and $(s,b)$. 
\end{note}

\section{Optimal Sensitivity Oracles for (s,t)-mincut} \label{sec: sensitivity-oracles}
 Let $e=(u,v)$ be an edge in $G$, and we wish to determine the impact of the failure of $e$ on $(s,t)$-mincut. We begin with a brief overview of ${\mathcal O}(n)$ space sensitivity oracle \cite{baswana2023minimum+} for $(s,t)$-mincuts in unweighted graphs. By construction of ${\mathcal D}_{PQ}(G)$ \cite{DBLP:journals/mp/PicardQ80}, there is a mapping from the vertices of $G$ to the nodes of ${\mathcal D}_{PQ}(G)$ such that the following assertion holds -- $u$ and $v$ are mapped to the same node of ${\mathcal D}_{PQ}(G)$ if and only if there is no $(s,t)$-mincut in $G$ that separates $u$ and $v$. In ${\mathcal D}_{PQ}(G)$, each $1$-transversal cut is an $(s,t)$-mincut in $G$. Exploiting this property, it is shown in \cite{baswana2023minimum+} that there is an $(s,t)$-mincut to which edge $(u,v)$ contributes if and only if the node containing $u$ succeeds the node containing $v$ in any topological ordering of ${\mathcal D}_{PQ}(G)$. 
 If $u$ and $v$ are mapped to the same node of ${\mathcal D}_{PQ}(G)$, it follows that the mincut for edge $(u,v)$ has a capacity strictly larger than that of the $(s,t)$-mincut. Therefore, if $G$ is unweighted, the failure of edge $(u,v)$ does not reduce the capacity of the $(s,t)$-mincut; so $(u,v)$ is not a vital edge. Thus, the mapping from $V$ to the nodes of ${\mathcal D}_{PQ}(G)$ along with its topological ordering serves as an ${\mathcal O}(n)$ space sensitivity oracle in unweighted graphs.

If $G$ is weighted, it is still possible that though $u$ and $v$ mapped to the same node of ${\mathcal D}_{PQ}(G)$, yet the failure of edge $(u,v)$ reduces $(s,t)$-mincut. In other words, if $G$ is weighted, there may be many vital edges internal to the nodes of ${\mathcal D}_{PQ}(G)$ (refer to Figure \ref{fig : G and D_pq(G)}). Thus ${\mathcal D}_{PQ}(G)$ fails to serve as sensitivity oracle for a weighted graph $G$; hence we need to explore the {\em structure} of cuts {\em internal} to a node of ${\mathcal D}_{PQ}(G)$. 
\begin{figure}[ht]
 \centering
    \includegraphics[width=\textwidth]{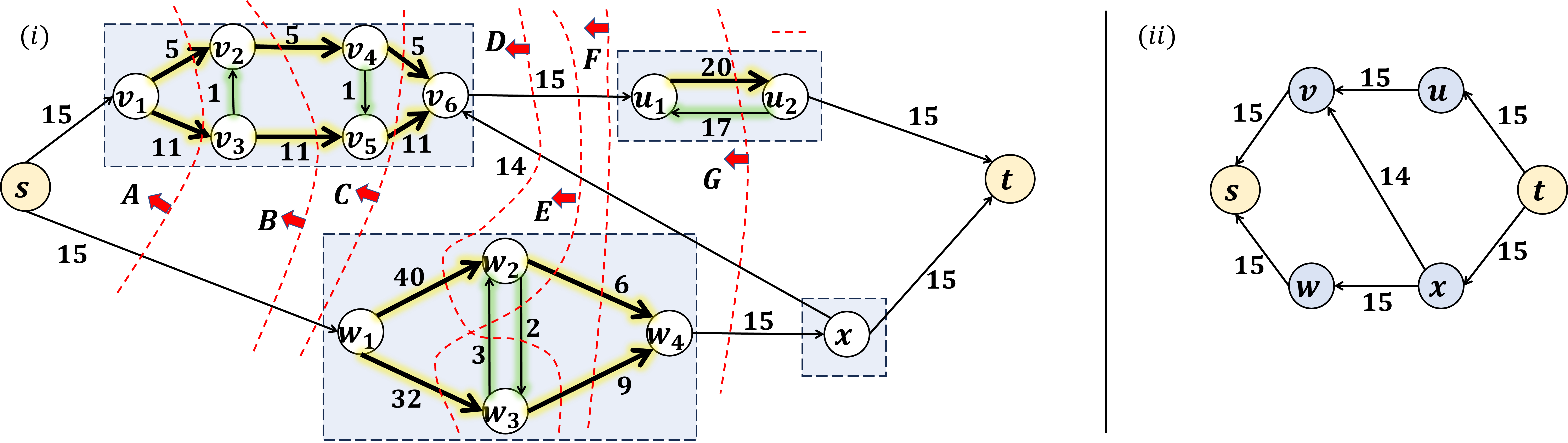} 
   \caption{$(i)$ A graph $H$ and $(ii)$ ${\mathcal D}_{PQ}(H)$. Thick edges in $(i)$ represent the vital edges of $H$ that are internal to the nodes of ${\mathcal D}_{PQ}(H)$. A mincut for them is represented by dashed curves.}
  \label{fig : G and D_pq(G)}
\end{figure}

Let $\textsc{cap}(e,\Delta)$ denote the query for reporting the capacity of $(s,t)$-mincut after reducing the capacity of $e\in E$ by $\Delta$ such that $0\le \Delta \le w(e)$. We now address the design of a compact data structure that can efficiently answer query $\textsc{cap}(e,\Delta)$ for $G$.

Let us first consider only the set of vital edges.
There may exist $\Omega(n^2)$ vital edges in a graph (refer to Figure \ref{fig : n2 values}($ii$) on Page 33). Therefore, explicitly storing the capacity of a mincut for each vital edge would occupy $\Omega(n^2)$ space. 
To design an ${\mathcal O}(n)$ space data structure, we use the following lemma for vital edges. This lemma follows from Lemma \ref{lem:3-partitions} and the discussion we used for bounding the size of mincut cover for vital edges.
\begin{lemma}
Let $U\subseteq V$ and an edge $e\in E_{vit}(U)$. If mincut for $e$, say $C$, has the least capacity among the mincuts for all vital edges from $E_{vit}(U)$, then, for any edge $e'\in E_{vit}(U)$, either $C$ is a mincut for $e'$ or both endpoints of $e'$ belong to $U\cap C$ or $U\cap \overline{C}$. 
\label{lem: divide-and-conquer-for-T_{s,t}} 
\end{lemma}
Using Lemma \ref{lem: divide-and-conquer-for-T_{s,t}}, we design a divide and conquer based algorithm to build a data structure that compactly stores the capacity of mincut for each vital edge as follows.

Let $e^*$ be an edge in $E_{vit}(V)$ such that the capacity of mincut for $e^*$ is less than or equal to the capacity of mincut for each vital edge in $G$.  Let $C$ be a mincut for $e^*$. $C$ serves as the mincut for each vital edge contributing to $C$. For the mincuts of the remaining vital edges, we recursively process $E_{vit}(V\cap C)$ and $E_{vit}(V\cap \overline{C})$. Refer to Algorithm \ref{alg : hierarchy tree} for the pseudocode of this recursive algorithm.

It can be observed that the data structure resulting from Algorithm \ref{alg : hierarchy tree} is a rooted binary tree. We refer to it as ${\mathcal T}_{vit}(G)$ henceforth.  
Its leaves are associated with disjoint subsets of the entire vertex set $V$ -- for each vertex $v\in V$, ${\mathcal L}(v)$ stores the pointer to the leaf to which $v$ is mapped. Each internal node $\nu$ 
in ${\mathcal T}_{vit}(G)$ has the following three fields. 
\begin{itemize}
    \item 
    $\nu.cap$ is the capacity of mincut for the vital edge selected during the recursive call in which  node $\nu$ is created.
    \item $\nu.left$ is the left child of $\nu$ and $\nu.right$ is the right child of $\nu$.
 \end{itemize} 
\begin{algorithm}[ht]
\caption{:~
\textsc{TreeConstruction}($V$) computes ${\mathcal T}_{vit}(G)$.}
\label{alg : hierarchy tree}
\begin{algorithmic}[1]
\Procedure{\textsc{TreeConstruction}$(U)$}{}
    \State Create a node $\nu$; 
    \If {$E_{vit}(U)=\emptyset$}{ 
        \For{each $x\in U$} {${\mathcal L}(x)\gets \nu$};
        \EndFor}
    \Else
        \State Let $C(e)$ denote a mincut for edge $e$;
        \State Select an edge $e^*\in E_{vit}(U)$ such that $c(C(e^*))\le c(C(e'))$ $\forall {e'}\in E_{vit}(U)$;
        \State Assign $\nu.cap\gets c(C(e^*))$; 
        \State $\nu.left \gets$ \textsc{TreeConstruction}($U\cap C(e^*)$);
        \State $\nu.right \gets$ \textsc{TreeConstruction}($U\cap \overline{C(e^*)}$);
    \EndIf
    \State \Return $\nu$;
\EndProcedure
\end{algorithmic}
\end{algorithm}
The following observation is immediate from the construction of ${\mathcal T}_{vit}(G)$.
\begin{observation} \label{obs : left and right child}
    Let $\nu$ be an internal node in ${\mathcal T}_{vit}(G)$ and $C$ be the $(s,t)$-cut associated with node $\nu$ (chosen in Step $8$ of Algorithm \ref{alg : hierarchy tree}).  Let $U$ be the set of vertices belonging to the subtree rooted at $\nu$.
    For any vertex $u\in U$, ${\cal L}(u)$ belongs to the subtree rooted at $\nu.left$ if and only if $u\in C$.
\end{observation}
Observe that ${\mathcal T}_{vit}(G)$ is a full binary tree with at most $n$ leaves. So the number of internal nodes is at most $n-1$. Moreover, every internal node of ${\mathcal T}_{vit}(G)$ stores ${\mathcal O}(1)$ information. Hence, we can state the following theorem.
\begin{theorem}\label{thm : reporting capacity for a relevant edge}
Let $G=(V,E)$ be a directed weighted graph on $n=|V|$ vertices with designated source vertex $s$ and designated sink vertex $t$. There is a rooted full binary tree occupying ${\mathcal O}(n)$ space that stores the capacity of mincut for each vital edge.
\end{theorem}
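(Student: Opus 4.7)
The plan is to analyze the binary tree ${\mathcal T}_{vit}(G)$ produced by Algorithm \ref{alg : hierarchy tree} and establish two things: (i) it occupies ${\mathcal O}(n)$ space, and (ii) for every vital edge $e=(u,v)$, the capacity of a mincut for $e$ is stored at the $\textsc{lca}$ of ${\mathcal L}(u)$ and ${\mathcal L}(v)$. The second property justifies the phrase \emph{``stores the capacity of mincut for each vital edge''}, since the information for any vital edge is retrievable from a single node of the tree.

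For the space bound, observe that the recursion \textsc{TreeConstruction}$(U)$ terminates at a leaf whenever $E_{vit}(U)=\emptyset$, and otherwise splits $U$ into $U\cap C(e)$ and $U\cap\overline{C(e)}$. Because the chosen edge $e$ has one endpoint in each of these sets, both subsets are nonempty, so every internal node has exactly two children. Hence ${\mathcal T}_{vit}(G)$ is a rooted full binary tree whose leaves partition $V$; there are at most $n$ leaves and therefore at most $n-1$ internal nodes, each storing ${\mathcal O}(1)$ information (the capacity $\nu.cap$ and pointers to its two children).

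For correctness, fix a vital edge $e=(u,v)$ and let $\nu$ be the $\textsc{lca}$ of ${\mathcal L}(u)$ and ${\mathcal L}(v)$. First, $\nu$ cannot be a leaf: if $u$ and $v$ shared a leaf, the call \textsc{TreeConstruction}$(U)$ that created that leaf would have satisfied $E_{vit}(U)=\emptyset$, contradicting the fact that $e\in E_{vit}(U)$. So $\nu$ is an internal node, and at $\nu$ some vital edge $e_\nu\in E_{vit}(U)$ with smallest mincut capacity was selected and some mincut $C(e_\nu)$ for $e_\nu$ was recorded. Since $\nu$ is the $\textsc{lca}$, the endpoints $u$ and $v$ lie in opposite subtrees of $\nu$; by Observation \ref{obs : left and right child}, one endpoint of $e$ lies in $C(e_\nu)$ and the other in $\overline{C(e_\nu)}$, so $e$ is a contributing edge of $C(e_\nu)$. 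Because $e_\nu$ was chosen to minimise the mincut capacity over $E_{vit}(U)$, property ${\mathcal P}_2$ holds, giving that $C(e_\nu)$ is also a mincut for every vital edge contributing to it, in particular for $e$. Thus $\nu.cap=c(C(e_\nu))$ equals the capacity of a mincut for $e$, as required.

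The main subtle point is the invocation of property ${\mathcal P}_2$: it is precisely the minimum-capacity selection rule in line 8 of Algorithm \ref{alg : hierarchy tree} that makes this valid, and Note \ref{note : arbitrary edge does not work} illustrates why an arbitrary selection would be incorrect. Once ${\mathcal P}_2$ is guaranteed, the partition in Theorem \ref{thm:3-partitions} ensures that every vital edge of $E_{vit}(U)$ that does not contribute to $C(e_\nu)$ has both endpoints inside exactly one of $U\cap C(e_\nu)$ or $U\cap\overline{C(e_\nu)}$, so the two recursive calls collectively handle all vital edges of $E_{vit}(U)$ not already accounted for by $\nu$ itself. The space bound then follows mechanically from the full-binary-tree structure.
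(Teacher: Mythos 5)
Your proposal is correct and follows essentially the same route as the paper: the space bound comes from the full-binary-tree structure with at most $n$ leaves, and correctness comes from the minimum-capacity selection rule enforcing property ${\mathcal P}_2$ so that the cut recorded at the \textsc{lca} is a mincut for every vital edge contributing to it. Your write-up is in fact more explicit than the paper's (which leaves the \textsc{lca} correctness argument implicit in the preceding discussion), but there is no substantive difference in approach.
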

Let $e=(x,y)$ be a query edge. If $e$ is a vital edge, it follows from the construction of tree ${\mathcal T}_{vit}(G)$ in Theorem \ref{thm : reporting capacity for a relevant edge} that the \textsc{lca} of ${\mathcal L}(x)$ and ${\mathcal L}(y)$ stores the capacity of mincut for $(x,y)$. So, using the efficient data structure for \textsc{lca} \cite{bender2005lowest}, it takes ${\mathcal O}(1)$ time to answer query $\textsc{cap}(e, \Delta)$ for any vital edge $e$. However, what if the query edge is a nonvital edge? Although there is no impact on $(s,t)$-mincut capacity due to the reduction in the capacity of a nonvital edge, tree ${\mathcal T}_{vit}(G)$ stores no information explicitly about nonvital edges. Therefore, in order to answer query $\textsc{cap}(e, \Delta)$ for any edge, it seems a classification of all edges as vital or nonvital is required. 
Unfortunately, any such explicit classification would require ${\Omega}(m)$ space, which is also trivial for answering query $\textsc{cap}$ as discussed above. 

Interestingly, ${\mathcal T}_{vit}(G)$ can answer $\textsc{cap}(e, \Delta)$ for any edge $e$ without any such classification. This is because, as shown in the following lemma, it is already capable of classifying an edge to be vital or nonvital.

 \begin{lemma} \label{lem : condition for irrelevant}
     Let $e=(x,y)$ be any edge in $G$ such that ${\mathcal L}(x)\ne {\mathcal L}(y)$. Let $\nu$ be the \textsc{lca} of ${\mathcal L}(x)$ and ${\mathcal L}(y)$ in ${\mathcal T}_{vit}(G)$, and let $C$ be the $(s,t)$-cut associated with node $\nu$ in  ${\mathcal T}_{vit}(G)$. Edge $(x,y)$ is a nonvital edge if and only if exactly one of the following two conditions is satisfied.
     \begin{enumerate}
         \item ${\mathcal L}(x)\in \nu.right$ and ${\mathcal L}(y)\in \nu.left$.
         \item ${\mathcal L}(x)\in \nu.left$, ${\mathcal L}(y)\in \nu.right$, and $c(C)-w(e)\ge f^*$. 
     \end{enumerate}
 \end{lemma}
 \begin{proof} 
     Suppose edge $(x,y)$ is a nonvital edge. 
     Observe that ${\mathcal L}(x)$ and ${\mathcal L}(y)$ belong to the subtree rooted at $\nu$ in ${\mathcal T}_{vit}(G)$ because $\nu$ is the \textsc{lca} of ${\mathcal L}(x)$ and ${\mathcal L}(y)$. It follows from the construction of ${\mathcal T}_{vit}(G)$ (Algorithm \ref{alg : hierarchy tree}) that edge $(x,y)$ either contributes to $C$ or is an incoming edge to $C$. Suppose $(x,y)$ is a contributing edge to $C$. So using Observation \ref{obs : left and right child}, ${\mathcal L}(x)\in \nu.left$ and ${\mathcal L}(y)\in \nu.right$.
     If $c(C)$ falls below $f^*$ upon removal of edge $(x,y)$, $(x,y)$ would be a vital edge by Definition \ref{def:relevant-edges}. Hence,  $c(C)-w(e)\ge f^*$. So condition (2) is satisfied. 
     Suppose $(x,y)$ is an incoming edge to $C$, that is, $x\notin C$ and $y\in C$.  
     It follows from Observation \ref{obs : left and right child} that ${\mathcal L}(x)$ must belong to the right subtree and ${\mathcal L}(y)$ must belong to the left subtree of $\nu$. So condition (1) is satisfied.

     We now prove the converse part. Suppose condition (1) is satisfied, that is, ${\mathcal L}(x)\in \nu.right$ and ${\mathcal L}(y)\in \nu.left$.  Observation \ref{obs : left and right child} implies that $x\in \overline{C}$ and $y\in C$. Hence, edge $(x,y)$ is an incoming edge of $C$.
      It follows from the construction of ${\mathcal T}_{vit}(G)$ that $C$ is a mincut for some vital edge. So, by Theorem \ref{thm : a special assignment of flow}(1), there exists a maximum ($s,t$)-flow $f$ such that $f_{in}(C)=0$. Hence $f(x,y)=0$. So it follows from Lemma \ref{lem : vital edge in every maximum flow} that $(x,y)$ is a nonvital edge.
      Suppose condition (2) is satisfied. Hence $(x,y)$ is an outgoing edge of $C$. Assume to the contrary that $e$ is a vital edge. It follows from the construction of ${\mathcal T}_{vit}(G)$ that $C$ is a mincut for $e$. Since $e$ is a vital edge, by Definition \ref{def:relevant-edges}, $c(C)-w(e)< f^*$, a contradiction.
 \end{proof}
 For any edge $e\in E$ and $0\le \Delta \le w(e)$, Algorithm \ref{alg : reporting} verifies conditions of Lemma \ref{lem : condition for irrelevant} to answer query $\textsc{cap}(e,\Delta)$ in ${\mathcal O}(1)$ time  using ${\mathcal T}_{vit}(G)$. So it leads to Theorem \ref{thm : reporting value}. 
\begin{note} \label{note : vitality of an edge}
    Given any edge $e$ and its capacity $w(e)$, the data structure in Theorem \ref{thm : reporting value} can be used to report the vitality of $e$ in ${\mathcal O}(1)$ time by assigning $\Delta=w(e)$. 
\end{note}

\begin{algorithm}[ht]
\caption{Reporting the capacity of $(s,t)$-mincut after decreasing the capacity of an edge }\label{alg : fault-tolerant query}
\begin{algorithmic}[1]
\Procedure{$\textsc{cap}((x,y),\Delta)$}{}
    \If{${\mathcal L}(x)$==${\mathcal L}(y)$}
        \State \Return $f^*$
    \Else
        \State $\nu \gets$ \textsc{lca} of ${\mathcal L}(x)$ and ${\mathcal L}(y)$ in ${\mathcal T}_{vit}(G)$.
    \EndIf
     \State $NewCapacity \gets \nu.cap-\Delta$
    \If {(${\mathcal L}(x)\in \nu.left ~ \land ~ {\mathcal L}(y)\in \nu.right~ ) ~\land ~NewCapacity<f^*$}
         \State \Return $NewCapacity$. 
    \Else
        \State \Return $f^*$.
    \EndIf
\EndProcedure
\end{algorithmic}
\label{alg : reporting}
\end{algorithm} 
%
\paragraph*{Reporting an $(s,t)$-mincut:} 
 An internal node, say $\nu$, of ${\mathcal T}_{vit}(G)$ stores only the capacity of the mincut for a vital edge. We augment $\nu$ with a mincut for the edge that was picked by Algorithm \ref{alg : hierarchy tree} in Step $8$.  
The size of the resulting data structure is ${\mathcal O}(n^2)$. 
Observe that the condition $\Delta \le w(e)$ can be verified for any edge $e$ in ${\mathcal O}(1)$ time if we store the capacity of all edges of $G$. This will require only additional ${\cal O}(m)={\cal O}(n^2)$ space. 
So this completes the proof of Theorem \ref{thm : main result}. 

\subsection*{Labeling Scheme} 
Let $f$ be any function defined on the vertex/edge set of the graph. A labeling scheme of $f$ assigns small labels to each vertex/edge of the graph in such a way that, for any given subset $A$ of edges/vertices, $f(A)$ can be computed only by using the labels of vertices/edges in $A$.
 Katz, Katz, Korman, and Peleg \cite{DBLP:journals/siamcomp/KatzKKP04} are the first to design a labeling scheme for flow and connectivity. In particular, they designed a labeling scheme that can report the capacity of 
the cut of the least capacity separating any pair of vertices in an undirected weighted graph. In order to accomplish this task, \cite{DBLP:journals/siamcomp/KatzKKP04} also design and use the following labeling scheme for LCA in a rooted tree. We denote the \textsc{lca} of a pair of nodes $\{\mu,\nu\}$ in a tree by $\textsc{lca}(\mu,\nu)$. 
\begin{lemma} [\cite{DBLP:journals/siamcomp/KatzKKP04}] \label{lem : katz labeling scheme}
     Let $T$ be any rooted tree on $n$ nodes with a maximum edge capacity $W$. Each node $\mu$ of $T$ stores a value $\mu.\textsc{val}\ge0$. Tree $T$ can be preprocessed to compute a label of size ${\mathcal O}(\log^2{n}+\log{n}\log{W})$ bits for each node such that, given any pair of nodes $\mu$ and $\nu$ from $T$, the value $\textsc{lca}(\mu, \nu).\textsc{val}$ can be reported in polylogarithmic time by processing only the labels of $\mu$ and $\nu$.
\end{lemma}
We present a labeling scheme for reporting the capacity of $(s,t)$-mincut after changing the capacity of any given edge in $G$.  The labeling scheme is inspired by Algorithm \ref{alg : reporting} that uses tree ${\mathcal T}_{vit}(G)$. For answering Query $\textsc{cap}((x,y),\Delta)$, it follows from Algorithm \ref{alg : reporting} that if $(x,y)$ is a vital edge, then the new capacity of $(s,t)$-mincut is $\textsc{lca}({\mathcal L}(x),{\mathcal L}(y)).cap-\Delta$; otherwise the new capacity remains $f^*$. In order to determine the capacity stored at \textsc{lca} of ${\mathcal L}(x)$ and ${\mathcal L}(y)$ in ${\mathcal T}_{vit}(G)$, we use labeling scheme for \textsc{lca} on ${\mathcal T}_{vit}(G)$ as given in Lemma \ref{lem : katz labeling scheme}. The label size for each leaf node of ${\mathcal T}_{vit}(G)$ is ${\mathcal O}(\log^2{n}+\log{n}\log{W})$ bits, where $W$ is the maximum capacity of any edge in $G$. 
We now design a labeling scheme for the task of identifying whether query edge is vital based on Observation \ref{obs : left and right child}. At each leaf node ${\nu}$ of ${\mathcal T}_{vit}(G)$, we store a positive integer value, denoted by $\textsc{ord}(\nu)$, according to the following order. \\
    {\em For a pair of leaf nodes $\mu$ and $\nu$, $\mu\in \textsc{lca}(\mu,\nu).left$ and $\nu\in \textsc{lca}(\mu,\nu).right$ if and only if $\textsc{ord}(\mu)<\textsc{ord}(\nu)$.}    \\    
This information $\textsc{ord}$ helps in classifying an edge to be vital or nonvital as follows. 
\begin{lemma} \label{lem : vital nonvital labeling}
    An edge $(x,y)$ is a vital edge if and only if $\textsc{ord}({\mathcal L}(x))<\textsc{ord}({\mathcal L}(y))$ and \\$\textsc{lca}({\mathcal L}(x),{\mathcal L}(y)).cap-w((x,y))<f^*$. 
\end{lemma}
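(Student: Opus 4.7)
The plan is to recognize this lemma as a direct reformulation of Lemma \ref{lem : condition for irrelevant} in the language of the newly introduced labeling $\textsc{ord}$, together with a small handling of the degenerate case ${\mathcal L}(x)={\mathcal L}(y)$. I will first dispose of this degenerate case: by the construction of ${\mathcal T}_{vit}(G)$ in Algorithm \ref{alg : hierarchy tree}, whenever $E_{vit}(U)\neq\emptyset$ at an internal node, the mincut of the chosen vital edge strictly separates $U$, so any edge with both endpoints mapped to the same leaf node cannot be a vital edge. Since distinct leaves receive distinct $\textsc{ord}$ values by definition, ${\mathcal L}(x)={\mathcal L}(y)$ implies $\textsc{ord}({\mathcal L}(x))=\textsc{ord}({\mathcal L}(y))$, so the strict inequality in the lemma fails, and both sides correctly declare $(x,y)$ nonvital.

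From here on, I will assume ${\mathcal L}(x)\neq {\mathcal L}(y)$ and let $\nu=\textsc{lca}({\mathcal L}(x),{\mathcal L}(y))$ with associated $(s,t)$-cut $C$, so that $c(C)=\nu.cap$. The defining property of $\textsc{ord}$ immediately gives the equivalence
\[
  \textsc{ord}({\mathcal L}(x))<\textsc{ord}({\mathcal L}(y)) \ \iff\ {\mathcal L}(x)\in \nu.left \text{ and } {\mathcal L}(y)\in \nu.right.
\]
Combining this with Observation \ref{obs : left and right child}, I will translate the above into the statement that $x\in C$ and $y\in\overline{C}$, i.e.\ the edge $(x,y)$ is contributing to $C$ and not incoming to $C$.

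Finally, I will contrapose Lemma \ref{lem : condition for irrelevant}: an edge $(x,y)$ with ${\mathcal L}(x)\neq {\mathcal L}(y)$ is vital if and only if both
(i) ${\mathcal L}(x)\in \nu.left$ and ${\mathcal L}(y)\in \nu.right$, and
(ii) $c(C)-w((x,y))<f^*$
hold. Using the translation above together with $c(C)=\nu.cap=\textsc{lca}({\mathcal L}(x),{\mathcal L}(y)).cap$, condition~(i) becomes $\textsc{ord}({\mathcal L}(x))<\textsc{ord}({\mathcal L}(y))$ and condition~(ii) becomes $\textsc{lca}({\mathcal L}(x),{\mathcal L}(y)).cap - w((x,y))<f^*$, which is exactly the claim of the lemma.

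There is no real obstacle: the proof is a book-keeping exercise that threads the $\textsc{ord}$ labels through the characterization already proved in Lemma \ref{lem : condition for irrelevant}. The only subtle point worth emphasizing is that the equality case $\textsc{ord}({\mathcal L}(x))=\textsc{ord}({\mathcal L}(y))$ must correspond to ${\mathcal L}(x)={\mathcal L}(y)$, which in turn must imply nonvitality; this is the reason Algorithm \ref{alg : hierarchy tree} keeps recursing as long as $E_{vit}(U)\neq\emptyset$, ensuring that both endpoints of every vital edge end up in different leaves of ${\mathcal T}_{vit}(G)$.
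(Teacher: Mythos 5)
Your proposal is correct and matches the paper's intent: the paper proves this lemma only by the one-line remark that it is ``along similar lines as Lemma~\ref{lem : condition for irrelevant},'' and your argument is precisely the contrapositive of that lemma translated through the defining property of $\textsc{ord}$ and the identity $c(C)=\nu.cap$. Your explicit treatment of the degenerate case ${\mathcal L}(x)={\mathcal L}(y)$ (using that Algorithm~\ref{alg : hierarchy tree} keeps recursing while $E_{vit}(U)\neq\emptyset$, so every vital edge ends up with its endpoints in distinct leaves) is a detail the paper leaves implicit, and you handle it correctly.
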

The proof of Lemma \ref{lem : vital nonvital labeling} is along similar lines as Lemma \ref{lem : condition for irrelevant}. 

The labeling scheme for answering Query $\textsc{cap}(e,\Delta)$ stores the following labels at each leaf node $\mu$ of the tree ${\mathcal T}_{vit}(G)$ -- $(i)$ ${\mathcal O}(\log^2{n}+\log{n}\log{W})$ bits label given in \cite{DBLP:journals/siamcomp/KatzKKP04}, $(ii)$ the value $\textsc{ord}(\mu)$, and $(iii)$ the capacity of $(s,t)$-mincut, that is $f^*$. Finally, for each vertex $x$ in graph $G$, we store the label associated with ${\mathcal L}(x)$. Observe that, along the similar procedure in Algorithm \ref{alg : reporting}, we can answer Query $\textsc{cap}((x,y),\Delta))$ in polylogarithmic time by using only the labels of $x$ and $y$. 

This completes the proof of the following theorme.
 
\begin{theorem} \label{thm : labeling scheme}
    Any directed weighted graph $G$ on $n$ vertices, with a designated source vertex $s$ and a designated sink vertex $t$, can be preprocessed to compute labels of ${\mathcal O}(\log^2{n}+\log{n}\log{W})$ bits for each vertex such that, given any edge $e=(x,y)$ in $G$ and any value $\Delta$ satisfying $0\le \Delta \le w(e)$, the capacity of $(s,t)$-mincut after reducing the capacity of $e$ by $\Delta$ can be determined in polylogarithmic time by processing only the labels of $x$ and $y$. Here $W$ is the maximum capacity of any edge in $G$. 
\end{theorem}


\section{An Algorithm for Computing All Vital Edges} 
\label{sec: algorithm}
 The existing algorithms \cite{DBLP:journals/networks/AnejaCN01, DBLP:journals/networks/AusielloFLR19} that compute the most vital edge or all vital edges in undirected graphs take \textit{cut-based} approaches as follows.  For computing the most vital edge in an undirected graph, Aneja, Chandrasekaran, and Nair \cite{DBLP:journals/networks/AnejaCN01} establish that there exists an $(s,t)$-cut such that the most vital edge is the maximum capacity edge among all the edges that contribute to the cut. Ausiello, Franciosa, Lari, and Ribichini \cite{DBLP:journals/networks/AusielloFLR19} define function $F$ on cuts as in Equation \ref{eq : s,t cuts} in order to use the ancestor tree of Cheng and Hu \cite{DBLP:journals/anor/ChengH91}.
 To compute all vital edges efficiently in a directed weighted graph, we take a \textit{flow-based} approach that analyzes flow along a set of edges in a maximum $(s,t)$-flow. We first classify all vital edges into two types based on the maximum flow that an edge can carry in any maximum $(s,t)$-flow.

\paragraph*{A classification of vital edges:}
In graph $G$, 
there may exist multiple $(s,t)$-flows attaining value $f^*$.  The flow along an edge may be different in these multiple maximum $(s,t)$-flows.
A vital edge is a \textit{tight} edge if it carries flow equal to its capacity in at least one maximum $(s,t)$-flow; otherwise, it is a \textit{loose} edge.\\ 

Our algorithm for computing all vital edges makes use of the classification of vital edges, and employs two results that were derived for solving two seemingly {\em unrelated} problems.
 
To compute all the loose edges, we crucially exploit the following result on maximum $(s,t)$-flow, which has been used for solving \textit{mincost maximum $(s,t)$-flow problem}.
\begin{lemma} [Theorem 11.1 and Theorem 11.2 in \cite{DBLP:books/daglib/0069809}] \label{lem : spanning tree property} 
    For any directed weighted graph $G$, there exists a maximum $(s,t)$-flow $f^\#$ such that the edges that carry nonzero flow but not fully saturated are at most $n-1$.
\end{lemma}
For any maximum $(s,t)$-flow, by definition, the set of all loose edges is a subset of all edges that carry nonzero flow but are not fully saturated. So, by Lemma \ref{lem : spanning tree property}, the number of loose edges can be at most $n-1$. Moreover, there is an algorithm that, given any maximum $(s,t)$-flow, can compute $f^\#$ of Lemma \ref{lem : spanning tree property} in ${\mathcal O}(mn)$ time \cite{DBLP:books/daglib/0069809}. As a result, all the loose edges of $G$ can be computed using ${\mathcal O}(n)$ maximum $(s,t)$-flow computations.

Observe that the efficient computation of the set of loose edges crucially exploits the fact that there can be at most $n-1$ loose edges.
However, there can be $\Omega(n^2)$ tight edges in a graph (discussed in Section \ref{sec : cardinality of tight edges and loose edges}). 
Hence, to compute all the vital edges, the main challenge arises in computing all the tight edges of $G$. We now state a property of tight edges that plays a crucial role in their efficient computation.
\subsection*{A property satisfied by tight edges}
Let $E_{min}$ be the set of all edges that contribute to $(s,t)$-mincuts. Set $E_{min}$ is a subset of all the tight edges since all edges in $E_{min}$ are fully saturated in every maximum $(s,t)$-flow. Exploiting the strong duality between maximum $(s,t)$-flow \& $(s,t)$-mincut and Lemma \ref{lem : conservation of flow}, it can be observed that each edge $(u,v)\in E_{min}$ satisfies the following property. 
If $C$ is any $(s,t)$-cut such that $v\in C$ and $u\in \overline{C}$, then $c(C)$ is strictly greater than $f^*$. The following lemma shows that this property is satisfied by each tight edge as well.   

\begin{lemma} \label{lem : tight and separation}
    Let $e=(u,v)$ be a tight edge. Let $C$ be a mincut for edge $e$ and $C_{v,u}$ be an $(s,t)$-cut of the least capacity that keeps $v$ on the side of $s$ and $u$ on the side of $t$. Then, $c(C)<c(C_{v,u})$.
\end{lemma}
\begin{proof}

      Edge $(u,v)$ is a tight edge, so by definition, there is a maximum $(s,t)$-flow $f$ such that $f(e)$ is equal to $w(e)$. This would imply that
     $f_{in}(C_{v,u})$ is at least $w(e)$ since
     edge $(u,v)$ is an incoming edge for cut $C_{v,u}$.
     It follows from Lemma \ref{lem : conservation of flow} that $f_{out}(C_{v,u})-f_{in}(C_{v,u})=f^*$.  
     Therefore, $f_{out}(C_{v,u})-w(e)\ge f^*$. Since the capacity of an $(s,t)$-cut is an upper bound on the value of any outgoing flow of the $(s,t)$-cut, therefore,  $f_{out}(C_{v,u})\le c(C_{v,u})$. Hence we arrive at the following inequality.
    \begin{equation} \label{eq : 1}
        c(C_{v,u})\ge f^*+w(e)
    \end{equation}
    It follows from Theorem \ref{thm : a special assignment of flow}$(2)$ that $w_{min}(e)$ is the minimum amount of flow that must pass through edge $(u,v)$ to get a maximum flow of value $f^*$. Let $f'$ be the value of maximum $(s,t)$-flow in graph $G\setminus\{e\}$. Thus we arrive at the following equality.
    \begin{equation} \label{eq : 2}
        f^*=f^{'}+w_{min}(e)
    \end{equation}
    We now add $(w(e)-w_{min}(e))$ on both sides of Equation \ref{eq : 2} and get the following equation.
    \begin{equation} \label{eq : 11}
        f^*+w(e)-w_{min}(e)=f^{'}+w(e)
    \end{equation}
   It follows from Theorem \ref{thm : a special assignment of flow}(2) that $f^{'}$ is also equal to the sum of capacities of all edges contributing to $C$ except edge $(u,v)$. Therefore, $f^{'}+w(e)$ is equal to the capacity of $C$ in $G$. So it follows from Equation \ref{eq : 11} that $f^*+w(e)-w_{min}(e)=c(C)$. Since $(u,v)$ is a vital edge, therefore, $w_{min}(e)>0$ using Observation \ref{obs: min-capacity-of-an-edge}. Hence,
    \begin{equation}\label{eq : 3}
        f^*+w(e)>c(C)
    \end{equation} 
    It follows from Inequality \ref{eq : 1} and Inequality \ref{eq : 3} that $c(C_{v,u})>c(C)$.
\end{proof}
For any tight edge $(u,v)$, let $C$ be an $(s,t)$-cut of the least capacity that separates vertex $u$ and vertex $v$. It follows from Lemma \ref{lem : tight and separation} that $C$ is also a mincut for edge $(u,v)$. So we can state the following theorem.
\begin{theorem}\label{thm : tight edge property}
    Let $G$ be a directed weighted graph with a designated source vertex $s$ and a designated sink vertex $t$. 
    For any tight edge $e=(u,v)$ in $G$, a mincut for edge $e$ is identical to an {$(s,t)$-cut of the least capacity} that separates vertex $u$ and vertex $v$. 
\end{theorem}
\subsection*{Algorithm for computing tight edges}
For the given directed weighted graph $G$, we define function $F$ as in Equation \ref{eq : s,t cuts} and then compute the Ancestor tree of Cheng and Hu \cite{DBLP:journals/anor/ChengH91} for $(s,t)$-cuts.  
 For each pair of vertices, ${\mathcal T}_{(s,t)}$ stores an $(s,t)$-cut of the least capacity separating them at their \textsc{lca}. This tree, denoted by ${\mathcal T}_{(s,t)}$, can be built using ${\mathcal O}(n)$ maximum $(s,t)$-flow computations \cite{DBLP:journals/anor/ChengH91}. 
We augment ${\mathcal T}_{(s,t)}$ with a data structure for \textsc{lca} queries \cite{bender2005lowest}. 

To compute all tight edges, we process ${\mathcal T}_{(s,t)}$ and the edges of $G$ as follows. Let $(u,v)$ be an edge in $G$. 
We perform \textsc{lca} query on ${\mathcal T}_{(s,t)}$ for $u$ and $v$ to get the $(s,t)$-cut, say $C$, of the least capacity that separates $u$ and $v$. 
We determine whether $C$ satisfies the following two conditions. 
\begin{enumerate}
    \item Edge $(u,v)$ contributes to $C$.
    \item $c(C)-w((u,v))<f^*$
\end{enumerate}
If both conditions are satisfied, by Definition \ref{def:relevant-edges}, $(u,v)$ is a vital edge. Observe that there may be vital edges that do not satisfy Condition (1); refer to Figure \ref{fig : gamma non zero}. However, it follows from Theorem \ref{thm : tight edge property} that each tight edge does satisfy these two conditions. After processing all edges of $G$, let $S$ be the resulting set of vital edges that satisfy both these conditions. 
We eliminate from $S$ all the loose edges to get all the tight edges.
Thus we can state the following theorem.  
\begin{theorem} \label{thm : tight edge computation}
    For any directed weighted graph $G$ on $n$ vertices with a designated source vertex $s$ and a designated sink vertex $t$, there is an algorithm that computes all tight edges of $G$ using ${\mathcal O}(n)$ maximum $(s,t)$-flow computations. 
\end{theorem}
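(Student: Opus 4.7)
The plan is to combine Theorem \ref{thm : tight edge property} with the Ancestor tree ${\mathcal T}_{(s,t)}$ of Cheng and Hu specialized to the $(s,t)$-cut function $F$ from Equation \ref{eq : s,t cuts}. First, I would invoke the algorithm of \cite{DBLP:journals/anor/ChengH91} to build ${\mathcal T}_{(s,t)}$ using ${\mathcal O}(n)$ maximum $(s,t)$-flow computations. The tree supports, for any pair $\{u,v\}$, returning the capacity of the least-capacity $(s,t)$-cut separating $u$ and $v$ in ${\mathcal O}(1)$ time, the cut itself (along with the side of each endpoint) in ${\mathcal O}(|C|)$ time, and, with standard LCA-style augmentation on the tree, the side labels of any two specified vertices in ${\mathcal O}(1)$ time.

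Next, I would scan the edge set once. For every edge $(u,v) \in E$, query ${\mathcal T}_{(s,t)}$ for the cut $C$ and its capacity $c(C)$, and apply two ${\mathcal O}(1)$-time tests: (a) $u \in C$ and $v \in \overline{C}$, certifying that $(u,v)$ contributes to $C$; and (b) $c(C) - w((u,v)) < f^*$, certifying that $C$ is a relevant cut for $(u,v)$. An edge passing both tests is necessarily vital, since removing it strictly reduces $c(C)$ below $f^*$. To see that no tight edge is missed, suppose $e = (u,v)$ is tight. Theorem \ref{thm : tight edge property} furnishes a mincut $C(e)$ for $e$ that coincides with the least-capacity $(s,t)$-cut separating $u$ and $v$, and Lemma \ref{lem : tight and separation} further guarantees $c(C(e)) < c(C_{v,u})$, so the least-capacity separating cut is necessarily $C_{u,v}$ with $u$ on the source side. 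Hence the Ancestor tree returns $C_{u,v}$, test (a) holds, and test (b) holds by vitality of $e$, so $e$ is included in the output.

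The main obstacle is the directional asymmetry of the Ancestor tree: for a pair $\{u,v\}$ it stores only the smaller of $C_{u,v}$ and $C_{v,u}$, whereas the orientation relevant to the edge $(u,v)$ is always $C_{u,v}$. Lemma \ref{lem : tight and separation} is exactly the ingredient that dispels this concern for tight edges, making the simple one-pass test sound. The total cost is ${\mathcal O}(n)$ maximum $(s,t)$-flow computations for building ${\mathcal T}_{(s,t)}$ plus ${\mathcal O}(m)$ additional time for the edge scan, with no further flow computations incurred. The procedure outputs a set of vital edges that provably contains every tight edge; if the exact set of tight edges is desired, one subtracts the loose edges produced by the complementary algorithm of the next section, which by Theorem \ref{thm : n-1 loose edges} number at most $n-2$ and so introduce no extra flow computations into the overall budget.
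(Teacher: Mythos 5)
Your proposal is correct and follows essentially the same route as the paper: build the Cheng--Hu Ancestor tree with ${\mathcal O}(n)$ maximum-flow computations, then test each edge $(u,v)$ in ${\mathcal O}(1)$ time for contributing to the least-capacity separating cut and for that cut being relevant, with Theorem \ref{thm : tight edge property} (via Lemma \ref{lem : tight and separation}) guaranteeing no tight edge is missed. Your closing remark that the scan returns a superset of the tight edges (possibly including some loose vital edges) and that the exact set is recovered by subtracting the at most $n-2$ loose edges is a correct and slightly more careful reading of what the procedure actually computes than the paper's own phrasing.
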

Theorem \ref{thm : tight edge computation} and the discussion on computing the loose edges lead to Theorem \ref{thm : computing all vital edges}.
\begin{remark}
    In an undirected graph, observe that a mincut for an edge $(u,v)$ is always an $(s,t)$-cut of the least capacity that separates $u$ and $v$. Therefore, computing all vital edges in an undirected graph amounts to just verifying Condition $(2)$ for each edge as shown in 
    \cite{DBLP:journals/networks/AusielloFLR19}.
\end{remark}

\begin{figure}
  \begin{center}
    \includegraphics[width=0.3\textwidth]{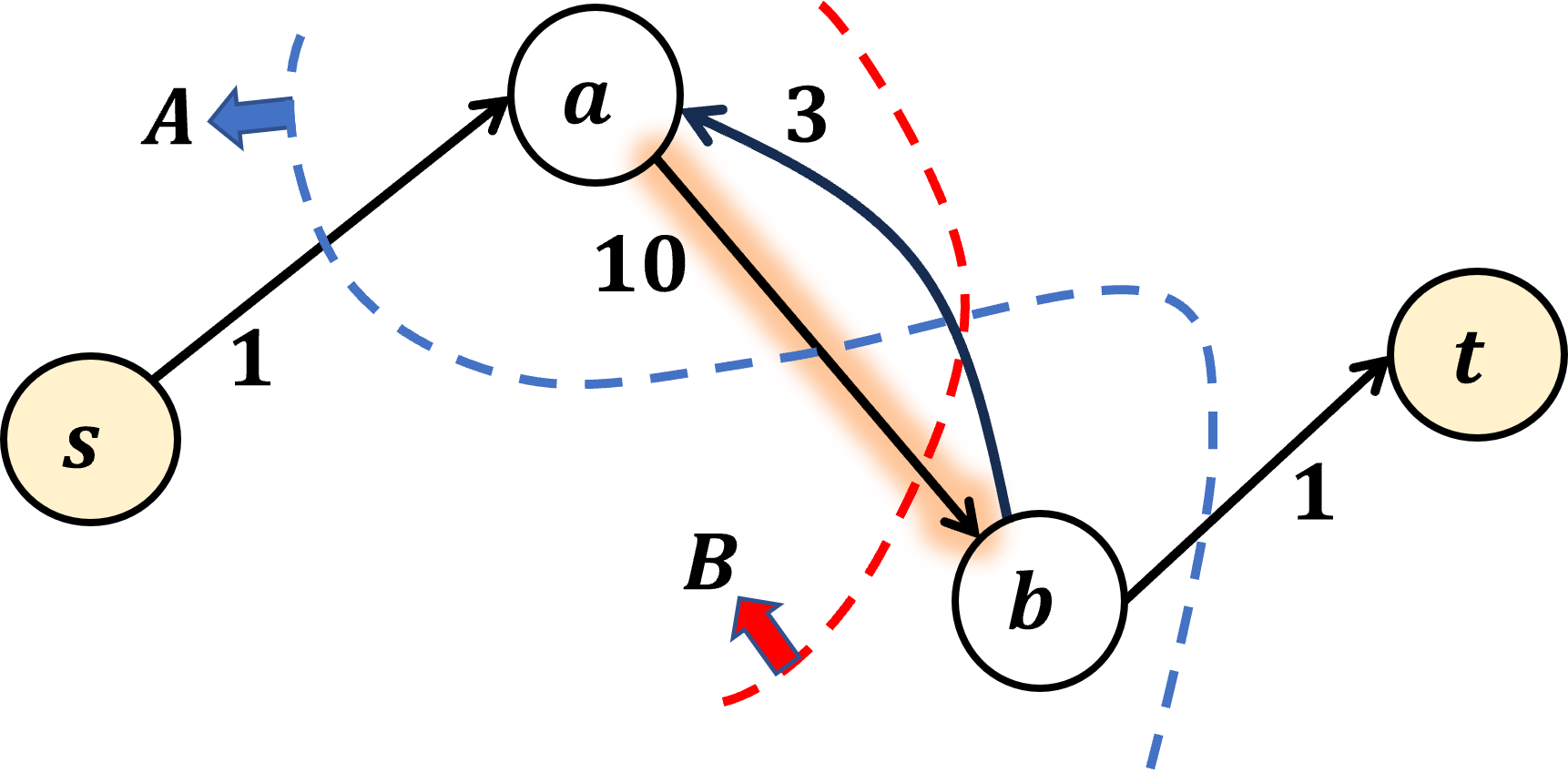}
  \end{center}
  \caption{$A$ is the $(s,t)$-cut of least capacity that separates $a$ and $b$. Edge $(a,b)$ is a vital edge, and it is incoming to $A$.}
  \label{fig : gamma non zero}
\end{figure}

\subsection{Tight and Loose Edges: Cardinality and Computation} \label{sec : cardinality of vital edges}
Not only the classification of vital edges plays a crucial role in their efficient computation, but it is also used in designing a compact structure for storing and characterizing all mincuts for all vital edges (refer to Section \ref{sec: compact-structures}). We believe that this classification might find other applications. Hence, it is also important to provide a bound on the number of edges belonging to each type and to design an algorithm that, given any edge, can efficiently determine whether it is tight or loose.

\subsubsection{Cardinality of Tight and Loose Edges} \label{sec : cardinality of tight edges and loose edges}
In this section, we address the cardinality of tight and loose edges. Let us consider a complete bipartite graph with two sets of vertices $L$ and $R$ such that all edges have finite capacities and are directed from $L$ to $R$. We add a source vertex $s$ and a sink vertex $t$ to this bipartite graph and define the edges as follows. From source $s$, we add an edge of infinite capacity to each vertex in $L$, and similarly, we add an edge of infinite capacity from each vertex in $R$ to $t$ (refer to Figure \ref{fig : n2 values}($ii$) on Page 33). Observe that, in the resulting graph, each edge from $L$ to $R$ is a tight edge. This shows the existence of graphs on $n$ vertices where the number of tight edges can be ${\Omega}(n^2)$. 

It follows from Lemma \ref{lem : spanning tree property} that the number of loose edges can be at most $n-1$.
Observe that a graph with a simple directed path from $s$ to $t$  with different capacities on edges contains exactly $n-2$ loose edges. We present here an alternate proof using Mincut Cover (Theorem \ref{thm : n-1 cuts}) to show that the number of loose edges cannot be more than $n-2$; hence, the bound of $n-2$ is tight as well. 
To establish this claim, we first state the following lemma. 
\begin{lemma} \label{lem : at most one loose edge in a cut}
    For a vital edge $e\in E_{vit}$, there is at most one loose edge out of all contributing edges of any mincut $C(e)$ for edge $e$. 
\end{lemma}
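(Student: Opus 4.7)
The plan is to derive this lemma as a fairly direct corollary of the \textsc{GenFlowCut} property (Theorem \ref{thm : a special assignment of flow}). The key observation is that \textsc{GenFlowCut} already forces every contributing edge of $C(e)$ other than $e$ itself to be saturated by a single maximum flow, and saturation is precisely what a loose edge can never achieve.

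First I would fix an arbitrary vital edge $e$ and an arbitrary mincut $C(e)$ for $e$. Invoking Theorem \ref{thm : a special assignment of flow}, I obtain a maximum $(s,t)$-flow $f$ in $G$ such that $f_{in}(C(e))=0$, $f(e)=w_{min}(e)$, and $f(e')=w(e')$ for every contributing edge $e'$ of $C(e)$ with $e'\neq e$. Thus in the single flow $f$, every contributing edge of $C(e)$ except possibly $e$ is fully saturated.

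Next I would contrast this with Definition \ref{def : tight and loose edges}: a loose edge is a vital edge $e'$ for which no maximum $(s,t)$-flow attains $f(e')=w(e')$. Since the flow $f$ produced above attains $f(e')=w(e')$ for every contributing edge $e'\neq e$ of $C(e)$, none of these edges can be loose. The only contributing edge of $C(e)$ that is not forced to be saturated by $f$ is $e$ itself (and indeed $f(e)=w_{min}(e)$ which is typically strictly less than $w(e)$), so $e$ is the only candidate loose edge among the contributing edges of $C(e)$. Consequently, the number of loose edges among the contributing edges of $C(e)$ is at most one.

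There is no real obstacle here; the entire argument is simply the contrapositive of the definition of loose combined with the saturation guarantee of \textsc{GenFlowCut}. The one minor subtlety I would flag in the write-up is that a contributing edge $e'\neq e$ of $C(e)$ might be nonvital, in which case it is automatically not loose and needs no separate argument; if $e'$ is vital, the saturation under $f$ certifies it as tight rather than loose. Either way, $e$ is the unique possible loose contributing edge of $C(e)$.
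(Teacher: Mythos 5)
Your proof is correct and follows essentially the same route as the paper: both invoke Theorem \ref{thm : a special assignment of flow}(2) to obtain a single maximum flow saturating every contributing edge of $C(e)$ other than $e$, and then conclude from Definition \ref{def : tight and loose edges} that none of those edges can be loose. The remark about nonvital contributing edges is a harmless extra observation that the paper omits.
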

\begin{proof}
    Suppose there is an edge $e'$ ($\ne e$) that is contributing to $C(e)$ and is a loose edge. It follows from Theorem \ref{thm : a special assignment of flow}$(2)$ that there is a maximum flow $f$ such that each edge contributing to $C(e)$, except edge $e$, has a flow equal to its capacity. Since $e'$ is a contributing edge and not the edge $e$, therefore $f((u,v))=w((u,v))$. This violates the definition of a loose edge, a contradiction. Therefore, only edge $e$ can be a loose edge out of all contributing edges of $C(e)$. 
\end{proof}

We now use Lemma \ref{lem : at most one loose edge in a cut} and Theorem \ref{thm : n-1 cuts} (Mincut Cover) to establish an upper bound on the number of loose edges.
\begin{lemma} \label{lem : n-1 loose edges}
    The number of loose edges in $G$ is at most $n-2$.
\end{lemma}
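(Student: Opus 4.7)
The plan is to combine the Mincut Cover Theorem (Theorem \ref{thm : n-1 cuts}) with Lemma \ref{lem : at most one loose edge in a cut}, by tailoring the cover so that one of its cuts is guaranteed to ``waste'' a slot on tight edges only. First I would dispose of the degenerate case $f^{*}=0$: in that case, Lemma \ref{lem : vital edge in every maximum flow} forces $E_{vit}=\emptyset$, so $|E_L|=0\le n-2$ for any $n\ge 2$. Henceforth assume $f^{*}>0$, which guarantees $E_{min}\neq\emptyset$.

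Next I would re-run the constructive proof of Theorem \ref{thm : n-1 cuts} on $\mathcal{E}=E_{vit}$, choosing at the root of the recursion an edge $e^{*}\in E_{min}$. This is consistent with the construction's rule of picking an edge whose mincut has least capacity, since every $(s,t)$-cut has capacity at least $f^{*}$ whereas every edge of $E_{min}$ admits an $(s,t)$-mincut of capacity exactly $f^{*}$ as one of its mincuts. Let $C^{*}$ be the $(s,t)$-mincut picked for $e^{*}$. The resulting cover $\mathcal{C}_{min}$ then satisfies $|\mathcal{C}_{min}|\le n-1$ by Theorem \ref{thm : n-1 cuts} and contains $C^{*}$ by construction.

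Finally I would count loose edges via an injection into $\mathcal{C}_{min}$. For every loose edge $e\in E_L\subseteq E_{vit}$ the cover contains some $C_e$ that is a mincut for $e$, so $e$ is a contributing edge of $C_e$; by Lemma \ref{lem : at most one loose edge in a cut}, $C_e$ has at most one loose contributing edge, forcing $e\mapsto C_e$ to be injective. Moreover, $C^{*}$ cannot lie in the image of this map: being an $(s,t)$-mincut, every one of its contributing edges lies in $E_{min}$ and is therefore saturated in every maximum $(s,t)$-flow, hence tight rather than loose. Combining injectivity with this exclusion yields $|E_L|\le |\mathcal{C}_{min}|-1 \le n-2$. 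The only delicate point is ensuring $C^{*}\in\mathcal{C}_{min}$; this is handled transparently by the freedom to break ties at the root of the recursion in favor of an edge in $E_{min}$, so I do not anticipate any serious obstacle beyond this bookkeeping.
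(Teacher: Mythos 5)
Your proposal is correct and follows essentially the same route as the paper: apply the Mincut Cover Theorem, use Lemma \ref{lem : at most one loose edge in a cut} to get an injection from loose edges into the cover, and observe that one cut in the cover is an $(s,t)$-mincut whose contributing edges are all saturated (hence tight), which excludes it from the image and improves $n-1$ to $n-2$. Your extra care in justifying that the cover can be made to contain an $(s,t)$-mincut (via the least-capacity tie-breaking at the root of the recursion) is a detail the paper merely asserts, but it is the same argument.
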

\begin{proof}
 Theorem \ref{thm : n-1 cuts} states that there is a set ${\mathcal M}$ containing at most $n-1$ $(s,t)$-cuts such that at least one mincut for every vital edge of $G$ is present in ${\mathcal M}$. It follows from the construction that each $(s,t)$-cut $C$ present in ${\mathcal M}$ is a mincut for at least one vital edge. Suppose the number of loose edges is strictly greater than $n-1$. Using pigeon hole principle, it follows that there exists at least one $(s,t)$-cut $C\in {\mathcal M}$ such that at least two loose edges are contributing to $C$, which contradicts Lemma \ref{lem : at most one loose edge in a cut}. It follows from the strong duality between maximum $(s,t)$-flow and $(s,t)$-mincut that each edge that contributes to $(s,t)$-mincut is fully saturated. Observe that there is an $(s,t)$-cut $C$ in ${\mathcal M}$ such that $C$ is an $(s,t)$-mincut. Therefore, all the contributing edges of $C$ are fully saturated and hence cannot be a loose edge. This establishes that there are at most $n-2$ loose edges in $G$.
\end{proof}
This completes the proof of the following theorem. 
\begin{theorem}  \label{thm : n-1 loose edges}
    Let $G$ be any directed weighted graph on $n$ vertices with a designated source vertex $s$ and a designated sink vertex $t$. There can be $\Omega(n^2)$ vital edges in $G$, but the loose edges can never be more than $n-2$.
\end{theorem}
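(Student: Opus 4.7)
The theorem decomposes into a lower bound on $|E_{vit}|$ and an upper bound on the number of loose edges; I would treat them separately.

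\emph{Lower bound.} I would exhibit a three-layer graph. Set $k=\lfloor (n-2)/2\rfloor$ and take the vertex set $\{s,t,u_1,\dots,u_k,v_1,\dots,v_k\}$. Put an edge $s\to u_i$ of capacity $k$ for every $i$, an edge $v_j\to t$ of capacity $k$ for every $j$, and the complete bipartite family of edges $u_i\to v_j$ of capacity $1$. The three natural layered cuts $\{s\}$, $\{s\}\cup\{u_1,\dots,u_k\}$, and $V\setminus\{t\}$ each have capacity $k^2$, so $f^*=k^2$. Removing any one of the $k^2+2k$ edges strictly drops the capacity of at least one of these three cuts, and hence strictly drops $f^*$. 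Every edge is therefore vital, and $|E_{vit}|=k^2+2k=\Omega(n^2)$.

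\emph{Upper bound on loose edges.} The plan rests on the key structural property announced just before the theorem: if $e$ is a loose edge and $C$ is a mincut for $e$, then no other loose edge is a contributing edge of $C$. I would derive this from \textsc{GenFlowCut}: applied to $(C,e)$, that property produces a maximum $(s,t)$-flow which saturates every contributing edge of $C$ except possibly $e$ itself. If another loose edge $e'\neq e$ also contributed to $C$, this same flow would saturate $e'$, making $e'$ tight and contradicting its loose-ness. Hence at most one loose edge contributes to each mincut of a loose edge.

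Combining this with Theorem \ref{thm : n-1 cuts}, I would fix a mincut cover $\mathcal{C}_{min}$ with $|\mathcal{C}_{min}|\le n-1$ and, for each loose edge $e$, select a witnessing cut $C(e)\in\mathcal{C}_{min}$. The preceding paragraph shows the map $e\mapsto C(e)$ is injective, giving at most $n-1$ loose edges. To sharpen to $n-2$, observe that (assuming $f^*>0$) there is at least one edge in $E_{min}$, and a mincut for such an edge necessarily has capacity $f^*$, so $\mathcal{C}_{min}$ must contain an $(s,t)$-mincut $C_0$. The max-flow min-cut theorem forces every contributing edge of $C_0$ to be saturated in every maximum flow, hence tight rather than loose, so a mincut for any loose edge cannot have capacity $f^*$ and cannot equal $C_0$. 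Therefore the injective map $e\mapsto C(e)$ lands in $\mathcal{C}_{min}\setminus\{C_0\}$, yielding at most $n-2$ loose edges.

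The main obstacle, in my view, is the structural property isolating a loose edge within its own mincut: it demands a careful invocation of \textsc{GenFlowCut} to certify that \emph{every} other contributing edge of the witnessing cut is saturated by one common maximum flow, since only that single flow can simultaneously force a putative second loose edge to be tight. Once this uniqueness-per-cut property is in hand, the counting against Theorem \ref{thm : n-1 cuts} together with the $(s,t)$-mincut trick is routine.
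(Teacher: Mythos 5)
Your proof is correct and follows essentially the same route as the paper: the upper bound rests on the same key lemma (at most one loose edge can contribute to any mincut for a vital edge, obtained from \textsc{GenFlowCut}(2)), the same counting against the mincut cover of Theorem \ref{thm : n-1 cuts}, and the same observation that the cover contains an $(s,t)$-mincut whose contributing edges are all saturated, sharpening $n-1$ to $n-2$. The only cosmetic difference is in the lower-bound construction, where you use finite balanced capacities on the outer layers while the paper attaches infinite-capacity edges from $s$ and into $t$; both make every edge of the complete bipartite middle layer vital.
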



\subsubsection{An Algorithm for Computing the Maximum Flow along an Edge} \label{app : maxflow at an edge}

In this section, we show that, given the value of maximum $(s,t)$-flow $f^*$, just one maximum $(s,t)$-flow computation on graph $G$ after {\em small} modification is sufficient for determining whether an edge is tight or loose. 

Let $e=(u,v)$ be any edge in $G$. To compute the maximum flow that edge $e$ can carry in any maximum $(s,t)$-flow,  we first construct a graph $G'$ from $G$ as follows. 

\paragraph*{Construction of graph $G'$:} Add vertices $s'$ and $t'$. These will be the source and sink vertices in $G'$. Add an edge $e_s$ from $s'$ to $s$ and an edge $e_t$ from $t$ to $t'$ each having capacity $f^*$. 
we remove edge $e$ and add two edges of capacity $w(e)$ each -- one edge is $e_v=(s',v)$ and the other edge is $e_u=(u,t')$. \\

It is easy to establish that $G'$ has  ${\mathcal O}(m)$ edges and ${\mathcal O}(n)$ vertices. The following lemma states a crucial property about the maximum ($s',t'$)-flow in $G'$. 

\begin{lemma}
There exists a maximum $(s',t')$-flow in $G'$ in which edges $e_s$ and $e_t$ carry flow of value $f^*$. 
\label{lem:flow-inG'-es-et-f*}
\end{lemma}
\begin{proof}
     Let $f$ be any maximum $(s,t)$-flow in $G$. We define a flow $f'$ in $G'$ as follows. For each edge $\hat{e}$ in $G'$, that is also present in $G$, $f'(\hat{e})=f(\hat{e})$. For edges $e_s$ and $e_t$, we define $f'(e_s)=f'(e_t)=f^*$. 
     For edges $e_u$ and $e_v$, we define $f'(e_u)=f'(e_v)=f(e)$. It is easy to verify that $f'$ is a valid flow in $G'$. This also establishes that the value of the maximum $(s',t')$-flow in $G'$ is at least $f^*$. 
      We now use any algorithm based on augmenting paths (e.g. the Edmond-Karp algorithm \cite{10.1145/321694.321699} or Dinitz's algorithm \cite{dinitz1970algorithm}) to compute the maximum $(s',t')$-flow in $G'$ with $f'$ as the initial flow. 
      Note that $e_s$ and $e_t$ are fully saturated in $f'$ and, therefore, there is no augmenting path from $s'$ to $t'$ in $G'_{f'}$ that passes through $e_s$ or $e_t$. So each augmenting path used by the algorithm to send additional flow from $s'$ to $t'$ will not alter the flow along $e_s$ and $e_t$. Hence, the resulting maximum ($s',t'$)-flow will have both $e_s$ and $e_t$ carrying flow of value $f^*$. 
\end{proof}


In the following 2 lemmas, we establish the relation between the value of maximum $(s',t')$-flow in $G'$ and the value of maximum flow that edge $e$ can carry in any maximum $(s,t)$-flow in $G$. 
\begin{lemma} \label{lem: part-1}
    If $f^*+\alpha$ is the maximum $(s',t')$-flow in graph $G'$, the maximum flow that $e$ can carry in any maximum $(s,t)$-flow is at least $\alpha$. 
\end{lemma}
\begin{proof}
    Using Lemma \ref{lem:flow-inG'-es-et-f*}, let $f'$ be a maximum $(s',t')$-flow in $G'$ such that $e_s$ and $e_t$ carry flow $f^*$. We now define a  $(s,t)$-flow $f$ in $G$ using $f'$ as follows. For each edge $\hat{e}$ of $G$, that is also present in $G'$, we define $f(\hat{e})=f'(\hat{e})$. We assign flow $f(e)=\alpha$. It is easy to verify that $f$ satisfies the conservation of flow and the capacity constraint. Hence $f$ is a valid ($s,t$)-flow in $G$. Moreover, since $f'(e_s)=f'(e_t)=f^*$, and these edges are not present in $G$, it follows that the value of flow leaving $s$
    (likewise entering $t$) in $G$ is $f^*$. Hence, $f$ is a maximum $(s,t)$-flow in $G$ such that $f(e)=\alpha$. So, the maximum flow that $e$ can carry in any maximum $(s,t)$-flow is at least $\alpha$.
\end{proof} 

\begin{lemma} \label{lem: part-2}
    If the maximum flow that $e$ can carry in any maximum $(s,t)$-flow is $\alpha$, the maximum $(s',t')$-flow in graph $G'$ is at least $f^*+\alpha$.
\end{lemma}
\begin{proof}    
    Let $\alpha$ be the maximum flow that $e$ can carry in any maximum $(s,t)$-flow in $G$. Let $f$ be one such maximum $(s,t)$-flow. We shall now demonstrate a $(s',t')$-flow of value $f^*+\alpha$ in $G'$ as follows. 
    For each edge $\hat{e}$ in $G'$, that is also present in $G$, $f'(\hat{e})=f(\hat{e})$. For edges $e_s$ and $e_t$, we define $f'(e_s)=f'(e_t)=f^*$. 
    For edges $e_u$ and $e_v$, we define $f'(e_u)=f'(v)=f(e)$. It is easy to verify that $f'$ satisfies the conservation of flow and the capacity constraint. Hence $f'$ is a valid $(s',t')$-flow in $G'$, and its value if $f^*+\alpha$. Hence the maximum $(s',t')$-flow has value at least $f^*+\alpha$. 
\end{proof}

Lemmas \ref{lem: part-1} and \ref{lem: part-2} together lead to the following lemma immediately. 
\begin{lemma} \label{lem: short - one maxflow computations for value}
    The value of maximum $(s',t')$-flow in graph $G'$ is $f^*+\alpha$ if and only if the maximum flow that 
    $e$ can carry in any maximum $(s,t)$-flow is $\alpha$. 
\end{lemma}
   
It requires ${\mathcal O}(1)$ time to construct graph $G'$ from $G$. Observe that computing a maximum $(s',t')$-flow in $G'$ is asymptotically the same as computing a maximum $(s,t)$-flow in $G$. Therefore, it follows from Lemma \ref{lem: short - one maxflow computations for value} that, given any maximum $(s,t)$-flow in $G$, we can compute the maximum flow that any edge can carry in $G$ using just one maximum $(s,t)$-flow computation. This completes the proof of the following theorem.  

\begin{theorem} \label{thm : maxflow at an edge}
     For any directed weighted graph $G$ on $n$ vertices and $m$ edges with a designated source vertex $s$ and a designated sink vertex $t$, there is an algorithm that, given any maximum $(s,t)$-flow in $G$ and an edge $e$, can compute the maximum flow that edge $e$ can carry in any maximum $(s,t)$-flow using one maximum $(s,t)$-flow computation.
\end{theorem}
\subsection{Applications of our algorithm}
 We present the following important applications of our algorithm stated in Theorem \ref{thm : computing all vital edges}.
    \subsubsection{Computing mincut cover and Preprocessing time for ${\mathcal T}_{vit}(G)$} 
We construct data structure ${\mathcal T}_{vit}(G)$ from Theorem \ref{thm : reporting value} and, as a byproduct, we show that it also computes a mincut cover for all vital edges. 
It follows from the design of algorithm in Theorem \ref{thm : computing all vital edges} that it can compute a mincut and its capacity for every vital edge of $G$.
Now, given a mincut for all vital edges, we sort all the vital edges in the increasing order of the capacities of mincuts for them. Let ${\mathbb S}$ be the sorted list of vital edges.

We now present an iterative algorithm to compute tree ${\mathcal T}_{vit}(G)$ 
using the sorted list ${\mathbb S}$ of all vital edges. Similar to Algorithm \ref{alg : hierarchy tree}, we begin by creating a root node $r$ and assign vertex set $V$ to $r$. Let us consider the first edge $e_1$ from the sorted list ${\mathbb S}$, which must be contributing to an $(s,t)$-mincut. The mincut $C_1$ for edge $e_1$ partitions set $V$ into two subsets -- $C_1\cap V$ and $\overline{C_1}\cap V$. We create two children of the root node $r$ and assign vertex set $C_1\cap V$ to the left child and vertex set $\overline{C_1}\cap V$ to the right child. At any $i^{th}$ step, we select an edge $e_i$ from the sorted list ${\mathbb S}$. If both endpoints of edge $e_i$ belong to two different leaf nodes in the tree, then it follows from the construction that a mincut for edge $e_i$ is already considered. 
Suppose both endpoints of $e_i$ belong to the same leaf node, say $\mu$, with vertex set $U$ associated with $\mu$. We create two children of $\mu$. Then associate vertex set $C_i\cap U$ to the left child and $\overline{C_i}\cap U$ to the right child of $\mu$, where $C_i$ is a mincut for edge $e_i$. The process is repeated for all edges in the sorted list ${\mathbb S}$.

It is shown that the hierarchy tree ${\mathcal T}_{vit}(G)$ has at most $n-1$ internal nodes. Moreover, at every step in the construction, we compute two intersections of vertex sets if a new internal node is created. Therefore, given a mincut and its capacity for every vital edge, the process of constructing ${\mathcal T}_{vit}(G)$ requires ${\mathcal O}(m\log{n}+n^2)$ time. Since the best-known algorithm for computing a maximum $(s,t)$-flow is ${\mathcal O}(mn\log_{m/n \log{n}}{n})$ \cite{DBLP:journals/jal/KingRT94}, it leads to the following theorem. 
\begin{theorem} \label{thm : t vit G computation}
    For any directed weighted graph $G$ on $n$ vertices and $m$ edges with a designated source vertex $s$ and a designated sink vertex $t$, there is an algorithm that can construct tree ${\mathcal T}_{vit}(G)$  using ${\mathcal O}(n)$ maximum $(s,t)$-flow computations. It also computes a mincut cover for all vital edges.
\end{theorem}

\subsubsection{Computing $k$ most vital edges}
Given tree ${\mathcal T}_{vit}(G)$, the vitality of each edge can be computed in ${\mathcal O}(m)$ time. Let us consider a \textsc{MaxHeap} data structure ${\mathcal H}$ of size ${\mathcal O}(m)$ that stores the vitality of every edge. We can construct the \textsc{MaxHeap} ${\mathcal H}$ in ${\mathcal O}(m)$ time. Finally, to report the $k$ most vital edges, we remove the maximum element of the heap and update the heap $k$ times. Therefore, given the vitality of every edge, the process of reporting the $k$ most vital edges takes ${\mathcal O}(m+k\log{k})$ time. This completes the proof of the following theorem.
\begin{theorem}
    For any directed weighted graph $G$ on $n$ vertices and $m$ edges with a designated source vertex $s$ and a designated sink vertex $t$, there is an algorithm that, given any integer $k\in [m]$, can compute $k$ most vital edge of $G$ using ${\mathcal O}(n)$ maximum $(s,t)$-flow computations. 
\end{theorem}
\section{Compact Structures for All Mincuts for All Vital Edges}
\label{sec: compact-structures}
The set containing all $(s,t)$-mincuts satisfies two important properties -- $(i)$ it is closed under both intersection and union and $(ii)$ an edge contributing to an $(s,t)$-mincut can never be an incoming edge to another $(s,t)$-mincut. These two properties are exploited crucially in the design of DAG structures for compactly storing and characterizing all $(s,t)$-mincuts \cite{DBLP:journals/mp/PicardQ80, baswana2023minimum+} using $1$-transversal cuts. 
Unfortunately, none of these properties holds  for the set of all mincuts for all vital edges
(refer to Figure \ref{fig : gamma}$(ii)$).   
This makes the problem of designing compact structures for mincuts for all vital edges challenging.  
We present two structures for storing and characterizing all mincuts for all vital edges -- one is a single DAG that provides a \textit{partial} characterization and the other consists of a set of ${\mathcal O}(n)$ DAGs that provides a complete characterization.

\subsection{An O(m) Space DAG for Partial Characterization} \label{sec: partial characterization}

In this section, we present a DAG structure that provides a partial characterization using $1$-transversal cuts.
  We begin by defining the following relation on any given vertex set $S\subseteq V$ and any given set of cuts ${\mathcal A}$ of graph $G$.
\begin{definition}[Relation ${\mathcal R}_S({\mathcal A})$] \label{def : relation R}
Let $S\subseteq V$ and ${\mathcal A}$ be any set of cuts. Any two vertices $x,y\in S$ are said to be related by ${\mathcal R}_S({\mathcal A})$ if and only if $x$ and $y$ are not separated by any cut in ${\mathcal A}$. 
\end{definition}
It is easy to show that ${\mathcal R}_S({\mathcal A})$ defines an equivalence relation on the vertex set $S$; and the corresponding equivalence classes form a disjoint partition of $S$. 
For $S=V$, we construct a quotient graph $Q(G)$ from $G$ by contracting each equivalence class of relation ${\mathcal R}_V({\mathcal A})$ into a single node.  Without causing any ambiguity, in $Q(G)$, we denote the node containing source vertex $s$ by $s$ and the node containing sink vertex $t$ by $t$. The following theorem is immediate from the construction of $Q(G)$.
\begin{theorem}\label{thm : quotient graph for cuts}
Let $G=(V,E)$ be a directed weighted graph on $m=|E|$ edges, 
and let ${\mathcal A}$ be any set of cuts in $G$. There exists an ${\mathcal O}(m)$ space quotient graph $Q(G)$ of $G$ 
such that $C\in {\mathcal A}$  if and only if $C$ is a cut in $Q(G)$ with the same capacity.
\end{theorem}
For the set of all mincuts for all vital edges of $G$, the following theorem immediately follows from Theorem \ref{thm : quotient graph for cuts}.
\begin{theorem}\label{thm : quotient graph}
For a directed weighted graph $G=(V,E)$ on $m=|E|$ edges with a designated source vertex $s$ and a designated sink vertex $t$, there exists an ${\mathcal O}(m)$ space quotient graph $Q_{vit}(G)$ that preserves all mincuts for each vital edge $e$ in $G$ such that 
$C$ is a mincut for $e$ in $G$ if and only if $C$ is a mincut for $e$ in $Q_{vit}(G)$.
\end{theorem}




Let $E_{min}$ be the set of all edges that contribute to an $(s,t)$-mincut.
  It is shown in \cite{baswana2023minimum+} that a compact structure (alternative to ${\mathcal D}_{PQ}(G)$ in \cite{DBLP:journals/mp/PicardQ80}) for all mincuts for all edges from $E_{min}$  can be designed as follows.\\


\noindent
${\mathcal D}_{\lambda}(G)$: {\em In the quotient graph} {\text(Theorem \ref{thm : quotient graph for cuts})} {\em obtained from $G$ for all mincuts for all edges from set $E_{min}$, flip the direction of each edge that is not contributing to an $(s,t)$-mincut.}\\

The technique of flipping each noncontributing edge ensures that ${\mathcal D}_{\lambda}(G)$ is acyclic and each $(s,t)$-mincut is a $1$-transversal cut in ${\mathcal D}_{\lambda}(G)$.
Applying the same technique in graph $Q_{vit}(G)$ from Theorem \ref{thm : quotient graph}, 
we obtain a graph $Q'(G)$ from $Q_{vit}(G)$. However, unlike ${\mathcal D}_{\lambda}(G)$, apparently $Q'(G)$ does not serve our objective as shown in the following theorem  
 (Proof is in Appendix \ref{app : unbounded transversality}). 
 \begin{theorem} \label{thm : n transversality}
    There exists a graph $H$, on $n$ vertices with a designated source vertex $s$ and a designated sink vertex $t$, such that $Q'(H)$ contains cycles 
    and there is a mincut for a vital edge in $H$ that appears as an $\Omega(n)$-transversal cut in $Q'(H)$. 
\end{theorem}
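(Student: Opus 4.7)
The plan is to exhibit, for each $n$, an explicit graph $H_n$ on $\Theta(n)$ vertices whose quotient-and-flip graph $Q'(H_n)$ contains a directed cycle and admits a mincut of a vital edge that is $\Omega(n)$-transversal. The construction concatenates $k=\Theta(n)$ copies of the gadget of Figure \ref{fig : gamma non zero} in series along the $s$--$t$ backbone. First I would isolate a constant-size ``$\Gamma$-gadget'' with two interface vertices $u_{\text{in}},u_{\text{out}}$ and two vital edges $e,e'$ whose mincuts $C,C'$ are incomparable, together with a $\Gamma$-edge $\gamma$ that contributes to $C$ and is incoming to $C'$. By the results developed earlier using \textsc{GenFlowCut}, $\gamma$ is nonvital, but since $\gamma$ contributes to at least one mincut for a vital edge, the definition of $Q'(\cdot)$ does \emph{not} flip $\gamma$; this is the mechanism that keeps backward edges in the graph.

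Next I would concatenate $k$ copies of the gadget, identifying $u_{\text{out}}$ of the $i$-th copy with $u_{\text{in}}$ of the $(i{+}1)$-th, and attach edges from $s$ to, and from the interior to $t$, of capacities chosen from a rapidly growing sequence so that the local mincut structure inside each gadget survives globally. The scaling has to be strong enough that no $(s,t)$-cut of small capacity can ``mix'' vertices lying in non-adjacent gadgets, so the equivalence relation defining $Q$ does not collapse gadgets; hence $|V(Q(H_n))|=\Theta(n)$ and all $k$ copies of $\gamma$ are preserved in $Q'(H_n)$ with their original orientation.

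Having this in hand, I would pick one vital edge $\tilde e$ lying near the middle of the chain and identify a mincut $\tilde C$ for $\tilde e$ that bisects every gadget on the same ``side''; then $\tilde C$ cuts all $k$ copies of $\gamma$ in the same relative direction. By construction there is a simple directed path $\pi$ in $Q'(H_n)$ that alternately traverses forward vital edges and (unflipped) $\Gamma$-edges, entering and leaving $\tilde C$ at each copy, so $\pi$ crosses $\tilde C$ at least $\Omega(k)=\Omega(n)$ times, giving the required $\Omega(n)$-transversality. The cycle claim is then immediate already at the level of a single gadget: the forward vital edge inside a copy together with the non-flipped $\Gamma$-edge going backward across the same two sides closes into a directed cycle in $Q'(H_n)$.

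The main obstacle is the capacity engineering in the concatenation: one has to enforce simultaneously that (a) each designated $e,e'$ remains vital in $H_n$ with the prescribed mincut being the \emph{global} mincut for it, (b) each $\gamma$ remains nonvital yet contributing to some mincut for a vital edge, so it escapes the flip, and (c) the quotient $Q(H_n)$ does not accidentally collapse the gadgets. I expect these three constraints can be met by assigning the $i$-th gadget's external edges a capacity $M^i$ for a sufficiently large $M$, so that any cut separating two points in the $i$-th gadget cannot afford to reach into the $(i{+}1)$-th gadget; once this is verified, the three claims established in the paragraphs above combine to prove the theorem.
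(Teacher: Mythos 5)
Your high-level mechanism is the right one -- an edge that contributes to a mincut for one vital edge while being incoming to a mincut for another escapes the flip in $Q'(\cdot)$, and such unflipped backward edges are exactly what create cycles and extra crossings. However, the amplification step has an internal inconsistency that I do not see how to repair. You require (your constraint (c), enforced by capacities $M^i$) that \emph{any cut separating two points of the $i$-th gadget cannot afford to reach into the $(i{+}1)$-th gadget}. But the cut $\tilde C$ you then invoke is a \emph{mincut} for a single vital edge $\tilde e$ whose two endpoints lie inside one gadget; by your own localization property it cannot bisect the neighbouring gadgets, let alone all $k$ of them. Since each gadget has constant size, a cut that bisects only $O(1)$ gadgets meets any simple path only $O(1)$ times, so the $\Omega(n)$-transversality claim collapses. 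The deeper point is that a \emph{serial} composition cannot work: to separate $s$ from $t$ in a chain one only has to pay for one link, so the cheapest cut containing $\tilde e$ will always be the local one that puts whole gadgets on one side or the other; forcing a minimum cut to incur positive cost in $\Omega(n)$ gadgets simultaneously requires those gadgets to sit in \emph{parallel} between $s$ and $t$.

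That is precisely what the paper's construction does. It builds a single ladder on vertex sets $A=\{u_1,\dots,u_n\}$ and $B=\{v_1,\dots,v_n\}$ with edges $(s,u_i)$, $(v_i,t)$, parallel rungs $(u_i,v_i)$, forward edges $(u_i,u_{i+1})$ and backward edges $(v_{i+1},u_i)$, with capacities tuned so that \emph{every} edge contributes to some mincut for a vital edge (hence $Q'(H)=H$, nothing is removed or flipped, and the cycles $u_i\to u_{i+1}\to v_{i+1}\to u_i$ survive). The one cut $C=\{s\}\cup A$ is a mincut for the single vital edge $(u_1,v_1)$, has all $n$ rungs contributing and all $n-1$ backward edges incoming, and the zig-zag path $s,u_n,v_n,u_{n-1},v_{n-1},\dots,u_1,v_1,t$ crosses it $2n-1$ times. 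So the $\Omega(n)$ crossings come from one cut meeting $\Omega(n)$ parallel vital edges interleaved with $\Omega(n)$ incoming edges, not from $\Omega(n)$ separate gadgets each crossed once. To salvage your plan you would have to arrange your gadget copies in parallel between $s$ and $t$ and exhibit a single vital edge whose mincut is forced to cut every copy, at which point you have essentially rediscovered the ladder; as written, the proposal does not establish the theorem.
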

 We observe that the source of $\Omega(n)$-transversality in Theorem \ref{thm : n transversality} is the presence of edges that are contributing to a mincut for a vital edge as well as incoming to another mincut for a vital edge, which are called \textit{$\Gamma$-edges}. 
 So, can we remove all the $\Gamma$-edges? 
  At first glance, it seems that their removal might lead to a reduction in the capacity of $(s,t)$-mincut as stated in Note \ref{note : removal of gamma edges}. Interestingly, we establish the following lemma to ensure that the capacity of $(s,t)$-mincut remains unchanged even after the removal of all $\Gamma$ edges. 

\begin{lemma}\label{lem : gamma is irrelevant}
    Let $C_1$ and $C_2$ be mincuts for vital edges $e_1$ and $e_2$, respectively. Let $\Gamma_1$ and $\Gamma_2$ denote the set of edges that are from $C_2\setminus C_1$ to $C_1\setminus C_2$ and from $C_1\setminus C_2$ to $C_2\setminus C_1$ respectively. Then, the removal of all edges belonging to $\Gamma_1$ and $\Gamma_2$ does not change the capacity of $(s,t)$-mincut. 
\end{lemma}

\begin{proof}
\begin{figure}
 \centering
    \includegraphics[width=0.7\textwidth]{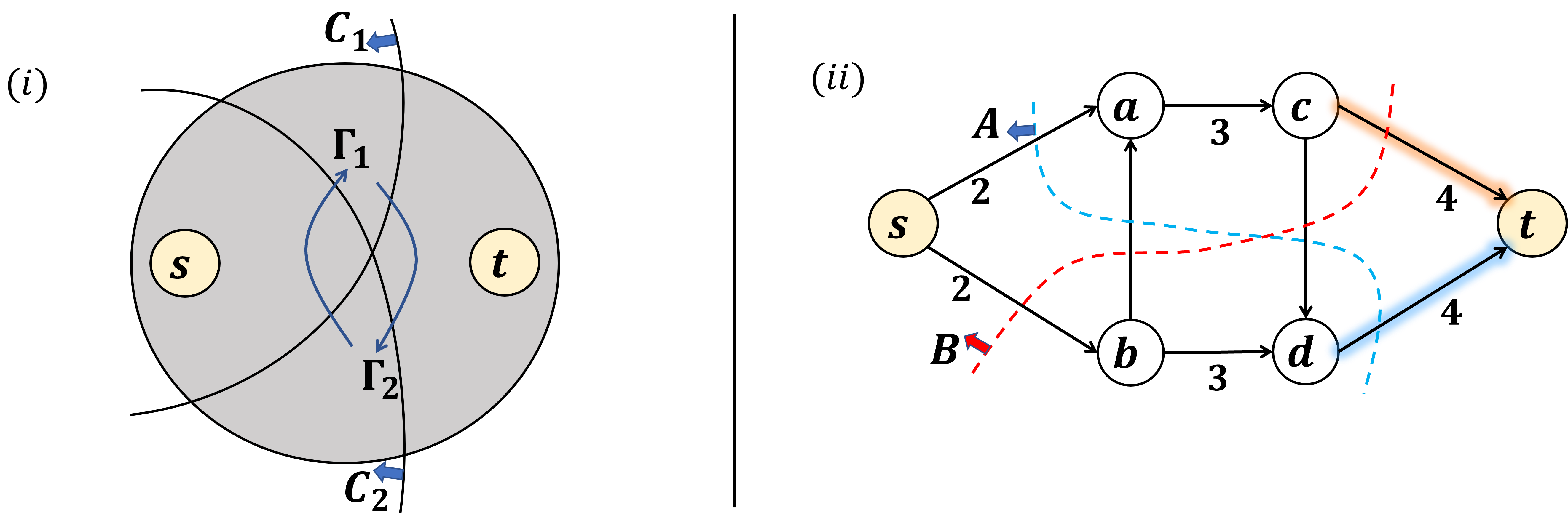} 
  \caption{($i$) $\Gamma$-edges between a pair of mincuts $C_1$ and $C_2$ for a pair of vital edges. ($ii$) mincuts $A$ and $B$ for vital edges $(c,t)$ and $(d,t)$, respectively, are not closed under union. Moreover, edge $(b,a)$ (likewise edge $(c,d)$) contributes to $A$ (likewise to $B$) and is incoming to $B$ (likewise to $A$).} \label{fig : gamma} 
\end{figure}
   If one of $C_1$ and $C_2$ is a proper subset of the other, then $\Gamma_1=\Gamma_2=\emptyset$. Therefore, we need to consider only the case when $C_1$ and $C_2$ are crossing, that is, $C_1\cap C_2$, $\overline{C_1\cup C_2}$, $C_1\setminus C_2$, and $C_2\setminus C_1$ are nonempty. 

      First we show that $e_1\notin \Gamma_2$ and $e_2\notin \Gamma_1$. Without loss of generality, assume that $e_1\in \Gamma_2$. Observe that $e_1$ is an incoming edge of $C_2$. Since $C_2$ is a mincut for vital edge $e_2$, therefore, it follows from Lemma \ref{lem:incoming-edge-irrelevant} that $e_1$ is a nonvital edge, a contradiction.   
      
      Suppose exactly one of $\Gamma_1$ and $\Gamma_2$ is nonempty. Without loss of generality, assume that $\Gamma_1=\emptyset$. Since $C_2$ is a mincut for vital edge $e_2$, therefore, it follows from Lemma \ref{lem:incoming-edge-irrelevant} that removal of all edges belonging to $\Gamma_2$ does not decrease the capacity of $(s,t)$-mincut.  
      
      Suppose both $\Gamma_1$ and $\Gamma_2$ are nonempty. We refer to Figure \ref{fig : gamma}($i$) for illustration.
     Let $G_1$ be the graph obtained after removing all edges from $\Gamma_2$. Since all edges of $\Gamma_2$ are incoming to $C_2$, it follows from 
     Lemma \ref{lem:incoming-edge-irrelevant} that the capacity of $(s,t)$-mincut in $G_1$ is the same as that in $G$. Therefore, for each $(s,t)$-cut $C$ in $G_1$, $c(C)\ge f^*$. Hence, $C_1$ in $G_1$ has a capacity at least $f^*$. 
     It follows from Lemma \ref{lem:vital-edge-remains-vital} that $e_1$ remains vital and $C_1$ continues to remain a mincut for $e_1$ in $G_1$.  However, the capacity of $(s,t)$-cut $C_1$ in $G_1$ is strictly less than the capacity of $(s,t)$-cut $C_1$ in $G$ since all edges of $\Gamma_2$ are contributing edges of $C_1$ in $G$. 
     Therefore, $w_{min}(e_1)$ in $G_1$, denoted by $w_{min}$, is strictly greater than the $w_{min}(e_1)$ in $G$. We now obtain a graph $G_2$ from $G_1$ after reducing the capacity of edge $e_1$ from $w(e_1)$ to $w_{min}$. Observe that in graph $G_2$, it follows from Theorem \ref{thm : a special assignment of flow}$(2)$ that the capacity of $(s,t)$-mincut is the same as the capacity of $(s,t)$-mincut in $G_1$, that is $f^*$. Therefore, $C_1$ is an $(s,t)$-mincut and all edges of $\Gamma_1$ are incoming to it. Hence, they are nonvital edges. It follows from Lemma \ref{lem:incoming-edge-irrelevant} that their removal does not decrease the $(s,t)$-mincut capacity.  
\end{proof}

Now exploiting Lemma \ref{lem : gamma is irrelevant}, we construct a graph ${\mathcal D}_{vit}(G)$  for storing and partially characterizing all mincuts for all vital edges of graph $G$ as follows.

\paragraph*{Construction of ${\mathcal D}_{vit}(G):$} 
 Graph ${\mathcal D}_{vit}(G)$ is obtained from graph $Q_{vit}(G)$ (Theorem \ref{thm : quotient graph}) by executing the following two steps in order.
 \begin{enumerate}
    \item For each edge $(u,v)$ that belongs to $\Gamma$-edges, remove it from $Q_{vit}(G)$. 
\item For each edge $(u,v)$ that is incoming to a mincut for a vital edge in $Q_{vit}(G)$, replace it with an edge $(v,u)$ of capacity $0$. 
 \end{enumerate} 


\noindent
 The following definition of cut is to be used in establishing properties of ${\mathcal D}_{vit}(G)$.
\begin{definition} [relevant cut for an edge] \label{def : relevant for an edge}
    An $(s,t)$-cut $C$ is said to be a relevant cut for edge $e$ if $e$ contributes to $C$ and $c(C)-w(e)<f^*$.  
\end{definition}
By Definition \ref{def : relevant and mincut for an edge} and Definition \ref{def:relevant-edges}, an edge $e$ is vital if and only if there exists a relevant cut for edge $e$. We state the following lemma 
about ${\mathcal D}_{vit}(G)$.
\begin{lemma} \label{lem : property of dvit}
    Graph ${\mathcal D}_{vit}(G)$ for a directed weighted graph $G$ satisfies the  following properties.
    \begin{enumerate}
        \item A vital edge of $G$ appears as a vital edge in ${\mathcal D}_{vit}(G)$.
        \item Capacity of $(s,t)$-mincut in $G$ is the same as the capacity of $(s,t)$-mincut in  ${\mathcal D}_{vit}(G)$.
        \item  Each mincut for a vital edge $e$ in $G$ appears as a relevant cut for vital edge $e$ in ${\mathcal D}_{vit}(G)$. 
    \end{enumerate}
\end{lemma}
\begin{proof}
    Lemma \ref{lem : gamma is irrelevant} states that the removal of all the $\Gamma$-edges  in Step 1
    in the construction of ${\mathcal D}_{vit}(G)$ does not decrease the capacity of $(s,t)$-mincut. Observe that, after removal of all $\Gamma$-edges, if there is an edge that is an incoming edge of a mincut for a vital edge, then it cannot be a contributing edge of a mincut for any other vital edge. Moreover, the removal of an incoming edge and adding an edge of capacity $0$ to an $(s,t)$-cut does not change its capacity. So, the capacity of $(s,t)$-mincut does not change after executing Step 2. Therefore, the capacity of $(s,t)$-mincut in $G$ is also the capacity of $(s,t)$-mincut in ${\mathcal D}_{vit}(G)$. Now, it follows from Lemma \ref{lem:vital-edge-remains-vital} and proof of Lemma \ref{lem : gamma is irrelevant} that a mincut for a vital edge in $G$ remains a relevant cut for vital edge $e$ in ${\mathcal D}_{vit}(G)$. Moreover, $Q_{vit}(G)$ preserves all vital edges and, in the construction of ${\mathcal D}_{vit}(G)$, we only removed nonvital  edges from $Q_{vit}(G)$. 
    Hence all vital edges of $G$ are also preserved in ${\mathcal D}_{vit}(G)$. 
\end{proof}

\begin{remark}
    It is easy to observe that all the properties mentioned in Lemma \ref{lem : property of dvit} are also satisfied in the quotient graph $Q_{vit}(G)$. It is interesting to see that, even after removing $\Gamma$-edges and flipping certain edges in $Q_{vit}(G)$, each of the properties is preserved in ${\mathcal D}_{vit}(G)$ as well. 
\end{remark}

We now state the following lemma that is immediate from the construction of ${\mathcal D}_{vit}$.
\begin{lemma} \label{lem : empty set gamma}
    For any edge $(u,v)$ in ${\mathcal D}_{vit}(G)$,
    \begin{enumerate}
        \item There exists a mincut $C$ for a vital edge $e$ such that $u\in C$ and $v\in \overline{C}$.
        \item There does not exist any pair of mincuts $\{C_1,C_2\}$  for any pair of vital edges $\{e_1,e_2\}$ such that edge $(u,v)$ lies between $C_1\setminus C_2$ and $C_2\setminus C_1$. 
    \end{enumerate}
\end{lemma}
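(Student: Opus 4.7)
The plan is to follow the two-step construction of ${\mathcal D}_{vit}(G)$ from $Q(G)$ and analyze each edge $(u,v)$ of ${\mathcal D}_{vit}(G)$ according to its origin. Either $(u,v)$ is an edge of $Q(G)$ that survived Step $1$ (so it is not a $\Gamma$-edge) and was not flipped in Step $2$ (so it was not incoming to any mincut for a vital edge in $Q(G)$), or else it is a new capacity-$0$ edge added in Step $2$ as the reverse of some edge $(v,u)$ of $Q(G)$ that was incoming to a mincut for a vital edge. By Theorem~\ref{thm : quotient graph}, mincuts for vital edges in $Q(G)$ are in bijection with those in $G$, so we can move freely between the two graphs.

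For Property $1$, in the first case $u$ and $v$ are distinct nodes of $Q(G)$, hence they belong to different equivalence classes of the relation $R$. By the definition of $R$, some mincut $C \in {\mathcal C}(E_{vit})$ separates them. Since $(u,v)$ was not flipped in Step $2$, $(u,v)$ cannot be incoming to $C$; this forces $u \in C$ and $v \in \overline{C}$. In the second case, the preimage edge $(v,u)$ was incoming to some mincut $C$ for a vital edge, so by the definition of ``incoming'' we have $v \in \overline{C}$ and $u \in C$, giving Property $1$ directly.

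For Property $2$, assume toward contradiction that there exist mincuts $C_1, C_2$ for vital edges $e_1, e_2$ with $u \in C_1 \setminus C_2$ and $v \in C_2 \setminus C_1$ (the symmetric case $u \in C_2 \setminus C_1$, $v \in C_1 \setminus C_2$ is handled identically). In the first case, the edge $(u,v)$ of $Q(G)$ leaves $C_1$ (so it contributes to $C_1$) and enters $C_2$ (so it is incoming to $C_2$). By the definition of $\Gamma$-edges it therefore belongs to $\Gamma$-edges, so it should have been removed in Step $1$, contradicting its presence in ${\mathcal D}_{vit}(G)$. In the second case, the preimage $(v,u)$ in $Q(G)$ leaves $C_2$ and enters $C_1$, so it is likewise a $\Gamma$-edge; it would have been removed in Step $1$ before Step $2$ could introduce the reverse edge $(u,v)$, again a contradiction.

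Both properties therefore unravel from the construction in a completely routine way. The only subtlety is to keep straight the orientation of the preimage edge when dealing with capacity-$0$ additions, and to use that Step $1$ strictly precedes Step $2$ (so any edge that qualifies as a $\Gamma$-edge at the outset is gone before the flipping step). No deeper properties of maximum flows or of ${\mathcal D}_{vit}(G)$ beyond the definitions just quoted are needed, matching the paper's claim that the lemma is immediate from the construction.
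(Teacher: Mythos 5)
Your proof is correct and is exactly the argument the paper has in mind: the paper offers no explicit proof, stating only that the lemma is ``immediate from the construction of ${\mathcal D}_{vit}(G)$,'' and your case analysis (surviving original edge versus reversed capacity-$0$ edge, with Step 1 removing $\Gamma$-edges before Step 2 flips the remaining incoming edges) is the intended unravelling of that construction. No gaps.
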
    

We establish the acyclicity property for the structure ${\mathcal D}_{vit}(G)$ using Lemma \ref{lem : empty set gamma} as follows. 

\begin{lemma}\label{lem : acyclicity}
    ${\mathcal D}_{vit}(G)$ is a directed acyclic graph.
\end{lemma}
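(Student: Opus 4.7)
The plan is to prove the lemma by contradiction, using Lemma \ref{lem : empty set gamma} as the main workhorse. Suppose for the sake of contradiction that ${\mathcal D}_{vit}(G)$ contains a directed cycle, say $\nu_0 \to \nu_1 \to \cdots \to \nu_{k-1} \to \nu_0$ (with indices taken modulo $k$). By Lemma \ref{lem : empty set gamma}(1), for every edge $(\nu_i,\nu_{i+1})$ of this cycle, there exists a mincut $C_i$ for some vital edge such that $\nu_i \in C_i$ and $\nu_{i+1} \in \overline{C_i}$.

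Next I would focus on a single reference mincut, say $C_0$, and track the position of the cycle vertices relative to $C_0$. By the choice of $C_0$, the cycle starts on the $s$-side, crosses to the $\overline{C_0}$-side at the edge $(\nu_0,\nu_1)$, and must eventually cross back to the $C_0$-side in order to return to $\nu_0 \in C_0$. Concretely, there exists an index $j \in \{1,\dots,k-1\}$ such that $\nu_j \in \overline{C_0}$ and $\nu_{j+1} \in C_0$.

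Now I would apply Lemma \ref{lem : empty set gamma}(1) again to the edge $(\nu_j,\nu_{j+1})$, obtaining a mincut $C_j$ for a vital edge with $\nu_j \in C_j$ and $\nu_{j+1} \in \overline{C_j}$. Combining this with the previous paragraph gives
\[
\nu_j \;\in\; C_j \cap \overline{C_0} \;=\; C_j \setminus C_0, \qquad \nu_{j+1} \;\in\; \overline{C_j} \cap C_0 \;=\; C_0 \setminus C_j.
\]
Hence the edge $(\nu_j,\nu_{j+1})$ lies between $C_j\setminus C_0$ and $C_0 \setminus C_j$, where $C_0$ and $C_j$ are mincuts for vital edges. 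This directly contradicts Lemma \ref{lem : empty set gamma}(2), completing the proof.

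I do not anticipate a serious obstacle here: the structural heavy lifting has already been done in Lemma \ref{lem : gamma is irrelevant} and Lemma \ref{lem : empty set gamma}, which together guarantee that $\Gamma$-edges have been expunged from ${\mathcal D}_{vit}(G)$. The only subtlety worth double-checking is that the argument is agnostic to whether the edge $(\nu_j,\nu_{j+1})$ was originally present in $Q(G)$ or was introduced as a capacity-$0$ reverse edge in Step 2 of the construction; in either case it is an edge of ${\mathcal D}_{vit}(G)$, so Lemma \ref{lem : empty set gamma} applies verbatim.
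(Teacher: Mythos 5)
Your proof is correct and follows essentially the same route as the paper: pick one cycle edge, use Lemma \ref{lem : empty set gamma}(1) to obtain a mincut $C_0$ separating its endpoints, observe the cycle must re-enter $C_0$ at some edge, apply part (1) again to that incoming edge, and derive a contradiction with Lemma \ref{lem : empty set gamma}(2). Your explicit tracking of the index $j$ where the cycle crosses back is just a more detailed rendering of the paper's observation that the edge-set of $C_0$ must intersect the cycle at an incoming edge.
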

\begin{proof}
    Suppose there is a cycle $O=\langle v_0,v_1,\ldots,v_k,v_0\rangle$ in ${\mathcal D}_{vit}(G)$. Let $(u,v)$ be an edge in ${O}$. For the edge $(u,v)$, it follows from Lemma \ref{lem : empty set gamma}(1) that there is a mincut $C$ for an edge $e\in E_{vit}$ such that $u\in C$ and $v\in \overline{C}$. 
    Observe that the edge-set of $C$ must intersect cycle $O$ at an edge, say $(x,y)$, such that $(x,y)$ is an incoming edge of $C$, that is, $y\in C$ and $x\in \overline{C}$. It, again, follows from Lemma \ref{lem : empty set gamma}(1) that there is a mincut $C'$ for a vital edge such that $x\in C'$ and $y\in \overline{C'}$. Therefore, edge $(x,y)$ is a contributing edge of $C'$ and an incoming edge of $C$,  which contradicts Lemma \ref{lem : empty set gamma}(2).  
\end{proof}
For a graph $G$, not only ${\mathcal D}_{vit}(G)$ is structurally rich but also provides an interesting characteristic of each mincut for every vital edge of $G$ as follows. 
\begin{lemma} \label{lem : transversality}
    Each mincut for a vital edge in $G$ appears as a $1$-transversal cut in ${\mathcal D}_{vit}(G)$.
\end{lemma}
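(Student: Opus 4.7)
The plan is a short proof by contradiction, combining acyclicity of ${\mathcal D}_{vit}(G)$ (Lemma \ref{lem : acyclicity}) with the two parts of Lemma \ref{lem : empty set gamma}. I would fix a mincut $C$ for a vital edge $e$ of $G$ and assume, toward a contradiction, that some simple path $P$ in ${\mathcal D}_{vit}(G)$ intersects the edge-set of $C$ at least twice.

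The first step is to argue that such a $P$ must contain some edge $(u,v)$ with $u\in\overline{C}$ and $v\in C$, that is, an edge of ${\mathcal D}_{vit}(G)$ directed from $\overline{C}$ into $C$. Because ${\mathcal D}_{vit}(G)$ is a DAG by Lemma \ref{lem : acyclicity}, $P$ is a directed simple path; therefore the side of $C$ on which the current vertex of $P$ lies flips at every crossing of the edge-set of $C$, and two or more crossings force at least one crossing oriented from $\overline{C}$ into $C$.

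With such an edge $(u,v)$ in hand, I would apply Lemma \ref{lem : empty set gamma}(1) to $(u,v)$ to obtain a mincut $C'$ for some vital edge $e'$ such that $u\in C'$ and $v\in\overline{C'}$. Combined with $u\in\overline{C}$ and $v\in C$, this at once gives $u\in C'\setminus C$ and $v\in C\setminus C'$, so the edge $(u,v)$ lies between $C'\setminus C$ and $C\setminus C'$. This is precisely the configuration forbidden by Lemma \ref{lem : empty set gamma}(2) for the pair of mincuts $\{C,C'\}$ of the vital edges $\{e,e'\}$, which delivers the required contradiction.

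I do not anticipate a real obstacle here; the essential work was already done in the construction of ${\mathcal D}_{vit}(G)$ (deleting $\Gamma$-edges and then flipping the remaining incoming edges of mincuts into capacity-$0$ edges) and in Lemma \ref{lem : empty set gamma}. The only conceptual step is the observation that, once acyclicity has forced the alternation of crossings, any violation of $1$-transversality immediately produces an incoming-to-$C$ edge of ${\mathcal D}_{vit}(G)$, and Lemma \ref{lem : empty set gamma} guarantees that no such edge can exist.
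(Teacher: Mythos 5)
Your proposal is correct and follows essentially the same argument as the paper: a violation of $1$-transversality forces an edge of ${\mathcal D}_{vit}(G)$ directed from $\overline{C}$ into $C$, and the two parts of Lemma \ref{lem : empty set gamma} then yield the forbidden configuration of an edge lying between $C'\setminus C$ and $C\setminus C'$. The only (harmless) differences are that you work with arbitrary simple paths and two crossings where the paper restricts to $(s,t)$-paths and three crossings, and that your appeal to acyclicity is unnecessary since the alternation of crossing orientations holds for any directed path.
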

\begin{proof}
    Let $C$ be a mincut for a vital edge $e$. Suppose edge-set of $C$ intersects an $(s,t)$-path $P=\langle v_0, v_1, \ldots, v_k \rangle$ at least thrice. This implies that there is at least one edge $(u,v)$ in path $P$ that is an incoming edge of $C$, that is $v\in C$ and $u\notin C$. It follows from Lemma \ref{lem : empty set gamma}(1) that there is a mincut $C'$ for a vital edge such that $u\in C'$ and $v\in \overline{C'}$, Therefore,edge $(u,v)$ is a contributing edge of $C'$ and an incoming edge of $C$,  which contradicts Lemma \ref{lem : empty set gamma}(2).  
\end{proof}
Lemma \ref{lem : property of dvit}, Lemma \ref{lem : acyclicity}, and Lemma \ref{lem : transversality} imply Theorem \ref{thm : dag and 1 transversal}.
 \subsubsection{For (s,t)-mincuts, Existing DAGs vs 
 the DAG in Theorem \ref{thm : dag and 1 transversal}} \label{sec : comparison} 
In this section, we show that existing structures for $(s,t)$-mincuts \cite{DBLP:journals/mp/PicardQ80, baswana2023minimum+} are just a special case of DAG ${\mathcal D}_{vit}(G)$ in Theorem \ref{thm : dag and 1 transversal}. Let $E_{min}\subseteq E_{vit}$ be the set consisting of all vital edges that are contributing to $(s,t)$-mincuts. For the set of all mincuts for each edge in $E_{min}$, the existing structures for $(s,t)$-mincuts \cite{DBLP:journals/mp/PicardQ80, baswana2023minimum+} are sufficient to store and characterize them in terms of $1$-transversal cuts. Suppose we construct DAG ${\mathcal D}_{vit}(G)$ in Theorem \ref{thm : dag and 1 transversal} for the set of vital edges $E_{min}$. We denote this graph by ${\mathcal D}_{vit}^{f^*}(G)$. We now show that not only DAG ${\mathcal D}_{vit}^{f^*}(G)$ stores and characterizes all mincuts for each edge in $E_{min}$ but also provides additional characteristics (refer to Table \ref{tab : Dvit vs Dpq and Dlambda}).  
 
 \begin{figure}[ht]
 \centering
    \includegraphics[width=\textwidth]{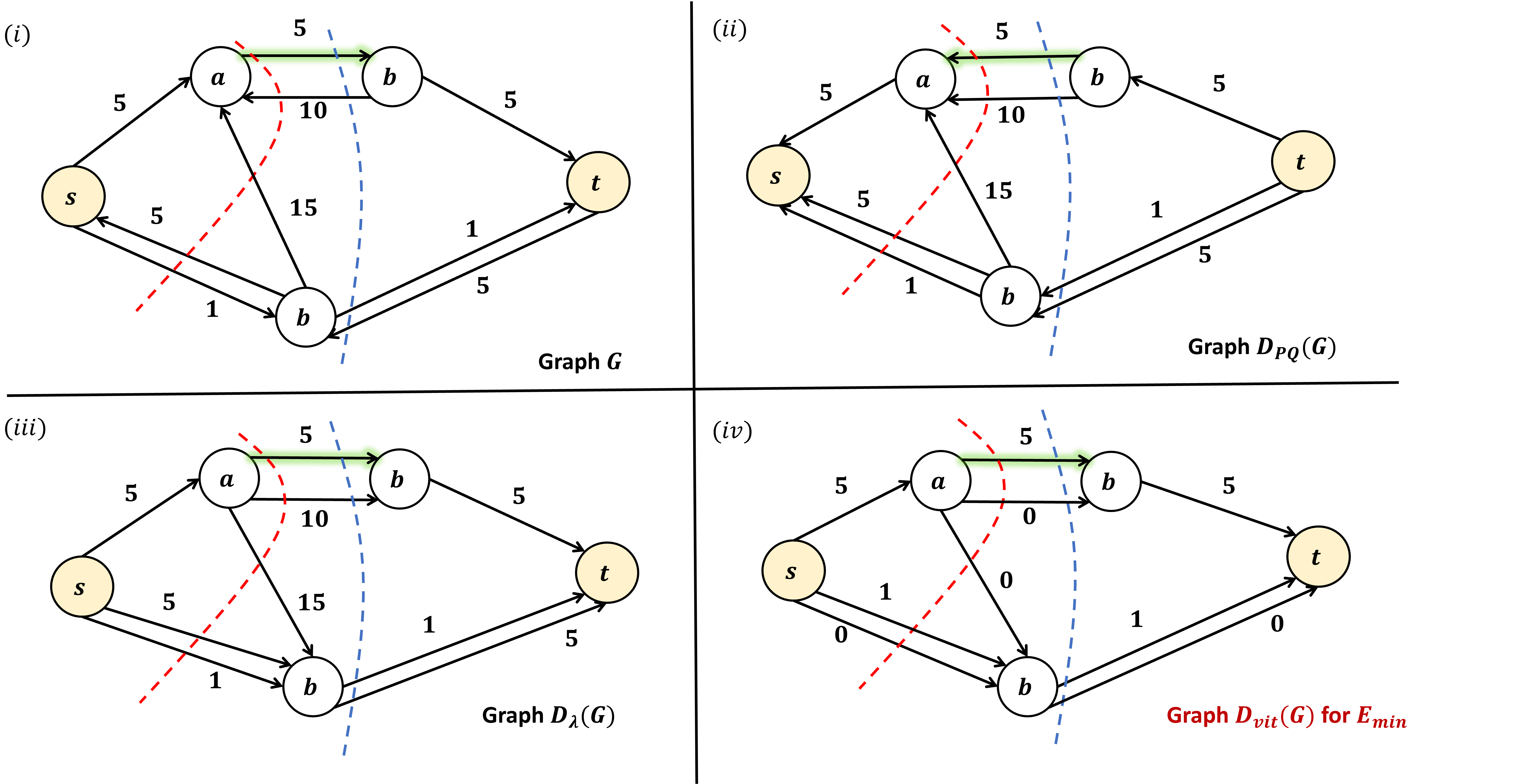} 
 \caption{For example graph $G$ in $(i)$, the existing structures for $(s,t)$-mincuts ${\mathcal D}_{PQ}(G)$  \cite{DBLP:journals/mp/PicardQ80} and ${\mathcal D}_{\lambda}(G)$ \cite{baswana2023minimum+} are represented in $(ii)$ and $(iii)$ respectively. For set $E_{min}$, structure ${\mathcal D}_{vit}(G)$ from Theorem \ref{thm : dag and 1 transversal} is represented in $(iv)$. Observe that the capacity of $(s,t)$-mincut in $G$ is $6$, which is preserved only in ${\mathcal D}_{vit}(G)$. Additionally, each $(s,t)$-mincut remains a relevant cut only in ${\mathcal D}_{vit}(G)$.}

  \label{fig : comparison}
\end{figure}

It follows from the construction of ${\mathcal D}_{PQ}(G)$ in \cite{DBLP:journals/mp/PicardQ80} that each edge $e\in E_{min}$ appears in the reverse direction in ${\mathcal D}_{PQ}(G)$. Moreover, each mincut $C$ for any vital edge in $E_{min}$ has capacity exactly zero because all edges are incoming to $C$, as shown in Figure \ref{fig : comparison}. Hence, any mincut for any vital edge in $E_{min}$ is no longer a relevant cut in the DAG ${\mathcal D}_{PQ}(G)$.  
It is shown in \cite{baswana2023minimum+} that the DAG, denoted by ${\mathcal D}_{\lambda}(G)$, obtained by reversing all edges of ${\mathcal D}_{PQ}(G)$ 
also stores all $(s,t)$-mincuts in $G$ and characterizes them as $1$-transversal cuts. ${\mathcal D}_{\lambda}(G)$ keeps the orientation of each edge in $E_{min}$ intact. However, it is not guaranteed that all the mincuts for a vital edge continue to remain relevant cuts for the vital edge in ${\mathcal D}_{\lambda}(G)$; for example,  mincut $\{s,a\}$ for edge $(a,b)$ in Figure \ref{fig : comparison}$(iii)$. Therefore, vital edges in $E_{min}$ might become nonvital edges in ${\mathcal D}_{\lambda}(G)$. 

For the set of $(s,t)$-mincuts in graph $G$, there is no edge that belongs to set $\Gamma$-edges (shown in \cite{baswana2023minimum+}).    
 Therefore, in Step $1$ of the construction of ${\mathcal D}_{vit}(G)$, there are no edges from $\Gamma$-edges to be removed. In Step $2$ of the construction of ${\mathcal D}_{vit}(G)$, we replace each incoming edge $(u,v)$ of every $(s,t)$-mincut with an edge $(v,u)$ of capacity $0$. If we do not consider the capacity of an edge, then Step 2 actually flips the orientation of edge $(u,v)$. Therefore, it can be observed that the resulting graph ${\mathcal D}_{vit}^{f^*}(G)$, except edge capacities, is the same as DAG ${\mathcal D}_{\lambda}(G)$ in \cite{baswana2023minimum+}. Therefore, an $(s,t)$-cut $C$ in $G$ is a mincut for a vital edge in $E_{min}$ if and only if $C$ is a $1$-transversal cut in ${\mathcal D}_{vit}^{f^*}(G)$. Additionally, it follows from Theorem \ref{thm : dag and 1 transversal} that the capacity of $(s,t)$-mincut in ${\mathcal D}_{vit}^{f^*}(G)$ is $f^*$, each vital edge in $E_{min}$ is preserved, and each mincut for every vital edge in $E_{min}$ appears in ${\mathcal D}_{vit}^{f^*}(G)$ as a relevant cut.




    

\begin{table}[ht]
    \small
    \centering
    \begin{tabular}{|c|c|c|c|}
        \hline
          \textbf{Structures for all}  & ${\mathcal D}_{PQ}(G)$ & ${\mathcal D}_{\lambda}(G)$ & $\mathbf{{\mathcal D}}_{vit}^{f^*}(G)$\\
         \textbf{$\mathbf{(s,t)}$-mincuts vs $\mathbf{{\mathcal D}}_{vit}^{f^*}(G)$ } & \cite{DBLP:journals/mp/PicardQ80} & \cite{baswana2023minimum+} & \textbf{(New)}\\
         \hline
           &  &  & \\
          Structure & DAG & DAG & \textbf{DAG}\\
          \hline
           &  &  & \\
          Space & ${\mathcal O}(m)$ & ${\mathcal O}(m)$ & $\mathbf{{\mathcal O}(m)}$\\
          \hline
          $(s,t)$-mincut &  &  & \\
          Characterization & $1$-transversal cut & $1$-transversal cut  & $\mathbf{1}$-\textbf{transversal cut} \\
          \hline
          Is Capacity of & & & \\
          (s,t)-mincut Preserved?  & No & No & \textbf{Yes}\\
          \hline
          Is an $(s,t)$-mincut & & & \\
          a Relevant cut?  & No & No & \textbf{Yes}\\
           \hline  
    \end{tabular}    
    \caption{For the set of all $(s,t)$-mincuts, ${\mathcal D}_{vit}(G)$ in Theorem \ref{thm : dag and 1 transversal} not only contains all the characteristics of existing DAGs (\cite{DBLP:journals/mp/PicardQ80, baswana2023minimum+}) but also has two more additional properties given in last two rows of this table.  Each $(s,t)$-mincut $C$ is a relevant cut for each contributing edge of $C$ in $G$.}
    
    \label{tab : Dvit vs Dpq and Dlambda}
\end{table}

\begin{figure}[ht]
  \begin{center}
    \includegraphics[width=0.3\textwidth]{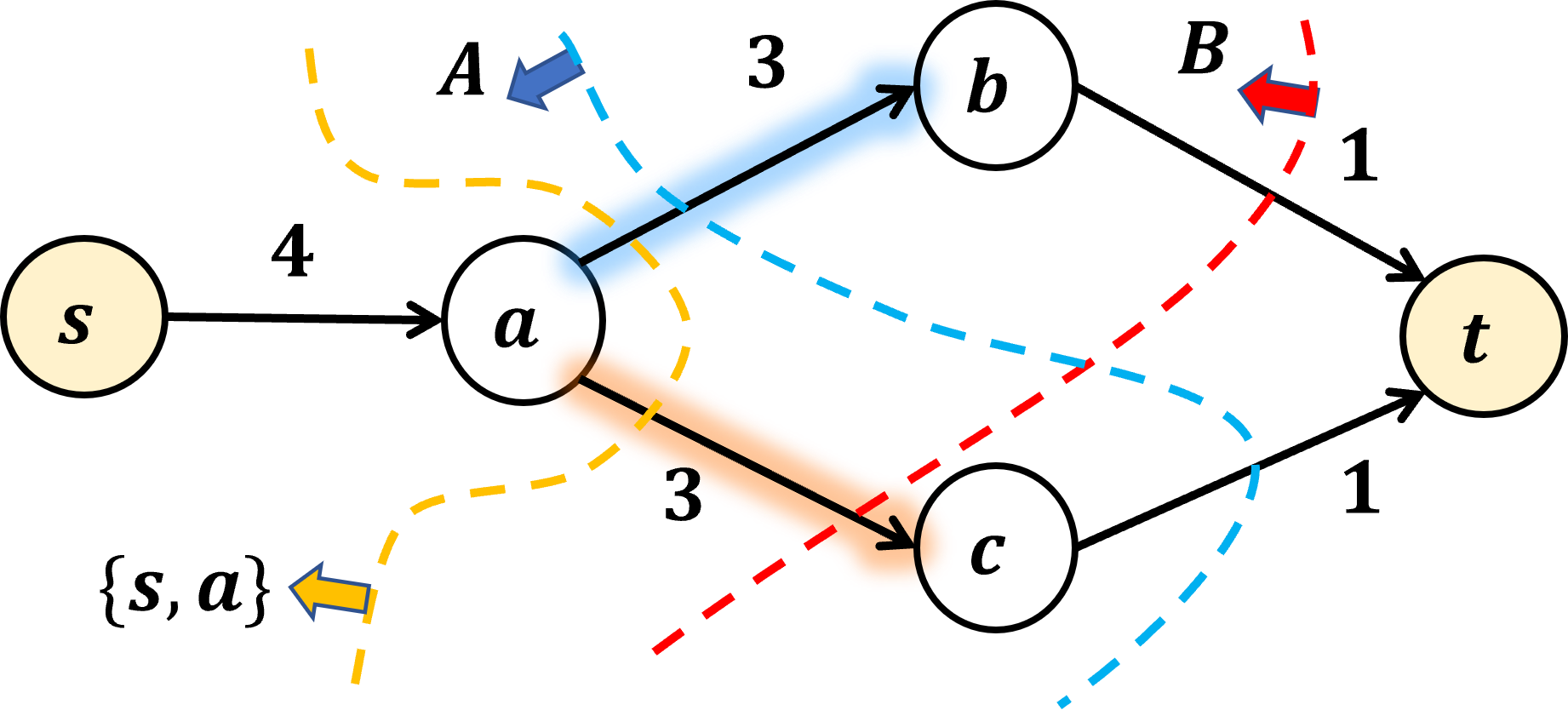}
  \end{center}
  \caption{${\mathcal D}_{vit}(G)$=$G$.  $\{s,a\}$ is a $1$-transversal cut, but not a mincut for any vital edge. 
  }
  \label{fig : transversal cut not mincut}
\end{figure}
\subsection{An O(mn) Space Structure for Complete Characterization}
Although each mincut for every vital edge is a $1$-transversal cut in ${\mathcal D}_{vit}(G)$, it is not necessary that each $1$-transversal cut in ${\mathcal D}_{vit}(G)$ is a mincut for a vital edge (refer to Figure \ref{fig : transversal cut not mincut}). 
Hence, ${\mathcal D}_{vit}(G)$ could provide only a partial characterization. Let $e=(u,v)$ be any vital edge, and $G_{e}$ denote the graph obtained by adding two infinite weight edges, $(s,u)$ and $(v,t)$, in graph $G$. Observe that all mincuts for $(u,v)$ are compactly stored in the DAG for $(s,t)$-mincuts \cite{DBLP:journals/mp/PicardQ80} built on graph $G_{e}$. We denote this DAG by ${\mathcal D}_{PQ}(G_{e})$.
Recall that Theorem \ref{thm : n-1 cuts}  ensures that there is a set of at most $n-1$ $(s,t)$-cuts that stores a mincut for each vital edge. This set consists of mincuts for at most $n-1$ vital edges selected suitably. So, a promising approach is to store, for each selected edge $e$, the DAG ${\mathcal D}_{PQ}(G_e)$. 
 Overall, this structure occupies ${\mathcal O}(mn)$ space. Unfortunately, this structure fails to store all the mincuts for all vital edges (refer to Appendix \ref{app : limitation of mincut cover}).
Now, it is easy to observe that keeping ${\mathcal D}_{PQ}(G_e)$ for each vital edge $e$ serves as a compact structure for storing all mincuts for all vital edges and characterizing them in terms of $1$-transversal cuts.
 However, this structure may take ${\mathcal O}(m^2)$ space in the worst case. Interestingly, we present an ${\mathcal O}(mn)$ space structure $S_{vit}(G)$. This structure consists of ${\mathcal O}(n)$ DAGs for storing and characterizing all mincuts for all vital edges in terms of $1$-transversal cuts.

In Section \ref{sec: algorithm}, we classified all the vital edges into tight and loose edges. Theorem \ref{thm : n-1 loose edges} ensures that there can be at most $n-2$ loose edges. For each loose edge $e=(u,v)$, we keep a DAG ${\mathcal D}_{PQ}(G_{(u,v)})$. The set of all these $n-2$ DAGs stores all mincuts for all loose edges and provides a complete characterization in terms of 1-transversal cuts. However, it is also true that there can be $\Omega(n^2)$ tight edges in a graph (Theorem \ref{thm : n-1 loose edges}). Therefore, the main challenge arises in designing a compact structure that stores and characterizes all mincuts for all tight edges. Interestingly, it follows from Theorem \ref{thm : tight edge property} that, for a tight edge $(u,v)$, an $(s,t)$-cut of the least capacity that separates $u$ and $v$ is also a mincut for tight edge $(u,v)$. So, we take an approach of building a compact structure that, for every pair of vertices $\{u,v\}$ in $G$, can store and characterize all the $(s,t)$-cuts of the least capacity that separate $u$ and $v$. Therefore, a solution to the following problem would suffice to overcome the challenge.
\begin{problem} \label{prob : compact structure for all pairs}
    Given a directed weighted graph $G$ with a designated source $s$ and a designated sink $t$, build a compact structure that, for each pair of vertices $a$ and $b$ in $G$, stores and characterizes all the $(s,t)$-cuts of the least capacity that separate $a$ and $b$ in $G$.
\end{problem}
We, in the following, provide a solution to Problem \ref{prob : compact structure for all pairs} in three steps. First, we design a compact structure, denoted by ${\mathcal Y}(S)$, for storing all {\em Steiner $(s,t)$-mincuts} for a given {\em Steiner set $S$}. Thereafter, we design a hierarchy tree that stores a Steiner set $S(\mu)$ at each node $\mu$. Finally, we augment each node $\mu$ of the hierarchy tree with structures ${\mathcal Y}(S(\mu))$. 

\subsubsection{A Compact Structure for all Steiner (s,t)-mincuts}
Let $S\subseteq V$ be a Steiner set. An $(s,t)$-cut $C$ is called a {\em Steiner $(s,t)$-cut} if $C\cap S$ and $\overline{C}\cap S$ are nonempty. A Steiner $(s,t)$-cut with the least capacity is a {\em Steiner $(s,t)$-mincut}. Let ${\mathfrak C}_S$ be the set of all Steiner $(s,t)$-mincuts for $S$ and $\lambda_S$ denote the capacity of Steiner $(s,t)$-mincut in $G$. We begin by constructing a graph, denoted by $G(S)$,  that preserves all Steiner $(s,t)$-mincuts in ${\mathfrak C}_S$ as follows.


 Graph $G(S)$ is a quotient graph of $G$ for the set of cuts ${\mathfrak C}_S$ (Theorem \ref{thm : quotient graph for cuts}). 
A node in $G(S)$ containing at least one vertex from Steiner set $S$ is called a Steiner node; otherwise, a non-Steiner node.  It is easy to observe that keeping a DAG ${\mathcal D}_{PQ}(G(S)_{(\mu,\nu)})$, for each pair of Steiner nodes $\{\mu,\nu\}$ in $G(S)$, is sufficient for storing and characterizing all Steiner $(s,t)$-mincuts in ${\mathfrak C}_S$. However, the space complexity of the resulting structure is ${\mathcal O}(|S|^2 m)$. Interestingly, we now show that keeping only ${\mathcal O}(|S|)$ DAGs is sufficient for storing and characterizing all Steiner $(s,t)$-mincuts for $S$. 
%
%

Let $\alpha$ and $\beta$ be any two  Steiner nodes in $G(S)$. It follows from the construction that $(s,t)$-cut of the least capacity that separates $\alpha$ and $\beta$ has capacity exactly $\lambda_S$. Let $C$ be any Steiner $(s,t)$-mincut that separates $\alpha$ and $\beta$, and keeps $\beta$ on the side of $t$. Suppose the source node $s$ in graph $G(S)$ turns out to be a Steiner node. So, the $(s,t)$-cut of the least capacity that separates source node $s$ and node $\beta$ has capacity exactly $\lambda_S$. Therefore, it can be observed that $C$ is preserved in ${\mathcal D}_{PQ}(G(S)_{(s,\beta)})$ and is characterized as a 1-transversal cut. 
 Hence, for each Steiner node $\beta$ in $G(S)$, keeping a DAG ${\mathcal D}_{PQ}(G(S)_{(s,\beta)})$ suffices to store and characterize all Steiner $(s,t)$-mincuts in terms of $1$-transversal cuts. Since there are at most $|S|$ Steiner nodes in graph $G(S)$, the number of DAGs required is at most $|S|$. Therefore, the overall space occupied by the structure is ${\mathcal O}(|S|m)$.
In order to materialize this above idea for the situation where $s$ is not necessarily a Steiner node, we make use of the following observation. 
\begin{observation} \label{obs : Steiner mincut covering}
    Let $\mu$ be any Steiner node in $G(S)$. For any Steiner $(s,t)$-mincut $C\in {\mathfrak C}_S$, either $\mu\in C$ or $\mu\in \overline{C}$. 
\end{observation}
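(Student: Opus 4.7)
The plan is to appeal directly to the construction of the quotient graph $G(S)$. Recall that $G(S)$ is defined by contracting each maximal set of vertices in $G$ that are not separated by any Steiner $(s,t)$-mincut in ${\mathfrak C}_S$ into a single node. Hence any node $\mu$ in $G(S)$ corresponds to a subset $V_\mu \subseteq V$ of vertices of $G$ enjoying the defining property that no cut in ${\mathfrak C}_S$ subdivides $V_\mu$.

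Now fix any Steiner $(s,t)$-mincut $C \in {\mathfrak C}_S$. Each vertex of $V_\mu$ lies either in $C$ or in $\overline{C}$. If $V_\mu$ had a vertex $x$ in $C$ and a vertex $y$ in $\overline{C}$, then $C$ would separate $x$ and $y$, contradicting the fact that $V_\mu$ is not subdivided by any Steiner $(s,t)$-mincut. Therefore either $V_\mu \subseteq C$ or $V_\mu \subseteq \overline{C}$, which is precisely the statement that (under the natural identification of nodes of $G(S)$ with their preimage vertex sets in $G$) $\mu \in C$ or $\mu \in \overline{C}$.

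There is no substantive obstacle here: the observation is a direct consequence of how the contraction in $G(S)$ is defined. The only subtlety worth flagging is the mild abuse of notation in treating the Steiner $(s,t)$-mincut $C$ of $G$ simultaneously as an $(s,t)$-cut in the quotient graph $G(S)$; the argument above is exactly what legitimizes this identification, since it shows that every cut in ${\mathfrak C}_S$ respects the partition of $V$ induced by the nodes of $G(S)$.
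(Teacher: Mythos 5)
Your proof is correct and follows exactly the reasoning the paper implicitly relies on: the paper states this as an immediate consequence of the construction of $G(S)$ (contracting maximal sets not separated by any cut in ${\mathfrak C}_S$) and provides no separate argument. Your explicit unpacking, including the remark that this is what legitimizes viewing cuts of ${\mathfrak C}_S$ as cuts of the quotient graph, is the intended one.
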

 Let $\mu$ be a Steiner node in $G(S)$ other than source $s$ and sink $t$. We construct a pair of graphs $G^s(S)$ and $G^t(S)$ from $G(S)$ as follows. Graph $G^s(S)$ is obtained from $G(S)$ by adding an infinite capacity edge from $s$ to node $\mu$. Similarly, graph $G^t(S)$ is obtained from $G(S)$ by adding an edge of infinite capacity from node $\mu$ to $t$. We call node $\mu$ as the {\em pivot} for the Steiner set $S$. It follows from Observation \ref{obs : Steiner mincut covering} that each Steiner $(s,t)$-mincut belonging to ${\mathfrak C}_{S}$ is a Steiner $(s,t)$-mincut either in $G^s(S)$ or in $G^t(S)$. This ensures that each Steiner $(s,t)$-mincut in ${\mathfrak C}_S$ is preserved in exactly one of the two graphs $G^s(S)$ or $G^t(S)$. This leads to the following lemma.
\begin{lemma} \label{lem : Steiner st-mincuts}
        Let $x,y\in S$ such that $x,y$ are mapped to nodes $\alpha, \beta$ in graph $G(S)$ respectively.
        Let $C$ be an $(s,t)$-cut with $x\in C$ and $y\in \overline C$. $C$ is an Steiner $(s,t)$-mincut if and only if $C$ is a $1$-transversal cut in exactly one of the two DAGs ${\mathcal D}_{PQ}(G^s(S)_{(\mu,\beta)})$ and ${\mathcal D}_{PQ}(G^t(S)_{(\alpha,\mu)})$ with $\alpha\in C$ and $\beta \in \overline{C}$.
       %
   \end{lemma}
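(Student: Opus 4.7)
The plan is to reduce the lemma to the Picard--Queyranne characterization \cite{DBLP:journals/mp/PicardQ80}---an $(s,t)$-cut is an $(s,t)$-mincut of a graph $H$ if and only if it is a $1$-transversal cut in ${\mathcal D}_{PQ}(H)$---applied to the two augmented graphs $G^s(S)_{(\mu,\beta)}$ and $G^t(S)_{(\alpha,\mu)}$. The split between the two DAGs is provided by Observation~\ref{obs : Steiner mincut covering}: every Steiner $(s,t)$-mincut $C \in {\mathfrak C}_S$ satisfies exactly one of $\mu \in C$ or $\mu \in \overline{C}$. I will show the case $\mu \in C$ is captured by ${\mathcal D}_{PQ}(G^s(S)_{(\mu,\beta)})$ and the case $\mu \in \overline{C}$ by ${\mathcal D}_{PQ}(G^t(S)_{(\alpha,\mu)})$.

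The key technical step I would establish first is that the $(s,t)$-mincut capacity of $G^s(S)_{(\mu,\beta)}$ equals $\lambda_S$ (and symmetrically for $G^t(S)_{(\alpha,\mu)}$). For the lower bound, any $(s,t)$-cut in $G^s(S)_{(\mu,\beta)}$ of finite capacity must keep $\mu$ on the $s$-side (to avoid the infinite edge of $G^s(S)$) and $\beta$ on the $t$-side (to avoid the augmenting $(\beta,t)$); such a cut is therefore a Steiner $(s,t)$-cut of $G$ (its two sides contain the Steiner nodes $\mu$ and $\beta$), and hence has capacity at least $\lambda_S$. For the upper bound, since $\mu$ and $\beta$ are distinct Steiner nodes, the construction of $G(S)$ as a quotient under ``not separated by any Steiner mincut'' guarantees a Steiner $(s,t)$-mincut of $G$ separating $\mu$ from $\beta$; a short case analysis using Observation~\ref{obs : Steiner mincut covering} produces such a witness with $\mu$ on the $s$-side and $\beta$ on the $t$-side, and this witness retains capacity $\lambda_S$ in $G^s(S)_{(\mu,\beta)}$ because the augmenting infinite edges do not contribute to it.

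With the capacity claim in hand, both directions follow. For the forward direction, if $C$ is a Steiner $(s,t)$-mincut with $\alpha \in C, \beta \in \overline{C}$ and (say) $\mu \in C$, then the augmenting infinite edges of $G^s(S)_{(\mu,\beta)}$ do not contribute to $C$, so $C$ has capacity $\lambda_S$ there; by the capacity claim this equals the $(s,t)$-mincut capacity, making $C$ an $(s,t)$-mincut of $G^s(S)_{(\mu,\beta)}$ and thus a $1$-transversal cut in ${\mathcal D}_{PQ}(G^s(S)_{(\mu,\beta)})$ by Picard--Queyranne; the case $\mu \in \overline{C}$ is symmetric. For the backward direction, a $1$-transversal cut in ${\mathcal D}_{PQ}(G^s(S)_{(\mu,\beta)})$ is an $(s,t)$-mincut of the augmented graph, hence of capacity $\lambda_S$; finiteness of this capacity forces $\mu \in C$ and $\beta \in \overline{C}$, which makes $C$ a Steiner $(s,t)$-cut of $G$ of capacity $\lambda_S$, i.e., a Steiner $(s,t)$-mincut. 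The ``exactly one'' qualifier is automatic: if $\mu \in C$, the infinite edge $(\mu,t)$ present in $G^t(S)_{(\alpha,\mu)}$ contributes to $C$ there, giving infinite capacity, so $C$ does not appear as a mincut in ${\mathcal D}_{PQ}(G^t(S)_{(\alpha,\mu)})$; the $\mu \in \overline{C}$ case is symmetric via the infinite edge $(s,\mu)$ in $G^s(S)_{(\mu,\beta)}$. The main subtlety to handle carefully is the upper bound argument in the capacity claim, which requires producing a Steiner mincut of $G$ with the correct orientation of $\mu$ relative to $\beta$; this rests on the contraction definition of $G(S)$ combined with Observation~\ref{obs : Steiner mincut covering}, and the tiny degenerate cases (such as $\alpha = \mu$ or $\beta = \mu$) need a small separate check to avoid parallel infinite edges trivializing one of the two DAGs.
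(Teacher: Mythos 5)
Your overall architecture is the intended one---the Picard--Queyranne characterization applied to the two augmented graphs, with Observation~\ref{obs : Steiner mincut covering} providing the split---and your forward direction and the ``exactly one'' argument are sound (for the forward direction the upper bound of your capacity claim is witnessed by the Steiner mincut $C$ itself, so no separate argument is needed there). The genuine gap is exactly at the step you defer to ``a short case analysis'': the claim that there always exists a Steiner $(s,t)$-mincut with $\mu$ on the side of $s$ and $\beta$ on the side of $t$. The construction of $G(S)$ only certifies that two distinct Steiner nodes are separated by \emph{some} Steiner $(s,t)$-mincut; in a directed graph the two orientations of a separation are not interchangeable, and Observation~\ref{obs : Steiner mincut covering} says nothing about which orientation is realized. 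Concretely, take $V=\{s,p,q,r,t\}$ with $S=\{p,q,r\}$ and edges $s\to p$, $p\to q$, $q\to t$ of capacity $2$ and $s\to r$, $r\to t$ of capacity $3$. Here $\lambda_S=5$ and the Steiner $(s,t)$-mincuts are exactly $\{s,p\}$, $\{s,r\}$, $\{s,p,q\}$, $\{s,p,r\}$; every vertex forms its own node of $G(S)$, and $\{p\},\{q\},\{r\}$ are Steiner nodes distinct from the source and sink nodes. No Steiner mincut places $q$ on the side of $s$ and $p$ on the side of $t$, yet finite cuts with that orientation exist, namely $\{s,q\}$ and $\{s,q,r\}$, both of capacity $7$.

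Consequently, if the pivot is chosen to be $\{p\}$ (or $\{q\}$), the graph $G^t(S)_{(\{q\},\{p\})}$ (respectively $G^s(S)_{(\{q\},\{p\})}$) has $(s,t)$-mincut capacity $7>\lambda_S$, and the cut $C=\{s,q\}$ is a $1$-transversal cut in exactly one of the two DAGs prescribed by the lemma for the pair $x=q$, $y=p$, with $\alpha\in C$ and $\beta\in\overline{C}$, while $C$ is not a Steiner $(s,t)$-mincut. So your backward direction fails as written, and it is not repairable by the argument you sketch: the needed witness simply does not exist for these pivots. (With pivot $\{r\}$ all four augmented graphs in this example do attain capacity $\lambda_S$, so the statement is sensitive to the choice of pivot; neither your proposal nor the paper's informal derivation, which fixes an arbitrary Steiner pivot, supplies the pivot-selection argument or the restriction to those DAGs whose $(s,t)$-mincut capacity equals $\lambda_S$ that would be required to close this gap.)
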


   Lemma \ref{lem : Steiner st-mincuts} and the preceding discussion 
lead to the following theorem. 
   \begin{theorem} \label{thm : Steiner st-mincuts}
       Let $G=(V,E)$ be a directed weighted graph on $n$ vertices and $m$ edges with a designated source vertex $s$ and a designated sink vertex $t$. For any Steiner set $S\subseteq V$, there is an ${\mathcal O}(|S|m)$ space structure, denoted by ${\mathcal Y}(S)$, consisting of at most $2|S|-2$ DAGs that collectively store all Steiner $(s,t)$-mincuts for $S$ and characterizes them as follows. An $(s,t)$-cut $C$ is a Steiner $(s,t)$-mincut for Steiner set $S$ if and only if $C$ is a $1$-transversal cut in at least one of the $2|S|-2$ DAGs.    
   \end{theorem}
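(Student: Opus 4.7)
My plan is to make the structure ${\mathcal Y}(S)$ explicit and then piece together the characterization from Lemma \ref{lem : Steiner st-mincuts} together with the standard guarantees of the Picard--Queyranne DAG. I would fix a pivot Steiner node $\mu$ in $G(S)$ distinct from $s$ and $t$, build the augmented graphs $G^s(S)$ and $G^t(S)$ as in the discussion preceding the theorem, and then define ${\mathcal Y}(S)$ to be the collection of DAGs ${\mathcal D}_{PQ}(G^s(S)_{(\mu,\beta)})$ for every Steiner node $\beta \neq \mu$ of $G(S)$, together with the symmetric DAGs ${\mathcal D}_{PQ}(G^t(S)_{(\alpha,\mu)})$ for every Steiner node $\alpha \neq \mu$.

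The counting and space bounds are then essentially bookkeeping: $G(S)$ contains at most $|S|$ Steiner nodes, so each of the two families contributes at most $|S|-1$ DAGs, summing to at most $2|S|-2$. Since ${\mathcal D}_{PQ}$ occupies space linear in the number of edges of its underlying graph \cite{DBLP:journals/mp/PicardQ80} and each augmented graph has ${\mathcal O}(m)$ edges, the total space is ${\mathcal O}(|S|m)$.

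For the characterization I would argue both directions. In the forward direction, given $C \in {\mathfrak C}_S$, pick Steiner vertices $x \in S\cap C$ and $y \in S\cap \overline{C}$, mapped to nodes $\alpha$ and $\beta$ of $G(S)$. Observation \ref{obs : Steiner mincut covering} places $\mu$ on one specific side of $C$; in the case $\mu\in C$, the infinite-capacity edges $(s,\mu)$ and $(\beta,t)$ of $G^s(S)_{(\mu,\beta)}$ both respect $C$, so $C$ remains an $(s,t)$-cut of capacity $\lambda_S$ in $G^s(S)_{(\mu,\beta)}$ and is in fact an $(s,t)$-mincut there. It thus appears as a $1$-transversal cut in ${\mathcal D}_{PQ}(G^s(S)_{(\mu,\beta)})$ by the Picard--Queyranne characterization. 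The case $\mu \in \overline{C}$ is symmetric via $G^t(S)_{(\alpha,\mu)}$. In the reverse direction, a $1$-transversal cut $C$ in ${\mathcal D}_{PQ}(G^s(S)_{(\mu,\beta)})$ is an $(s,t)$-mincut of $G^s(S)_{(\mu,\beta)}$; the edges of infinite capacity force $\mu\in C$ and $\beta\in\overline{C}$, so Steiner vertices lie on both sides and the capacity of $C$ in $G$ equals the mincut capacity $\lambda_S$, making $C$ a Steiner $(s,t)$-mincut for $S$.

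The main obstacle I foresee is justifying that the $(s,t)$-mincut of $G^s(S)_{(\mu,\beta)}$ has capacity exactly $\lambda_S$ rather than something strictly smaller; this is the heart of showing that the covering technique of \cite{10.1145/3623271} transfers from the restricted setting (where only trivial $(s,t)$-mincuts exist) to the Steiner setting. The key is the quotient construction of $G(S)$, which guarantees that any pair of Steiner nodes is separated by an $(s,t)$-cut of capacity exactly $\lambda_S$; in particular, the least capacity of an $(s,t)$-cut of $G^s(S)$ that keeps $\mu$ on the $s$-side and $\beta$ on the $t$-side equals $\lambda_S$. With this observation, the forward direction obtains its mincut guarantee, the reverse direction obtains its capacity $=\lambda_S$ guarantee, and the rest of the theorem follows from a direct application of the Picard--Queyranne $1$-transversality characterization together with Lemma \ref{lem : Steiner st-mincuts}.
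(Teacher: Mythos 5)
Your proposal matches the paper's construction and argument essentially verbatim: the same pivot Steiner node $\mu$, the same pair of augmented graphs $G^s(S)$ and $G^t(S)$, the same family of $2|S|-2$ Picard--Queyranne DAGs anchored at $\mu$ with the same space accounting, the forward direction via Observation \ref{obs : Steiner mincut covering}, and the reverse direction via the $1$-transversality characterization. The one point you flag as the main obstacle --- that the $(s,t)$-mincut capacity of $G^s(S)_{(\mu,\beta)}$ (resp.\ $G^t(S)_{(\alpha,\mu)}$) is exactly $\lambda_S$, which in a directed graph needs a Steiner $(s,t)$-mincut placing $\mu$ on the $s$-side \emph{and} $\beta$ on the $t$-side, not merely one separating them --- is precisely the assertion the paper also draws from the quotient construction of $G(S)$, so your argument is at the same level of rigor as, and follows the same route as, the paper's.
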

\begin{remark}
    Our technique for obtaining graphs $G^s(S)$ and $G^t(S)$ from $G(S)$ is similar to {\em covering} technique in \cite{baswana2023minimum+}. However, in \cite{baswana2023minimum+}, covering technique was applied to only those graphs that have at most two $(s,t)$-mincuts -- $\{s\}$ and $V\setminus \{t\}$. Moreover, pivot node can be any arbitrary vertex in the graph. In our case, graph $G(S)$ needs not have at most two $(s,t)$-mincut, and only a Steiner node can be a pivot node.   
\end{remark}

\subsubsection{An O(n) Space Hierarchy Tree for (s,t)-cuts}
 Let $H$ be an undirected weighted graph.  Katz, Katz, Korman, and Peleg \cite{DBLP:journals/siamcomp/KatzKKP04} designed an ${\mathcal O}(n)$ space hierarchy tree ${\tilde{\mathcal T}}_H$ that stores the capacity of the cut of the least capacity separating any pair of vertices in $H$. For any directed weighted graph $G$, by taking an approach similar to \cite{DBLP:journals/siamcomp/KatzKKP04}, we design an ${\mathcal O}(n)$ space hierarchy tree $T_{\mathfrak B}$ that stores the capacity of the $(s,t)$-cut of the least capacity separating any pair of vertices in $G$. It turns out that ${\tilde{\mathcal T}}_H$ is a special case of $T_{\mathfrak B}$ as stated in Observation \ref{obs : katz tree is a special case}.

 \begin{observation}\label{obs : katz tree is a special case}
 Let $H'$ be the graph obtained by adding two isolated vertices $s$ and $t$ to $H$. Observe that, for any pair of vertices $u$ and $v$, a cut of the least capacity separating $u$ and $v$ in $H$ is also an $(s,t)$-cut of the least capacity in $H'$ that separates $u$ and $v$. \end{observation}

 For a Steiner set $S\subseteq V$, let us consider the equivalence relation ${\mathcal R}_S({\mathfrak C_S})$ on the set $S$ and the set of all Steiner $(s,t)$-mincuts ${\mathfrak C}_S$ (Definition \ref{def : relation R}). 
Let ${\mathfrak B}({\mathfrak C}_S)$ denote the set of equivalence classes of $S$ formed by ${\mathcal R}_S({\mathfrak C}_S)$. So, any pair of vertices from the same equivalence class are not separated by any $(s,t)$-cut from ${\mathfrak C}_S$. We present in the following a hierarchy tree ${T_{\mathfrak B}}$ based on equivalence classes resulting from a set of relations ${\mathcal R}_S$.\\ 

\noindent
\textbf{Notations:}\\
$r$: It is the root node of $T_{\mathfrak B}$.\\
$S({\mu})$:~Every node $\mu$ of the hierarchy tree $T_{\mathfrak B}$ is associated with a Steiner set $S({\mu})\subseteq V$.\\
$\lambda_{S(\mu)}$: It is the capacity of Steiner $(s,t)$-mincut for Steiner set $S(\mu)$ associated with a node $\mu$ in $T_{\mathfrak B}$.\\
\textsc{lca}$(\mu,\nu)$: It is the lowest common ancestor of node $\mu$ and node $\nu$ in $T_{\mathfrak B}$.\\
$\mu.cap$: It is a field at each internal node $\mu$ in $T_{\mathfrak B}$ that stores the capacity $\lambda_{S(\mu)}$.

\paragraph*{Initialization:} 
The root node $r$ of the hierarchy tree $T_{\mathfrak B}$ is associated with the vertex set $V$, that is, $S({r})=V$. 

\paragraph*{Construction:} We start from the root node $r$. Since $S(r)=V$, observe that ${\mathfrak C}_{S(r)}$ is the set of all $(s,t)$-mincuts of graph $G$. For each equivalence class ${\mathfrak q}$ of ${\mathfrak B}({\mathfrak C}_V)$, we create a new node $\alpha_{\mathfrak q}$ and add it as a child of the root node $r$. The node $\alpha_{\mathfrak q}$ is associated with the Steiner set ${\mathfrak q}$, that is $S(\alpha_{\mathfrak q})={\mathfrak q}$.\\
Let us now consider a node $\mu$ at any intermediate step of the construction. If for node $\mu$, $|S(\mu)|=1$, then the process terminates. Otherwise, it continues as follows. For each equivalence class ${\mathfrak q}$ belonging to ${\mathfrak B}({\mathfrak C}_{S(\mu)})$, a new node $\alpha_{\mathfrak q}$ is created and added as a child of ${\mu}$. The node $\alpha_{\mathfrak q}$ is associated with the Steiner set ${\mathfrak q}$, that is, $S({\alpha_{\mathfrak q}})={\mathfrak q}$. This process is recursively applied to each child of the node $\mu$ along the same line. \\ 

It follows from the construction that for an internal node $\mu$ and its child $\nu$ in $T_{\mathfrak B}$ , $S({\nu})\subset S({\mu})$. Moreover, for a pair of children $\nu$ and $\nu'$ of any node $\mu$ in $T_{\mathfrak B}$, $S(\nu)\cap S(\nu')$ is empty because equivalence classes are disjoint. These two facts ensure that the process terminates within $n$ steps. Now it is evident that there are exactly $n$ leaf nodes in ${T_{\mathfrak B}}$. Each leaf node is associated with a unique vertex in $G$. Moreover, it follows from the definition of ${\mathcal R}_S$ and construction of $T_{\mathfrak B}$ that each internal node has at least two children. Hence, the space occupied by ${T_{\mathfrak B}}$ is ${\mathcal{O}}(n)$. Let ${\mathcal L}(v)$ store the pointer to the leaf node where a vertex $v\in V$ is mapped in $T_{\mathfrak B}$. The following fact is immediate from the construction of ${T_{\mathfrak B}}$.
\begin{fact} \label{fact : lca contains the minimum cut}
    For a pair of vertices $u,v\in V$, let $\mu$ be the $\textsc{lca}$ of ${\mathcal L}(u)$ and ${\mathcal L}(v)$ in $T_{\mathfrak B}$. Then $\mu.cap$ stores the capacity of $(s,t)$-cut of the least capacity in $G$ that separates $u$ and $v$. 
\end{fact}
From the perspective of an internal node of hierarchy tree, we state the following fact.
\begin{fact} \label{fact : capacity at least lambda}
    Let $\mu$ be an internal node of $T_{\mathfrak B}$. For any pair of vertices $u,v\in S(\mu)$, the capacity of $(s,t)$-cut of the least capacity in $G$ that separates $u$ and $v$ is at least $\mu.cap$.
\end{fact}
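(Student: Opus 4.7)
The plan is to derive this fact directly from the definition of a Steiner $(s,t)$-mincut, with no need to invoke the structural results on Steiner mincuts (Theorem \ref{thm : Steiner st-mincuts}) or any properties of the hierarchy tree beyond how $S(\mu)$ is assigned to nodes. Fix an internal node $\mu$ of $T_{\mathfrak B}$ and any pair $u,v\in S(\mu)$. Let $C^\star$ be an $(s,t)$-cut of the least capacity separating $u$ and $v$ in $G$, and, without loss of generality, assume $u\in C^\star$ and $v\in \overline{C^\star}$. The key observation is that $C^\star$ automatically qualifies as a \emph{Steiner} $(s,t)$-cut for the Steiner set $S(\mu)$: since $u,v\in S(\mu)$, both intersections $C^\star \cap S(\mu) \supseteq \{u\}$ and $\overline{C^\star}\cap S(\mu) \supseteq \{v\}$ are nonempty, which is exactly the definition of a Steiner $(s,t)$-cut recalled at the start of Section \ref{sec : Section 1 structure}.

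From here the proof is immediate: by the minimality of $\lambda_{S(\mu)}$ among all Steiner $(s,t)$-cuts for $S(\mu)$, we conclude $c(C^\star)\ge \lambda_{S(\mu)}=\mu.cap$. The only thing worth emphasizing, and the one place a careless argument could go wrong, is that the conclusion genuinely requires \emph{both} endpoints of the separated pair to lie in $S(\mu)$. If only one did, separation by $C^\star$ would not guarantee that both sides meet $S(\mu)$, and $C^\star$ need not be a Steiner $(s,t)$-cut for $S(\mu)$; the membership $u,v\in S(\mu)$ is provided by the hypothesis of the fact, so this is exactly the setting in which the argument applies. No further obstacle arises.
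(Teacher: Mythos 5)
Your proof is correct and is exactly the reasoning the paper treats as immediate from the construction: since $u,v\in S(\mu)$, any $(s,t)$-cut separating them is a Steiner $(s,t)$-cut for $S(\mu)$, so its capacity is at least $\lambda_{S(\mu)}=\mu.cap$. The paper states this as a Fact without an explicit proof, and your argument supplies precisely the intended one-line justification.
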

The construction of $T_{\mathfrak B}$, Fact \ref{fact : lca contains the minimum cut}, and Fact \ref{fact : capacity at least lambda} establish the following theorem. 
\begin{theorem} \label{thm : tree with equivalence class}
    Let $G$ be any directed weighted graph $G=(V,E)$ on $n=|V|$ vertices and $m=|E|$ edges with a designated source vertex $s$ and designated sink vertex $t$. There is a hierarchy tree $T_{\mathfrak B}$ for $G$ occupying ${\mathcal O}(n)$ space such that, given any pair of vertices $\{u,v\}$ in $G$,
    tree $T_{\mathfrak B}$ can report the capacity of $(s,t)$-cut of the least capacity separating $u$ and $v$ in ${\mathcal O}(1)$ time. 
\end{theorem}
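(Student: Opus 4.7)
The plan is to verify three claims: (i) $T_{\mathfrak B}$ has ${\mathcal O}(n)$ nodes; (ii) for any $u,v \in V$, the field $\mu.\textit{cap}$ at $\mu = \textsc{lca}({\mathcal L}(u),{\mathcal L}(v))$ equals the capacity of the $(s,t)$-cut of least capacity separating $u$ and $v$ in $G$ (which is Fact \ref{fact : lca contains the minimum cut}); and (iii) this capacity can be reported in ${\mathcal O}(1)$ time after appropriate preprocessing.

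For (i), I would argue as follows. By the construction, the children of an internal node $\mu$ correspond to the equivalence classes of $S(\mu)$ under ${\mathcal R}_{S(\mu)}$, which are disjoint and whose union is $S(\mu)$. Since $|S(\mu)|\ge 2$ implies that at least one Steiner $(s,t)$-mincut separates some pair in $S(\mu)$ (by definition of a Steiner $(s,t)$-cut), there are at least two equivalence classes, so every internal node has at least two children. The recursion terminates when $|S(\mu)|=1$, hence the leaves of $T_{\mathfrak B}$ are in bijection with $V$. A rooted tree with $n$ leaves in which every internal node has at least two children has at most $2n-1$ total nodes, giving ${\mathcal O}(n)$ space.

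For (ii), I would establish the upper and lower bounds on $\mu.\textit{cap}$ separately. \emph{Upper bound:} Because $\mu=\textsc{lca}({\mathcal L}(u),{\mathcal L}(v))$ and leaves ${\mathcal L}(u),{\mathcal L}(v)$ lie under distinct children of $\mu$, the vertices $u,v$ belong to different equivalence classes of ${\mathcal R}_{S(\mu)}$. By definition of ${\mathcal R}_{S(\mu)}$, there exists a Steiner $(s,t)$-mincut $C\in{\mathfrak C}_{S(\mu)}$ that separates $u$ and $v$; this cut has capacity exactly $\lambda_{S(\mu)}=\mu.\textit{cap}$, so the minimum capacity of an $(s,t)$-cut separating $u,v$ is at most $\mu.\textit{cap}$. \emph{Lower bound:} Since $u,v\in S(\mu)$, every $(s,t)$-cut $C$ separating $u$ and $v$ is automatically a Steiner $(s,t)$-cut for $S(\mu)$, because $C\cap S(\mu)$ contains $u$ (or $v$) and $\overline{C}\cap S(\mu)$ contains the other. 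Hence $c(C)\ge \lambda_{S(\mu)}=\mu.\textit{cap}$. The two bounds coincide, proving Fact \ref{fact : lca contains the minimum cut} and, in turn, Fact \ref{fact : capacity at least lambda} for internal $\mu$.

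For (iii), I would preprocess $T_{\mathfrak B}$ with any linear-space constant-query-time \textsc{lca} data structure (e.g., Harel--Tarjan or Bender--Farach-Colton). Given a query pair $\{u,v\}$, we retrieve ${\mathcal L}(u)$ and ${\mathcal L}(v)$ in ${\mathcal O}(1)$ time via the stored pointers, compute $\mu=\textsc{lca}({\mathcal L}(u),{\mathcal L}(v))$ in ${\mathcal O}(1)$ time, and return $\mu.\textit{cap}$. The main conceptual hurdle is the lower bound argument in (ii); the subtlety is that as we move up the tree the set $S(\cdot)$ grows and $\lambda_{S(\cdot)}$ can only decrease or stay the same, so one might worry that a strictly smaller value is stored at an ancestor. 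The argument above resolves this by observing that the \textsc{lca} is precisely the highest node whose $S(\mu)$ contains both $u$ and $v$ in the same set but in different equivalence classes, ensuring the minimum separating capacity is realised exactly at $\mu$ and not higher up.
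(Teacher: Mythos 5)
Your proposal is correct and follows essentially the same route as the paper: the paper records your points (ii) as Fact \ref{fact : lca contains the minimum cut} and Fact \ref{fact : capacity at least lambda} (declared ``immediate from the construction'') and your point (i) via the observation that every internal node has at least two children; you have simply supplied the short justifications the paper omits, and they are all sound. The only detail worth keeping explicit is the one you already flag in the lower bound: since $u,v\in S(\mu)$, any $(s,t)$-cut separating them is automatically a Steiner $(s,t)$-cut for $S(\mu)$, which is precisely why the value stored at the \textsc{lca}, and not at some higher ancestor with a smaller $\lambda$, is the right answer.
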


    

\begin{remark} 
  The ${\mathcal O}(n)$ space Ancestor tree ${\mathcal T}_{(s,t)}$ of Cheng and Hu \cite{DBLP:journals/anor/ChengH91} can also report the capacity of the $(s,t)$-cut of the least capacity separating a given pair of vertices.
    However, it does not seem possible to augment ${\mathcal T}_{(s,t)}$ to design a structure that compactly stores and characterizes all mincuts for all tight edges, which is the main objective.
\end{remark}
 We now construct a data structure, denoted by $\mathbf{T_{\mathfrak B}^A}$, by augmenting nodes of tree $\mathbf{T_{\mathfrak B}}$ from Theorem \ref{thm : tree with equivalence class} with the structure ${\mathcal Y}(S)$ from Theorem \ref{thm : Steiner st-mincuts}. This data structure provides a solution to Problem \ref{prob : compact structure for all pairs}.
\paragraph*{Construction of $\mathbf{T_{\mathfrak B}^A}$:} We obtain $T_{\mathfrak B}^A$ from $T_{\mathfrak B}$ by augmenting each internal node of $T_{\mathfrak B}$ as follows. Let $\mu$ be an internal node of $T_{\mathfrak B}$ with $k$ children $\mu_1,\mu_2,\ldots,\mu_k$. It follows from the construction of $T_{\mathfrak B}$ that all Steiner $(s,t)$-mincuts for Steiner set $S(\mu)$ partition $S(\mu)$ into $k$ disjoint subsets $S(\mu_1),S(\mu_2),\ldots,S(\mu_k)$. Therefore, for each set $S(\mu_i)$, $1\le i\le k$, there is a unique node in graph $G(S(\mu))$ to which each vertex of $S(\mu_i)$ is mapped. We select one of the nodes out of these $k$ nodes as the pivot node for $S(\mu)$ and augment internal node $\mu$ with the structure ${\mathcal Y}(S(\mu))$ (Theorem \ref{thm : Steiner st-mincuts}). \\ 
 
 We now analyze the space occupied by $T_{\mathfrak B}^A$. It follows from Theorem \ref{thm : tree with equivalence class} that the number of nodes in $T_{\mathfrak B}^A$ is ${\mathcal O}(n)$. Theorem \ref{thm : Steiner st-mincuts} ensures that each internal node $\mu$ of $T_{\mathfrak B}^A$ occupies ${\mathcal O}(km)$ space, where $k$ is the number of children of $\mu$. It is easy to observe that the total number of children for all internal nodes in $T_{\mathfrak B}^A$ is bounded by ${\mathcal O}(n)$. Therefore, overall space occupied by $T_{\mathfrak B}^A$ is ${\mathcal O}(mn)$. Thus, the following theorem is immediate.
 \begin{theorem} \label{thm : complete characterization for separating cuts}
     For a directed weighted graph $G$ on $n$ vertices and $m$ edges with a designated source vertex $s$ and a designated sink vertex $t$, there exists an ${\mathcal O}(mn)$ space structure $T_{\mathfrak B}^A$ consisting of ${\mathcal O}(n)$ DAGs that, for each pair of vertices $\{u,v\}$ in $G$,  stores all $(s,t)$-cuts of the least capacity that separate $\{u,v\}$ and characterize them as follows. An $(s,t)$-cut $C$ is an $(s,t)$-cut of the least capacity separating $u$ and $v$ if and only if $C$ is a $1$-transversal cut in at least one of ${\mathcal O}(n)$ DAGs.  
 \end{theorem}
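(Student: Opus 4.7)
The plan is to verify the two promised properties of $T_{\mathfrak B}^A$ -- the ${\mathcal O}(mn)$ space bound and the $1$-transversal characterization of minimum separating $(s,t)$-cuts -- using the hierarchy tree of Theorem \ref{thm : tree with equivalence class} and the Steiner-mincut structure of Theorem \ref{thm : Steiner st-mincuts}. The construction preceding the theorem already dictates the data structure, so the work is to confirm its stated properties.

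First I would handle space. Theorem \ref{thm : tree with equivalence class} gives $T_{\mathfrak B}$ of ${\mathcal O}(n)$ nodes, and since every internal node has at least two children, the sum of children counts over all internal nodes is ${\mathcal O}(n)$. For an internal node $\mu$ with children $\mu_1,\ldots,\mu_{k_\mu}$, the partition of $S(\mu)$ into $S(\mu_1),\ldots,S(\mu_{k_\mu})$ induced by the Steiner $(s,t)$-mincuts for $S(\mu)$ guarantees that in $G(S(\mu))$ the number of Steiner nodes is exactly $k_\mu$. Hence the augmentation ${\mathcal Y}(S(\mu))$ consists of at most $2k_\mu-2$ DAGs, each of size ${\mathcal O}(m)$, using ${\mathcal O}(k_\mu m)$ space in total. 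Summing over all internal nodes yields an overall space of ${\mathcal O}(mn)$ and an overall count of ${\mathcal O}(n)$ DAGs.

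For the characterization, I would fix a pair $u,v\in V$ and let $\mu=\textsc{lca}({\mathcal L}(u),{\mathcal L}(v))$ in $T_{\mathfrak B}$. By Fact \ref{fact : lca contains the minimum cut}, every $(s,t)$-cut of the least capacity separating $u$ and $v$ has capacity $\mu.cap=\lambda_{S(\mu)}$. For the forward direction, if $C$ is such a minimum separating cut then $u,v\in S(\mu)$ lie on opposite sides of $C$ with $c(C)=\lambda_{S(\mu)}$, so $C\in{\mathfrak C}_{S(\mu)}$; Theorem \ref{thm : Steiner st-mincuts} then delivers $C$ as a $1$-transversal cut in one of the DAGs of ${\mathcal Y}(S(\mu))$ stored at $\mu$. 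Conversely, if $C$ is a $1$-transversal cut in a DAG of $T_{\mathfrak B}^A$ augmented at some node $\nu$, Theorem \ref{thm : Steiner st-mincuts} gives $C\in{\mathfrak C}_{S(\nu)}$ with $c(C)=\lambda_{S(\nu)}$, and once we verify $\nu=\mu$ and that $C$ separates $u$ and $v$, Fact \ref{fact : lca contains the minimum cut} certifies $C$ as a minimum separating cut for $\{u,v\}$.

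The step I expect to be most delicate is cleanly matching the ``some DAG in the structure'' of the theorem statement to the correct node in the hierarchy. A priori a $1$-transversal cut could live in a DAG stored at an ancestor $\mu'$ strictly above $\mu$; I would rule out that such a cut separates $u$ and $v$ by using the definition of the hierarchy: since $S(\mu)$ arises as a single equivalence class of ${\mathcal R}_{S(\mu')}$, no Steiner $(s,t)$-mincut for $S(\mu')$ separates two vertices of $S(\mu)$, and therefore no $1$-transversal cut in any DAG at $\mu'$ separates $u$ and $v$. Likewise at any descendant of $\mu$ at least one of $u,v$ is missing from the associated Steiner set, so no DAG below $\mu$ stores a cut separating $u$ and $v$. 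This confines both directions of the equivalence to the DAGs at $\mu$ itself and closes the characterization, yielding Theorem \ref{thm : complete characterization for separating cuts}.
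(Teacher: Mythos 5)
Your overall route is the paper's route: bound the space by charging each internal node $\mu$ for ${\mathcal O}(k_\mu m)$ via the observation that $G(S(\mu))$ has exactly $k_\mu$ Steiner nodes, sum the children counts to ${\mathcal O}(n)$, and then reduce the characterization for a pair $\{u,v\}$ to Theorem \ref{thm : Steiner st-mincuts} applied at $\mu=\textsc{lca}({\mathcal L}(u),{\mathcal L}(v))$ using Fact \ref{fact : lca contains the minimum cut}. The space argument and the forward direction (a least-capacity cut separating $u,v$ lies in ${\mathfrak C}_{S(\mu)}$ and is hence $1$-transversal in a DAG of ${\mathcal Y}(S(\mu))$) are correct and in fact more explicit than the paper, which treats the theorem as immediate from the construction.

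The step that fails is your dismissal of descendants in the converse direction: you claim that because at least one of $u,v$ is absent from $S(\nu)$ for a proper descendant $\nu$ of $\mu$, ``no DAG below $\mu$ stores a cut separating $u$ and $v$.'' A Steiner $(s,t)$-mincut for $S(\nu)$ is still an $(s,t)$-cut of the \emph{entire} vertex set $V$, and it may well place $u$ and $v$ on opposite sides even though one of them is not a Steiner vertex of $S(\nu)$; its capacity is $\lambda_{S(\nu)}$, which by Fact \ref{fact : capacity at least lambda} can strictly exceed $\lambda_{S(\mu)}$. (Already on the path $s\to a\to b\to t$ with capacities $3,2,1$: the cut $\{s\}$ is stored at the node for the Steiner set $\{s,a\}$, separates $s$ from $b$, yet has capacity $3$ while the least capacity separating $s$ and $b$ is $2$.) So the literal ``$1$-transversal in at least one of the ${\mathcal O}(n)$ DAGs'' biconditional cannot be saved by your argument; the converse only holds when restricted to the specific DAGs at $\mu$ indexed by the nodes containing $u$ and $v$ and the pivot, exactly as in Lemma \ref{lem : Steiner st-mincuts}. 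This looseness originates in the theorem's own phrasing rather than in your forward argument, but your proposed repair of it is incorrect and should be replaced by the restriction to the pair's DAGs at $\textsc{lca}({\mathcal L}(u),{\mathcal L}(v))$.
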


 Exploiting the result of Theorem \ref{thm : complete characterization for separating cuts}, We present the following structure ${\mathcal S}_{vit}(G)$ that stores and characterizes all mincuts for all vital edges in graph $G$.

\paragraph*{Structure ${\mathcal S}_{vit}(G)$:} It consists of the following two structures. 
\begin{enumerate}
    \item For tight edges, structure $T_{\mathfrak B}^A$ from Theorem \ref{thm : complete characterization for separating cuts}.
    \item For loose edges, DAG ${\mathcal D}_{PQ}(G_{(u,v)})$ for each loose edge $(u,v)$.
\end{enumerate}
This completes the proof of Theorem \ref{thm : vital complete characterization}.
\begin{note}
    The previously existing data structure for Problem \ref{prob : compact structure for all pairs} is only for unweighted graphs, and it occupies ${\mathcal O}(mn)$ space \cite{baswana2023minimum+}. Moreover, it is only for those pair of vertices for which $(s,t)$-cut of the least capacity separating them has capacity minimum+1. 
    Hence, Theorem \ref{thm : vital complete characterization} provides a generalization of the result in \cite{baswana2023minimum+} for weighted graphs while matching the space bound. An important application of structure ${\mathcal S}_{vit}(G)$ is Theorem \ref{thm : main result}(3).
\end{note}
\section{An Application of Compact Structure to Sensitivity Oracle} \label{sec : application sensitivity oracle}
In this section, we establish Theorem \ref{thm : main result}(3). In particular, we address the problem of reporting DAG ${\mathcal D}_{PQ}(G')$ where $G'$ is the graph obtained from $G$ after reducing the capacity of any given edge $e$ by a given value $\Delta>0$. To solve this problem, we design a compact data structure, denoted by ${\mathcal X}_{vit}(G)$, for graph $G$ that can efficiently answer the following query. \\ 
 
\noindent
$\textsc{AllCuts}(e):$ Returns the DAG ${\mathcal D}_{PQ}(G_e)$ that stores and characterizes all mincuts for vital edge $e$.\\

For graph $G$, we have DAG ${\mathcal D}_{PQ}(G)$ for storing and characterizing all $(s,t)$-mincuts \cite{DBLP:journals/mp/PicardQ80}. 
 DAG ${\mathcal D}_{PQ}(G)$ is obtained by contracting every strongly connected component into a single node in the residual graph corresponding to a maximum $(s,t)$-flow \cite{DBLP:journals/mp/PicardQ80}. Let $\tau$ be any topological ordering of the nodes of ${\mathcal D}_{PQ}(G)$. 
For any vertex $x$ in $G$, let $\phi(x)$ be the node in ${\mathcal D}_{PQ}(G)$ to which $x$ is mapped. Likewise, for any topological ordering $\tau$ of the nodes of ${\mathcal D}_{PQ}(G)$, 
$\tau(x)$ refers to the topological number of $\phi(x)$. Observe that $\phi(t)$ appears before $\phi(s)$ in any topological ordering of ${\mathcal D}_{PQ}(G)$. The following is an interesting property of DAG ${\mathcal D}_{PQ}(G)$, established in \cite{baswana2023minimum+}.
\begin{lemma} [Lemma 4.4 and Theorem 4.5 in \cite{baswana2023minimum+}] \label{lem : topological order}
    For any topological ordering $\tau$ of ${\mathcal D}_{PQ}(G)$, every suffix of $\tau$ is an $(s,t)$-mincut in $G$.
\end{lemma}
 Given the set of edges $E$ of graph $G$ and topological ordering $\tau$, by exploiting Lemma \ref{lem : topological order}, we present a procedure, denoted by $\textsc{Construct}\_{\mathcal D}_{PQ}$, to construct DAG ${\mathcal D}_{PQ}(G)$ in ${\mathcal O}(m)$ time as follows. This procedure is crucially used to answer Query \textsc{AllMincut} later in this section.\\

\noindent
$\textsc{Construct}\_{\mathcal D}_{PQ}(E,\tau)$: Initially, let ${\mathcal D}$ be a graph with a node set the same as the node set of $\tau$. Consider an edge $(x,y)\in E$. It follows from Lemma \ref{lem : topological order} that if for edge $(x,y)$, $\tau(x)<\tau(y)$ then $(x,y)$ is an incoming edge to the $(s,t)$-mincut defined by the smalles suffix of $\tau$ containing node $\phi(y)$. 
Therefore, it follows from Theorem \ref{thm : a special assignment of flow}$(1)$ that $(x,y)$ is a nonvital edge. So, we flip the orientation of edge $(x,y)$ and add it to ${\mathcal D}$. Similarly, if for edge $(x,y)$ we have $\tau(x)>\tau(y)$, then it follows from Lemma \ref{lem : topological order} that $(x,y)$ is a contributing edge of $(s,t)$-mincut defined by the smallest suffix of $\tau$ containing node $\phi(x)$. So, edge $(x,y)$ must appear in ${\mathcal D}_{PQ}(G)$, and hence we add $(x,y)$ to ${\mathcal D}$. Otherwise, we have $\phi(x)=\phi(y)$. In this case, edge $(x,y)$ is not appearing in ${\mathcal D}_{PQ}(G)$ as there is no $(s,t)$-mincut that separates $x$ and $y$. We repeat this procedure for every edge in $E$. It is shown in Theorem B.3 of \cite{baswana2023minimum+} that flipping each edge of ${\mathcal D}$, we arrive at DAG ${\mathcal D}_{PQ}(G)$.\\


We now analyze the running time of the procedure $\textsc{Construct}\_{\mathcal D}_{PQ}$. Creating the nodes of ${\mathcal D}_{PQ}(G)$ requires ${\mathcal O}(n)$ time. The overall time is ${\mathcal O}(m)$ because, for each edge, $\textsc{Construct}\_{\mathcal D}_{PQ}$ executes only a constant number of operations. 

\subsection*{A Data Structure for Answering AllMincut}  Recall that for each vital edge $e$, ${\mathcal S}_{vit}(G)$ stores at most two DAGs for storing and characterizing all mincuts for edge $e$ as follows. For each loose edge $(u,v)$, a DAG ${\mathcal D}_{PQ}(G_{(u,v)})$ is stored. For every tight edge $(u,v)$, a pair of DAGs are stored. Let $\mu$ be the \textsc{lca} of $u$ and $v$ in $T_{\mathfrak B}^A$. The pair of DAGs are ${\mathcal D}_{PQ}(G^t(S(\mu))_{(\alpha,\nu)})$ and ${\mathcal D}_{PQ}(G(S(\mu))^s_{(\nu,\beta)})$, where $\nu$ is a pivot node in $G(S(\mu))$ and vertices $u$,  $v$ are mapped to nodes $\alpha$, $\beta$ respectively in graph $G(S(\mu))$. For simplicity, we denote ${\mathcal D}_{PQ}(G^t(S(\mu))_{(\alpha,\nu)})$ by ${\mathcal D}_{PQ}(H_{(\alpha,\nu)})$ and ${\mathcal D}_{PQ}(G^s(S(\mu))_{(\nu,\beta)})$ by ${\mathcal D}_{PQ}(H_{(\nu,\beta)})$. We now construct our data structure.

\paragraph*{Description of ${\mathcal X}_{vit}(G)$:} It consists of the following three components.
\begin{enumerate}
    \item  For each DAG ${\mathcal D}$ that belongs to ${\mathcal S}_{vit}(G)$, we store only a topological ordering of the nodes of ${\mathcal D}$ instead of ${\mathcal D}$.
    \item  Set $E$ of all edges of graph $G$.
    \item Set $E_{loose}$ of all loose edges of graph $G$.\\
\end{enumerate}

\noindent
Set $E$ and explicit marking of each vital edge as loose and tight edge occupies ${\mathcal O}(m)$ space.  ${\mathcal S}_{vit}(G)$ consists of ${\mathcal O}(n)$ DAGs. A topological ordering of the nodes of each of the ${\mathcal O}(n)$ DAGs occupies ${\mathcal O}(n)$ space. Therefore, the space occupied by data structure ${\mathcal X}_{vit}(G)$ is ${\mathcal O}(n^2)$.

 We now answer Query $\textsc{AllCuts}(e=(u,v))$ for a given vital edge $e$ using ${\mathcal X}_{vit}(G)$. Suppose $e$ is a loose edge. In ${\mathcal X}_{vit}(G)$, a topological ordering of ${\mathcal D}_{PQ}(G_{(u,v)})$ is stored. Therefore, using the procedure $\textsc{Construct}\_{\mathcal D}_{PQ}$, we can construct 
 DAG ${\mathcal D}_{PQ}(G_{(u,v)})$ in ${\mathcal O}(m)$ time. 
 Suppose $e$ is a tight edge. 
It follows from the construction of ${\mathcal X}_{vit}(G)$ that a pair of topological orderings $\tau_s$ and $\tau_t$ for the nodes of DAGs ${\mathcal D}_{PQ}(H_{(\alpha,\nu)})$ and ${\mathcal D}_{PQ}(H_{(\nu,\beta)})$, respectively, are stored in ${\mathcal X}_{vit}(G)$. 
For topological ordering $\tau_s$ and $\tau_t$, we use procedure $\textsc{Construct}\_{\mathcal D}_{PQ}$ to construct DAGs ${\mathcal D}_{PQ}(H_{(\alpha,\nu)})$ and ${\mathcal D}_{PQ}(H_{(\nu,\beta)})$ in ${\mathcal O}(m)$ time. Recall that all $(s,t)$-cuts of the least capacity that keep $u$ on the side of $s$ and $v$ on the side of $t$ are stored in the two graphs. However, it is quite possible that the node containing vertex $v$ does not belong to the sink node in ${\mathcal D}_{PQ}(H_{(\alpha,\nu)})$. Similarly, the node containing vertex $u$ might not belong to the source node in ${\mathcal D}_{PQ}(H_{(\nu,\beta)})$. But, we are interested only in those $(s,t)$-cuts that separate $u$ and $v$. For this purpose, we construct a pair of DAGs from ${\mathcal D}_{PQ}(H_{(\alpha,\nu)})$ and ${\mathcal D}_{PQ}(H_{(\nu,\beta)})$ using the following lemma, established in \cite{DBLP:journals/mp/PicardQ80}. 
\begin{lemma} [\cite{DBLP:journals/mp/PicardQ80}] \label{lem : uv dag using pq}
    Let $\{x,y\}$ be any pair of vertices in $G$ and there exists an $(s,t)$-mincut $C$ such that $x\in C$ and $y\in \overline{C}$. There is an  algorithm that, given ${\mathcal D}_{PQ}(G)$, can construct DAG ${\mathcal D}_{PQ}(G_{(x,y)})$ in ${\mathcal O}(m)$ time. 
\end{lemma}
Let ${\mathcal D}_1$ and ${\mathcal D}_2$ be the DAGs obtained from ${\mathcal D}_{PQ}(H_{(\alpha,\nu)})$ and ${\mathcal D}_{PQ}(H_{(\nu,\beta)})$ using Lemma \ref{lem : uv dag using pq} that store  the set of all mincuts for edge $(u,v)$. The final step for answering Query $\textsc{AllCuts}(e)$ is the design of DAG ${\mathcal D}_{PQ}(G_{(u,v)})$ from $G$ using the two DAGs ${\mathcal D}_1$ and ${\mathcal D}_2$. \\
Let $(x,y)$ be an edge in $G$. Note that if edge $(x,y)$ does not appear in any of the two DAGs ${\mathcal D}_1$ and ${\mathcal D}_2$, then $(x,y)$ edge must not appear in ${\mathcal D}_{PQ}(G_{(u,v)})$ as there is no mincut for edge $(u,v)$ that separates $x$ and $y$. So, we contract both endpoints $x$ and $y$ into a single node. Suppose $(x,y)$ appears in exactly one of the two DAGs ${\mathcal D}_1$ and ${\mathcal D}_2$. Then, we keep the orientation of edge $(x,y)$ as the orientation it has in ${\mathcal D}_1$ or ${\mathcal D}_2$. Now suppose $(x,y)$ appears in both the DAGs ${\mathcal D}_1$ and ${\mathcal D}_2$. The following lemma ensures that the orientation of edge $(x,y)$ either remains the same or is flipped in both the DAGs. 
\begin{lemma} \label{lem : gamma 0 for all mincuts for an edge}
    For any pair of mincuts $\{C,C'\}$ for a vital edge $e=(p,q)$, there is no edge that lies between $C\setminus C'$ and $C'\setminus C$. 
\end{lemma}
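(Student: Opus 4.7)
The plan is to use a submodularity argument on the capacity function together with the fact that both $C$ and $C'$ are \emph{minimum} cuts for the same vital edge $e=(p,q)$. If one of $C,C'$ is contained in the other, then one of $C\setminus C'$ or $C'\setminus C$ is empty and the statement is trivial, so assume they are crossing. Since $e$ contributes to both $C$ and $C'$, we have $p\in C\cap C'$ and $q\in \overline{C\cup C'}$; in particular $C\cap C'$ and $C\cup C'$ are both valid $(s,t)$-cuts, and $e$ is a contributing edge of each of them (because $p$ belongs to both and $q$ to neither).

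Next I would establish the following standard accounting identity by a case analysis on each edge $(x,y)\in E$ according to which of the four regions $C\cap C'$, $C\setminus C'$, $C'\setminus C$, $\overline{C\cup C'}$ contains $x$ and $y$:
\begin{equation*}
c(C)+c(C')=c(C\cap C')+c(C\cup C')+\Sigma,
\end{equation*}
where $\Sigma$ denotes the sum of capacities of all edges with one endpoint in $C\setminus C'$ and the other in $C'\setminus C$ (in either direction). The only cases contributing a positive surplus to $\Sigma$ are edges from $C\setminus C'$ to $C'\setminus C$ (counted once in $c(C)$ but nowhere on the right) and edges from $C'\setminus C$ to $C\setminus C'$ (counted once in $c(C')$ but nowhere on the right); every other case cancels.

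Now I would invoke the minimality of $C$ and $C'$. Because $C$ and $C'$ are mincuts for $e$ and $e$ is a contributing edge of both $C\cap C'$ and $C\cup C'$, we have $c(C\cap C')\ge c(C)$ and $c(C\cup C')\ge c(C)=c(C')$. Substituting into the identity gives $\Sigma\le 0$, and since all capacities in $G$ are strictly positive, $\Sigma=0$ forces the underlying set of edges to be empty. This is precisely the statement of the lemma.

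The main obstacle, and essentially the only nontrivial point, is the clean case analysis that produces the submodular identity with the exact residual term $\Sigma$; in particular one must be careful that it is the capacities of edges in \emph{both} directions between $C\setminus C'$ and $C'\setminus C$ (not just one) that appear, and that no spurious terms arise from edges touching $C\cap C'$ or $\overline{C\cup C'}$. Once this identity is in hand, the minimality of $C$ and $C'$ as mincuts for the single vital edge $e$ closes the argument immediately.
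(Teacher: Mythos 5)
Your proof is correct, but it takes a genuinely different route from the paper's. The paper argues via its flow machinery: it passes to the auxiliary graph $H_{(p,q)}$ (adding infinite-capacity edges $(s,p)$ and $(q,t)$), in which mincuts for $e$ become $(s,t)$-mincuts; a hypothetical edge $(w,z)$ between $C\setminus C'$ and $C'\setminus C$ would then contribute to one $(s,t)$-mincut of $H_{(p,q)}$ (hence be vital there, by \textsc{FlowCut} and Lemma \ref{lem : vital edge in every maximum flow}) while being incoming to the other (hence nonvital there, by Lemma \ref{lem:incoming-edge-irrelevant}) --- a contradiction. You instead use the exact uncrossing identity $c(C)+c(C')=c(C\cap C')+c(C\cup C')+\Sigma$, where $\Sigma$ is the total capacity of edges in both directions between $C\setminus C'$ and $C'\setminus C$; since $e$ contributes to both $C\cap C'$ and $C\cup C'$, minimality forces $\Sigma\le 0$, and strict positivity of capacities (which the paper does assume) forces the edge set to be empty. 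Your identity is verified by the standard four-region case analysis and is correct, and your handling of the nested case and of edges in both directions is right. The trade-off: your argument is more elementary and self-contained (no flow decomposition, no auxiliary graph), and it strengthens the paper's stated submodularity inequality (Lemma \ref{lem : Submodularity of cuts}) to the exact form with the residual term; the paper's argument, by contrast, reuses lemmas it has already established and fits the narrative thread built around \textsc{GenFlowCut}. Either proof is acceptable.
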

\begin{proof}
     Let $H_{(p,q)}$ be the graph obtained from $G$ by adding a pair of infinite capacity edges -- $(s,p)$ and $(q,t)$. Observe that an $(s,t)$-cut $A$ in $G$ is an $(s,t)$-mincut in $H_{(p,q)}$ if and only if $A$ is a mincut for edge $(p,q)$ in $G$. 
     Suppose there is an edge $(w,z)$ that lies between $C\setminus C'$ and $C'\setminus C$. Without loss of generality, assume that $(w,z)$ is a contributing edge of $C$. Since $C$ is an $(s,t)$-mincut in $H_{(p,q)}$, it follows from \textsc{FlowCut} property and Lemma \ref{lem : vital edge in every maximum flow} that each edge that contributes to $C$ is a vital edge in $H_{(p,q)}$. So, edge $(w,z)$ is a vital edge in $H_{(p,q)}$. Edge $(w,z)$ is an incoming edge of $C'$, and it is given that $C'$ is a mincut for vital edge $(p,q)$ in $G$. Therefore, it follows from Lemma \ref{lem:incoming-edge-irrelevant} that $(w,z)$ is a nonvital edge in $H_{(p,q)}$, a contradiction. 
\end{proof}
The following corollary is immediate from Lemma \ref{lem : gamma 0 for all mincuts for an edge}.
\begin{corollary}  \label{cor : same orientation}
      Let $(x,y)$ be any edge in $G$. For any pair of mincuts $\{C,C'\}$ for a vital edge that separate $x$ and $y$, edge $(x,y)$ is contributing to $C$ if and only if $(x,y)$ is contributing to $C'$.  
\end{corollary}
It follows from Corollary \ref{cor : same orientation} that the orientation of edge $(x,y)$ is consistent in both the DAGs  ${\mathcal D}_1$ and ${\mathcal D}_2$. So, we keep the orientation of edge $(x,y)$ as the orientation it has in ${\mathcal D}_1$ and ${\mathcal D}_2$. Repeating this process for every edge in $G$, we arrive at the DAG ${\mathcal D}_{PQ}(G_{(u,v)})$. It is easy to observe that the overall time taken by the procedure is ${\mathcal O}(m)$. Hence, Query $\textsc{AllCuts}(e)$ is answered using ${\mathcal X}_{vit}(G)$ in ${\mathcal O}(m)$ time, and the following theorem is immediate.
\begin{theorem} \label{thm : Query AllCuts}
    For any directed weighted graph $G$ on $n=|V|$ vertices with a designated source vertex $s$ and a designated sink vertex $t$, there is an ${\mathcal O}(n^2)$ space data structure ${\mathcal X}_{vit}(G)$ that, given any vital edge $(u,v)$ in $G$, can report the DAG ${\mathcal D}_{PQ}(G_{(u,v)})$ in ${\mathcal O}(m)$ time.   
\end{theorem}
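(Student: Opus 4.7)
The plan is to design ${\mathcal X}_{vit}(G)$ as a compressed version of the structure ${\mathcal S}_{vit}(G)$ from Theorem \ref{thm : vital complete characterization}. The key observation is that each constituent DAG of ${\mathcal S}_{vit}(G)$ can be reconstructed in ${\mathcal O}(m)$ time from just a topological ordering of its nodes together with the edge set $E$; this is justified by Lemma \ref{lem : topological order} combined with Theorem \ref{thm : a special assignment of flow}(1), which tells us that an edge going forward in the topological ordering must be incoming to some $(s,t)$-mincut (hence nonvital, and thus flipped in the construction of the DAG), while an edge going backward contributes to some $(s,t)$-mincut. Therefore ${\mathcal X}_{vit}(G)$ stores (i) one topological ordering per DAG of ${\mathcal S}_{vit}(G)$ (totaling ${\mathcal O}(n^2)$ since there are ${\mathcal O}(n)$ DAGs, each with ${\mathcal O}(n)$ nodes), (ii) the edge set $E$, and (iii) the marking of loose edges. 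The reconstruction procedure $\textsc{Construct}\_{\mathcal D}_{PQ}$ is exactly the routine described in the excerpt and runs in ${\mathcal O}(m)$ time.

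The query algorithm for $\textsc{AllCuts}(e=(u,v))$ branches on the loose/tight marker of $e$. If $e$ is loose, ${\mathcal S}_{vit}(G)$ already stores a DAG for ${\mathcal D}_{PQ}(G_{(u,v)})$ directly, so a single call to $\textsc{Construct}\_{\mathcal D}_{PQ}$ on the stored ordering recovers the answer in ${\mathcal O}(m)$ time. If $e$ is tight, I would locate $\mu = \textsc{lca}({\mathcal L}(u),{\mathcal L}(v))$ in $T_{\mathfrak B}^A$, retrieve the two orderings associated with $\mu$, and invoke $\textsc{Construct}\_{\mathcal D}_{PQ}$ twice to obtain the pair of DAGs ${\mathcal D}_{PQ}(H_{(\alpha,\nu)})$ and ${\mathcal D}_{PQ}(H_{(\nu,\beta)})$. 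These collectively store all mincuts for $(u,v)$ but are centered on the pivot $\nu$ rather than on $\{u,v\}$, so I would apply Lemma \ref{lem : uv dag using pq} to each to yield DAGs ${\mathcal D}_1$ and ${\mathcal D}_2$ whose $1$-transversal cuts are exactly the mincuts for $(u,v)$.

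The main obstacle is the final step: merging ${\mathcal D}_1$ and ${\mathcal D}_2$ into the single DAG ${\mathcal D}_{PQ}(G_{(u,v)})$ in ${\mathcal O}(m)$ time without inconsistency. The crux is to establish that for every edge $(x,y)\in E$, its orientation in ${\mathcal D}_1$ and in ${\mathcal D}_2$ agrees (after the standard flipping convention). I would prove this by establishing Lemma \ref{lem : gamma 0 for all mincuts for an edge}: for any pair of mincuts $C, C'$ for the vital edge $(u,v)$, the set of edges lying between $C\setminus C'$ and $C'\setminus C$ is empty. The argument passes to the auxiliary graph $H_{(u,v)}$, in which $C$ and $C'$ become $(s,t)$-mincuts, and then applies \textsc{FlowCut} together with Lemma \ref{lem:incoming-edge-irrelevant}: a $\Gamma$-edge would simultaneously contribute to one mincut (hence be vital in $H_{(u,v)}$) and be incoming to another (hence nonvital in $H_{(u,v)}$), a contradiction. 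The corollary is immediate: no edge can be contributing in one of ${\mathcal D}_1, {\mathcal D}_2$ and incoming in the other.

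With orientation consistency in hand, the merge rule becomes straightforward: scan $E$ once; for an edge $(x,y)$ absent from both DAGs, contract its endpoints (no mincut for $(u,v)$ separates them); for an edge present in at least one DAG, adopt that consistent orientation. One topological reconstruction, two invocations of Lemma \ref{lem : uv dag using pq}, and one linear merge pass each cost ${\mathcal O}(m)$, yielding total query time ${\mathcal O}(m)$ and completing the proof. The space bound ${\mathcal O}(n^2)$ follows by summing over the ${\mathcal O}(n)$ stored orderings plus the ${\mathcal O}(m)=O(n^2)$ edge set.
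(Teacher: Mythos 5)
Your proposal is correct and follows essentially the same route as the paper: store a topological ordering per DAG of ${\mathcal S}_{vit}(G)$ plus the edge set and loose-edge markers, rebuild via $\textsc{Construct}\_{\mathcal D}_{PQ}$, recentre the two DAGs on $\{u,v\}$ with Lemma \ref{lem : uv dag using pq}, and merge them using the orientation-consistency guaranteed by Lemma \ref{lem : gamma 0 for all mincuts for an edge}, which you prove exactly as the paper does (passing to $H_{(u,v)}$ and combining \textsc{FlowCut} with Lemma \ref{lem:incoming-edge-irrelevant}). No gaps; this matches the paper's argument in both structure and the key lemma.
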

The existing algorithm to compute DAG ${\mathcal D}_{PQ}(G_{(u,v)})$ from scratch invokes computing a maximum  $(s,t)$-flow in graph $G_{(u,v)}$, as stated in \cite{DBLP:journals/mp/PicardQ80}. 
Refer to Table
\ref{tab : Comparison between maxflow and query algorithm} for the state-of-the-art algorithms for computing a maximum $(s,t)$-flow. It can be observed that,
compared to the best deterministic algorithms for computing ${\mathcal D}_{PQ}(G_{(u,v)})$ from scratch, our data structure (Theorem \ref{thm : Query AllCuts}) can report ${\mathcal D}_{PQ}(G_{(u,v)})$ in time faster by polynomial factors (refer to Table \ref{tab : Comparison between maxflow and query algorithm}).

\begin{table}[ht]
\small
    \centering
    \begin{tabular}{|c|c|}
        \hline
         \textbf{} & Running Time \\
         \hline
          \hline
           Deterministic \& &     \\
         Strongly Polynomial \cite{DBLP:journals/jal/KingRT94}& ${\mathcal O}(mn\log_{m/n \log{n}}{n})$  \\
          \hline
           Deterministic \& &    \\
           Weakly Polynomial \cite{chen2023almost, van2023deterministic} & $m^{1+o(1)}\log{W}$  \\
          \hline 
\end{tabular} 
 \caption{State-of-the-art algorithms for computing a maximum $(s,t)$-flow. Here, $W$ is the maximum capacity of any edge in $G$.}
    
\label{tab : Comparison between maxflow and query algorithm}
\end{table}
\paragraph*{Answering Query $\textsc{AllMincut}$:} Given any edge $e=(u,v)$ and any value $\Delta>0$, we can verify in ${\mathcal O}(1)$ time, using tree ${\mathcal T}_{vit}(G)$ (Theorem \ref{thm : reporting value}) and $w(e)$, whether the capacity of  $(s,t)$-mincut reduces after reducing the capacity of edge $e$ by $\Delta$. Suppose the capacity of $(s,t)$-mincut reduces; otherwise report DAG ${\mathcal D}_{PQ}(G)$. It is easy to observe that each mincut for edge $e$ becomes an $(s,t)$-mincut. So, we report in ${\mathcal O}(m)$ time the DAG ${\mathcal D}_{PQ}(G_{(u,v)})$ using data structure ${\mathcal X}_{vit}(G)$ in Theorem \ref{thm : Query AllCuts}. Finally, in ${\mathcal D}_{PQ}(G_{(u,v)})$, we update the capacity of edge $e$ to $w(e)-\Delta$. This completes the proof of Theorem \ref{thm : main result}(3).

\section{Lower Bounds} \label{sec : lower bounds}
In this section, we first provide a lower bound on the worst case size of mincut cover for all edges. Later, we give a lower bound on the space for any data structure that can report the capacity of $(s,t)$-mincut after the failure of any edge in both undirected and directed graphs.
\subsection{Mincut Cover for All Edges}\label{sec : su vt lower bound}
In this section, we establish the following theorem.
\begin{theorem} \label{thm : su vt lower bound}
     There exists a directed weighted graph $G=(V,E)$ on $n=|V|$ vertices such that the number of mincuts for all edges in $G$ is $\Omega(n^2)$.    
 \end{theorem}
 In order to prove Theorem \ref{thm : su vt lower bound}, we construct a graph ${\mathcal G}$ on $2n+2$ vertices with a set of $\Omega(n^2)$ edges $E'$ that has the following property. For each pair of edges $e,e'\in E'$, the capacity of mincut for edge $e$ is different from the capacity of mincut for edge $e'$ (refer to Figure \ref{fig : n2 values}$(i)$). 

  \begin{figure}
 \centering
    \includegraphics[width=\textwidth]{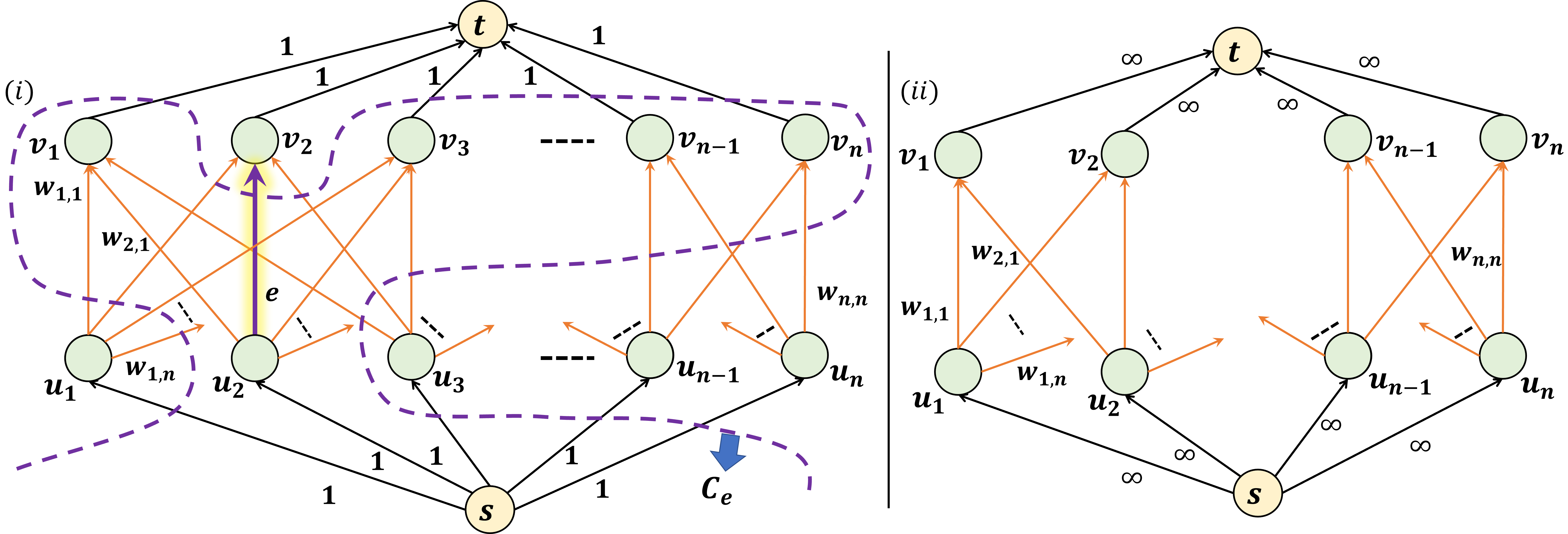} 
  \caption{($i$) Graph ${\mathcal G}$. The dashed curve represents the mincut for edge $(u_2,v_2)$. ($ii$) Graph $G(M)$. Each edge $(u_i,v_j)$, $1\le i,j \le n$, is a vital as well as tight edge.}   \label{fig : n2 values}
\end{figure}

\paragraph*{Construction of ${\mathcal G}$:} The vertex set consists of two disjoint sets $A=\{ u_1,u_2,\ldots,u_n\}$ and $B=\{ v_1,v_2,\ldots,v_n\}$ of $n$ vertices each, along with a source vertex $s$ and a sink vertex $t$. 
The edges are defined as follows. There is an edge with unit capacity from vertex $s$ to each vertex in $A$. Likewise, there is a unit capacity edge from each vertex in $B$ to sink $t$. 
For each $u\in A$ and $v\in B$, there is an edge $(u,v)$.
Let $E'$ be the set of all edges from $A$ to $B$. 
Each edge in $E'$ is assigned a unique capacity from 
$[n^2,2n^2]$. 
Hence, $w(e_1)\ne w(e_2)$
for every pair of edges $e_1,e_2\in E'$.

 \begin{lemma} \label{lem : n2 values cuts}
     For every pair of edges $e_1$ and $e_2$ from $E'$, the capacity of mincut for $e_1$ is different from the capacity of mincut for $e_2$.
 \end{lemma}
 \begin{proof}
     Let $e=(u,v)$ be an edge from $E'$. Let us define an $(s,t)$-cut $C_e$ in which edge $e$ contributes. 
     $C_e$ keeps each vertex $x\in A\setminus \{u\}$ on the side of $t$. Similarly, $C_e$ keeps each vertex $x\in B\setminus \{v\}$ on the side of $s$. Therefore, the contributing edges of $(s,t)$-cut $C_e$ are edge $e$, the edges from $s$ to each vertex $x\in A\setminus \{u\}$, and the edges from each vertex $x\in B\setminus \{v\}$ to $t$. Hence the capacity of $C_e$ is $w(e)+(n-1)+(n-1)=w(e)+2n-2$.
     Any other $(s,t)$-cut in which edge $e$ contributes must have at least one more contributing edge $e'$ from $E'$. 
     Since $w(e')$ is at least $n^2$, therefore, $C_e$ is the mincut for edge $e$. 
     For any other edge $e''$ from $E'$, in a similar way, the capacity of the mincut for $e''$ is $w(e'')+2n-2$, which is different from the capacity of $C_e$ since $w(e)\ne w(e'')$. This completes the proof.
 \end{proof}
Since $|E'|=\Omega(n^2)$, it follows from Lemma \ref{lem : n2 values cuts} that graph ${\mathcal G}$ has $\Omega(n^2)$ different capacities of mincuts for edges in $E'$. This completes the proof of Theorem \ref{thm : su vt lower bound}.

\subsection{Fault-tolerant (s,t)-mincut} 
In this section, we provide a lower bound on the space for any data structure that can report the capacity of $(s,t)$-mincut after the failure of any edge.
Let $M$ be a $n\times n$ matrix where for any $i,j\in [n]$, $M[i,j]$ stores an integer in the range $[1,n^c]$ for some constant $c>0$. Given any instance of the matrix ${M}$, we construct the following undirected graph $G(M)$ (refer to Figure \ref{fig : n2 values}($ii$) ignoring the edge directions).

\paragraph*{Description of $G(M)$:}  The vertex set is defined as follows. There is a source vertex $s$ and a sink vertex $t$. For all the $n$ rows of matrix $M$, there is a set $R$ of $n$ vertices $\{u_1,u_2,\ldots,u_n\}$. Similarly, For all the $n$ columns of matrix $M$, there is a set $C$ of $n$ vertices $\{v_1,v_2,\ldots,v_n\}$. \\
For each $i,j\in [n]$, there is an edge $e_{ij}$ of capacity $w_{i,j}=M[i,j]$ between vertex $u_i$ and $v_j$. For each $i\in [n]$, there is an edge of infinite capacity between source $s$ and vertex $u_i$ and there is an edge of infinite capacity between vertex $v_i$ and sink $t$. Let $\lambda$ be the capacity of $(s,t)$-mincut in $G(M)$.\\

 We now establish a relation between the graph $G(M)$ and matrix $M$ in the following lemma.
 
\begin{lemma} \label{lem : lower bound}
   For any $i,j\in [n]$, the capacity of $(s,t)$-mincut in $G(M)$ after the failure of edge $(u_i,v_j)$ is $\lambda-M[i,j]$.
\end{lemma}
\begin{proof}
     In graph $G(M)$, $C=\{s\}\cup R$ is the only $(s,t)$-cut of finite capacity. Therefore, $C$ is the $(s,t)$-mincut  in graph $G(M)$. Observe that only edges $(u_i,v_j)$ for any $0<i,j\le n$ contribute to $C$. Hence $\lambda=\sum_{i,j \in [n]} w((u_i,v_j))$. Therefore, upon failure of edge $e_{ij}=(u_i,v_j)$, the reduction in the capacity of $(s,t)$-mincut is the same as $w(e_{ij})$, which is $M[i,j]$ as follows from the construction of $G(M)$ described above.
\end{proof}
Let $F(G(M))$ be any data structure that can report the $(s,t)$-mincut after the failure of an edge in the graph $G(M)$. We use the data structure $F(G(M))$ and Lemma \ref{lem : lower bound} to report $M[i,j]$ for any $0<i,j\le n$ as follows. 

    { \textit{return $(\lambda-\lambda')$ where $\lambda'$ is the value returned by $F(G(M))$ after failure of edge $(u_i,v_j)$.} }

Note that there are $2^{n^2c\log{n}}$ different instances of the matrix $M$. 
It follows from Lemma \ref{lem : lower bound} that for any pair of distinct instances of matrix $M$, the encoding of the corresponding data structures must differ.  
Therefore, there is an instance of matrix $M$ for which the data structure $F(G(M))$ must occupy $\Omega(n^2\log{n})$ bits of space. 

For directed graphs, a lower bound of $\Omega(n^2\log{n})$ bits of space can be achieved by orienting each undirected edge of $G(M)$ as follows. For each $i,j\in [n]$, orient edge $(u_i,v_j)$ from vertex $u_i$ to vertex $v_j$. For each $i\in [n]$, orient edge $(s,u_i)$ from source $s$ to vertex $u_i$. Similarly, for each $j\in [n]$, orient edge $(v_j,t)$ from vertex $v_j$ to sink $t$. The rest of the proof is the same as the case of undirected graphs. This completes the proof of Theorem \ref{thm : lower bound}.

By exploiting Theorem \ref{thm : su vt lower bound}, the following data structure lower bound can also be established along similar lines to the proof of Theorem \ref{thm : lower bound}.
\begin{theorem} \label{thm : n^2 space for su - vt}
    Any data structure that, given a pair of vertices $\{u,v\}$, can report the capacity of an $(s,t)$-cut $C$ of the least capacity such that $u\in C$ and $v\in \overline{C}$ in a directed weighted graph on $n$ vertices must occupy $\Omega(n^2\log{n})$ bits of space. 
\end{theorem}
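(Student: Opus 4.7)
}
The plan is to use an information-theoretic encoding argument that mirrors the structure of the proof of Theorem \ref{thm : lower bound}, but routed through the construction ${\mathcal G}$ from Section \ref{sec : su vt lower bound} (which already realises $\Omega(n^2)$ distinct mincut capacities on $\Theta(n)$ vertices). The key observation enabling the reduction is that, by Lemma \ref{lem : n2 values cuts}, in the graph ${\mathcal G}$ the unique $(s,t)$-cut of the least capacity in which an edge $(u_i,v_j)$ between $A$ and $B$ is contributing is the cut $C_{ij}$ of capacity $w((u_i,v_j)) + 2n - 2$, and $C_{ij}$ is simultaneously the least-capacity $(s,t)$-cut that keeps $u_i$ on the side of $s$ and $v_j$ on the side of $t$. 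Hence the ``least-capacity separating cut'' query of the theorem evaluates, on input $(u_i,v_j)$, to exactly $w((u_i,v_j)) + 2n - 2$.

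The first step is to define the encoding. Given any $n \times n$ matrix $M$ with entries in $[1, n^c]$ for a constant $c > 0$, I would construct the graph ${\mathcal G}(M)$ exactly as in Section \ref{sec : su vt lower bound}, except that the weight of the edge $(u_i, v_j)$ is set to $w((u_i, v_j)) = n^2 + M[i,j]$. The $n^2$ additive offset preserves the hypothesis $w((u_i, v_j)) \geq n^2$ that was used in the proof of Lemma \ref{lem : n2 values cuts}, so that lemma still applies verbatim and each $C_{ij}$ remains the unique mincut for the edge $(u_i,v_j)$, and moreover the unique $(s,t)$-cut of least capacity that separates $u_i$ (on the $s$-side) from $v_j$ (on the $t$-side). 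The auxiliary care that must be taken here is verifying that the $n^2$ offset does not create a tie with any $(s,t)$-cut of the form appearing in the proof of Lemma \ref{lem : n2 values cuts} that contains two or more edges between $A$ and $B$; this is where I expect the only real subtlety, but it follows immediately because any such cut contains at least two inter-part edges, each of weight $\geq n^2 + 1$, so its capacity strictly exceeds that of $C_{ij}$ for any single choice of $(i,j)$.

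The second step is the decoding. Let ${\mathcal F}({\mathcal G}(M))$ be any data structure of the type in the theorem. Given indices $i,j \in [n]$, query ${\mathcal F}({\mathcal G}(M))$ on the ordered pair $(u_i, v_j)$ and receive a value $\lambda'$. By the discussion above $\lambda' = n^2 + M[i,j] + 2n - 2$, so $M[i,j] = \lambda' - n^2 - 2n + 2$ is recovered in constant time. Consequently, distinct matrices $M \neq M'$ must be encoded by distinct bit strings of ${\mathcal F}$, for otherwise some entry would be decoded identically in two different matrices.

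The final step is the counting argument. The number of distinct matrices $M$ is $(n^c)^{n^2} = 2^{c \cdot n^2 \log n}$, so by the pigeonhole principle some instance of ${\mathcal G}(M)$ forces ${\mathcal F}$ to use $\Omega(n^2 \log n)$ bits. The whole construction lives on $2n+2$ vertices; after rescaling $n \mapsto n/2 - 1$ this yields the claimed $\Omega(n^2 \log n)$-bit lower bound on graphs with $n$ vertices, independent of the query time, establishing Theorem \ref{thm : n^2 space for su - vt}. I do not expect any step to be genuinely hard: the only thing that needs careful verification is the uniqueness argument for $C_{ij}$ under the shifted weights, which amounts to repeating the two-line case analysis already present in the proof of Lemma \ref{lem : n2 values cuts}.
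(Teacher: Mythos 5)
Your proposal is correct and is essentially the proof the paper intends: the paper proves Theorem \ref{thm : n^2 space for su - vt} by exactly this combination of the graph ${\mathcal G}$ from Section \ref{sec : su vt lower bound} (with edge weights encoding matrix entries shifted by $n^2$, so that Lemma \ref{lem : n2 values cuts} identifies $C_{ij}$ as the least-capacity cut separating $u_i$ from $v_j$) with the counting argument from the proof of Theorem \ref{thm : lower bound}. The only nit is that your tie-breaking remark should compare the \emph{extra} inter-part edge (weight $\ge n^2$) against the $2n-2$ unit edges of $C_{ij}$, both cuts already paying for $(u_i,v_j)$ itself — which is precisely the comparison in Lemma \ref{lem : n2 values cuts} — and note that uniqueness of the minimizing cut is not even needed, only the value of the minimum.
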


\bibliography{main}

\newcommand{\etalchar}[1]{$^{#1}$}
\begin{thebibliography}{VDBCP{\etalchar{+}}23}

\bibitem[ACN01]{DBLP:journals/networks/AnejaCN01}
Yash~P. Aneja, R.~Chandrasekaran, and Kunhiraman Nair.
\newblock Maximizing residual flow under an arc destruction.
\newblock {\em Networks}, 38(4):194--198, 2001.

\bibitem[AFLR19]{DBLP:journals/networks/AusielloFLR19}
Giorgio Ausiello, Paolo~Giulio Franciosa, Isabella Lari, and Andrea Ribichini.
\newblock Max flow vitality in general and st-planar graphs.
\newblock {\em Networks}, 74(1):70--78, 2019.

\bibitem[AMO93]{DBLP:books/daglib/0069809}
Ravindra~K. Ahuja, Thomas~L. Magnanti, and James~B. Orlin.
\newblock {\em Network flows - theory, algorithms and applications}.
\newblock Prentice Hall, 1993.

\bibitem[BBP23]{baswana2023minimum+}
Surender Baswana, Koustav Bhanja, and Abhyuday Pandey.
\newblock Minimum+ 1 (s, t)-cuts and dual-edge sensitivity oracle.
\newblock {\em ACM Transactions on Algorithms}, 19(4):1--41, 2023.

\bibitem[BCC{\etalchar{+}}23]{bilo2023approximate}
Davide Bil{\`o}, Shiri Chechik, Keerti Choudhary, Sarel Cohen, Tobias Friedrich, Simon Krogmann, and Martin Schirneck.
\newblock Approximate distance sensitivity oracles in subquadratic space.
\newblock In {\em Proceedings of the 55th Annual ACM Symposium on Theory of Computing}, pages 1396--1409, 2023.

\bibitem[BCR19]{baswana2019efficient}
Surender Baswana, Keerti Choudhary, and Liam Roditty.
\newblock An efficient strongly connected components algorithm in the fault tolerant model.
\newblock {\em Algorithmica}, 81:967--985, 2019.

\bibitem[Ben95]{DBLP:conf/focs/Benczur95}
Andr{\'{a}}s~A. Bencz{\'{u}}r.
\newblock A representation of cuts within 6/5 times the edge connectivity with applications.
\newblock In {\em 36th Annual Symposium on Foundations of Computer Science, Milwaukee, Wisconsin, USA, 23-25 October 1995}, pages 92--102. {IEEE} Computer Society, 1995.

\bibitem[BFCP{\etalchar{+}}05]{bender2005lowest}
Michael~A Bender, Martin Farach-Colton, Giridhar Pemmasani, Steven Skiena, and Pavel Sumazin.
\newblock Lowest common ancestors in trees and directed acyclic graphs.
\newblock {\em Journal of Algorithms}, 57(2):75--94, 2005.

\bibitem[BGK22]{DBLP:journals/algorithmica/BaswanaGK22}
Surender Baswana, Shiv~Kumar Gupta, and Till Knollmann.
\newblock Mincut sensitivity data structures for the insertion of an edge.
\newblock {\em Algorithmica}, 84(9):2702--2734, 2022.

\bibitem[BP22]{baswana2022sensitivity}
Surender Baswana and Abhyuday Pandey.
\newblock Sensitivity oracles for all-pairs mincuts.
\newblock In {\em Proceedings of the 2022 Annual ACM-SIAM Symposium on Discrete Algorithms (SODA)}, pages 581--609. SIAM, 2022.

\bibitem[CH91]{DBLP:journals/anor/ChengH91}
Chung{-}Kuan Cheng and T.~C. Hu.
\newblock Ancestor tree for arbitrary multi-terminal cut functions.
\newblock {\em Ann. Oper. Res.}, 33(3):199--213, 1991.

\bibitem[Cho16]{choudhary2016optimal}
Keerti Choudhary.
\newblock An optimal dual fault tolerant reachability oracle.
\newblock In {\em 43rd International Colloquium on Automata, Languages, and Programming (ICALP 2016)}. Schloss-Dagstuhl-Leibniz Zentrum f{\"u}r Informatik, 2016.

\bibitem[CKL{\etalchar{+}}23]{chen2023almost}
Li~Chen, Rasmus Kyng, Yang~P Liu, Richard Peng, Maximilian~Probst Gutenberg, and Sushant Sachdeva.
\newblock Almost-linear-time algorithms for maximum flow and minimum-cost flow.
\newblock {\em Communications of the ACM}, 66(12):85--92, 2023.

\bibitem[Din70]{dinitz1970algorithm}
Yefim~A Dinitz.
\newblock An algorithm for the solution of the problem of maximal flow in a network with power estimation.
\newblock In {\em Doklady Akademii nauk}, volume 194, pages 754--757. Russian Academy of Sciences, 1970.

\bibitem[DKL76]{dinitz1976structure}
Efim~A Dinitz, Alexander~V Karzanov, and Michael~V Lomonosov.
\newblock On the structure of the system of minimum edge cuts of a graph.
\newblock {\em Studies in discrete optimization}, pages 290--306, 1976.

\bibitem[DN95]{dinitz19952}
Yefim Dinitz and Zeev Nutov.
\newblock A 2-level cactus model for the system of minimum and minimum+ 1 edge-cuts in a graph and its incremental maintenance.
\newblock In {\em Proceedings of the twenty-seventh annual ACM symposium on Theory of computing}, pages 509--518, 1995.

\bibitem[DV00]{dinitz2000general}
Yefim Dinitz and Alek Vainshtein.
\newblock The general structure of edge-connectivity of a vertex subset in a graph and its incremental maintenance. odd case.
\newblock {\em SIAM Journal on Computing}, 30(3):753--808, 2000.

\bibitem[EK72]{10.1145/321694.321699}
Jack Edmonds and Richard~M. Karp.
\newblock Theoretical improvements in algorithmic efficiency for network flow problems.
\newblock {\em J. ACM}, 19(2):248–264, apr 1972.

\bibitem[FF56]{ford_fulkerson_1956}
L.~R. Ford and D.~R. Fulkerson.
\newblock Maximal flow through a network.
\newblock {\em Canadian Journal of Mathematics}, 8:399–404, 1956.

\bibitem[FF71]{frank1971communication}
Howard Frank and Ivan~T Frisch.
\newblock Communication, transmission, and transportation networks.
\newblock 1971.

\bibitem[FS96]{DBLP:conf/soda/FredericksonS96}
Greg~N. Frederickson and Roberto Solis{-}Oba.
\newblock Increasing the weight of minimum spanning trees.
\newblock In {\'{E}}va Tardos, editor, {\em Proceedings of the Seventh Annual {ACM-SIAM} Symposium on Discrete Algorithms, 28-30 January 1996, Atlanta, Georgia, {USA}}, pages 539--546. {ACM/SIAM}, 1996.

\bibitem[FSO99]{frederickson1999increasing}
Greg~N Frederickson and Roberto Solis-Oba.
\newblock Increasing the weight of minimum spanning trees.
\newblock {\em Journal of Algorithms}, 33(2):244--266, 1999.

\bibitem[GH61]{GH61}
R.~E. Gomory and T.~C. Hu.
\newblock Multi-terminal network flows.
\newblock {\em Journal of the Society for Industrial and Applied Mathematics}, 9(4):551--570, 1961.

\bibitem[GH86]{DBLP:journals/dam/GranotH86}
Frieda Granot and Refael Hassin.
\newblock Multi-terminal maximum flows in node-capacitated networks.
\newblock {\em Discret. Appl. Math.}, 13(2-3):157--163, 1986.

\bibitem[GIP20]{georgiadis2020strong}
Loukas Georgiadis, Giuseppe~F Italiano, and Nikos Parotsidis.
\newblock Strong connectivity in directed graphs under failures, with applications.
\newblock {\em SIAM Journal on Computing}, 49(5):865--926, 2020.

\bibitem[GKLP17]{goldstein2017conditional}
Isaac Goldstein, Tsvi Kopelowitz, Moshe Lewenstein, and Ely Porat.
\newblock Conditional lower bounds for space/time tradeoffs.
\newblock In {\em Algorithms and Data Structures: 15th International Symposium, WADS 2017, St. John’s, NL, Canada, July 31--August 2, 2017, Proceedings 15}, pages 421--436. Springer, 2017.

\bibitem[GN91]{gusfield1991efficient}
Dan Gusfield and Dalit Naor.
\newblock Efficient algorithms for generalized cut-trees.
\newblock {\em Networks}, 21(5):505--520, 1991.

\bibitem[GW20]{DBLP:journals/talg/0001W20}
Fabrizio Grandoni and Virginia~Vassilevska Williams.
\newblock Faster replacement paths and distance sensitivity oracles.
\newblock {\em {ACM} Trans. Algorithms}, 16(1):15:1--15:25, 2020.

\bibitem[Has88]{DBLP:journals/mor/Hassin88}
Refael Hassin.
\newblock Solution bases of multiterminal cut problems.
\newblock {\em Math. Oper. Res.}, 13(4):535--542, 1988.

\bibitem[HHS93]{hung1993most}
Chun-Nan Hung, Lih-Hsing Hsu, and Ting-Yi Sung.
\newblock The most vital edges of matching in a bipartite graph.
\newblock {\em Networks}, 23(4):309--313, 1993.

\bibitem[HL07]{hassin2007flow}
Refael Hassin and Asaf Levin.
\newblock Flow trees for vertex-capacitated networks.
\newblock {\em Discrete applied mathematics}, 155(4):572--578, 2007.

\bibitem[HS02]{hershberger2002erratum}
John Hershberger and Subhash Suri.
\newblock Erratum to" vickrey pricing and shortest paths: What is an edge worth?".
\newblock In {\em Annual Symposium on Foundation of Computer Science}, volume~43, pages 809--810. IEEE COMPUTER SOCIETY PRESS, 2002.

\bibitem[IKP21]{italiano2021planar}
Giuseppe~F Italiano, Adam Karczmarz, and Nikos Parotsidis.
\newblock Planar reachability under single vertex or edge failures.
\newblock In {\em Proceedings of the 2021 ACM-SIAM Symposium on Discrete Algorithms (SODA)}, pages 2739--2758. SIAM, 2021.

\bibitem[Kar00]{DBLP:journals/jacm/Karger00}
David~R. Karger.
\newblock Minimum cuts in near-linear time.
\newblock {\em J. {ACM}}, 47(1):46--76, 2000.

\bibitem[KKKP04]{DBLP:journals/siamcomp/KatzKKP04}
Michal Katz, Nir~A. Katz, Amos Korman, and David Peleg.
\newblock Labeling schemes for flow and connectivity.
\newblock {\em {SIAM} J. Comput.}, 34(1):23--40, 2004.

\bibitem[KRT94]{DBLP:journals/jal/KingRT94}
Valerie King, S.~Rao, and Robert~Endre Tarjan.
\newblock A faster deterministic maximum flow algorithm.
\newblock {\em J. Algorithms}, 17(3):447--474, 1994.

\bibitem[KT19]{DBLP:journals/jacm/KawarabayashiT19}
Ken{-}ichi Kawarabayashi and Mikkel Thorup.
\newblock Deterministic edge connectivity in near-linear time.
\newblock {\em J. {ACM}}, 66(1):4:1--4:50, 2019.

\bibitem[LC93]{DBLP:journals/ipl/LinC93}
Kao{-}Ch{\^{e}}ng Lin and Maw{-}Sheng Chern.
\newblock The most vital edges in the minimum spanning tree problem.
\newblock {\em Inf. Process. Lett.}, 45(1):25--31, 1993.

\bibitem[LRS71]{lubore1971determining}
Stephen~H Lubore, HD~Ratliff, and GT~Sicilia.
\newblock Determining the most vital link in a flow network.
\newblock {\em Naval Research Logistics Quarterly}, 18(4):497--502, 1971.

\bibitem[NPW01]{nardelli2001faster}
Enrico Nardelli, Guido Proietti, and Peter Widmayer.
\newblock A faster computation of the most vital edge of a shortest path.
\newblock {\em Information Processing Letters}, 79(2):81--85, 2001.

\bibitem[Phi93]{phillips1993network}
Cynthia~A Phillips.
\newblock The network inhibition problem.
\newblock In {\em Proceedings of the twenty-fifth annual ACM symposium on Theory of computing}, pages 776--785, 1993.

\bibitem[PQ80]{DBLP:journals/mp/PicardQ80}
Jean{-}Claude Picard and Maurice Queyranne.
\newblock On the structure of all minimum cuts in a network and applications.
\newblock {\em In Rayward-Smith V.J. (eds) Combinatorial Optimization II. Mathematical Programming Studies}, 13(1):8--16, 1980.

\bibitem[RSL75]{ratliff1975finding}
H~Donald Ratliff, G~Thomas Sicilia, and SH~Lubore.
\newblock Finding the n most vital links in flow networks.
\newblock {\em Management Science}, 21(5):531--539, 1975.

\bibitem[RZ12]{roditty2012replacement}
Liam Roditty and Uri Zwick.
\newblock Replacement paths and k simple shortest paths in unweighted directed graphs.
\newblock {\em ACM Transactions on Algorithms (TALG)}, 8(4):1--11, 2012.

\bibitem[TSL{\etalchar{+}}94]{tsen1994finding}
Fu-Shang~P Tsen, Ting-Yi Sung, Men-Yang Lin, Lih-Hsing Hsu, and Wendy Myrvold.
\newblock Finding the most vital edges with respect to the number of spanning trees.
\newblock {\em IEEE Transactions on Reliability}, 43(4):600--603, 1994.

\bibitem[VDBCP{\etalchar{+}}23]{van2023deterministic}
Jan Van Den~Brand, Li~Chen, Richard Peng, Rasmus Kyng, Yang~P Liu, Maximilian~Probst Gutenberg, Sushant Sachdeva, and Aaron Sidford.
\newblock A deterministic almost-linear time algorithm for minimum-cost flow.
\newblock In {\em 2023 IEEE 64th Annual Symposium on Foundations of Computer Science (FOCS)}, pages 503--514. IEEE, 2023.

\bibitem[VY92]{vazirani1992suboptimal}
Vijay~V Vazirani and Mihalis Yannakakis.
\newblock Suboptimal cuts: Their enumeration, weight and number.
\newblock In {\em International Colloquium on Automata, Languages, and Programming}, pages 366--377. Springer, 1992.

\bibitem[Wol63]{wollmer1963some}
Richard~D Wollmer.
\newblock {\em Some methods for determining the most vital link in a railway network}.
\newblock Rand Corporation, 1963.

\bibitem[WWX22]{williams2022algorithms}
Virginia~Vassilevska Williams, Eyob Woldeghebriel, and Yinzhan Xu.
\newblock Algorithms and lower bounds for replacement paths under multiple edge failure.
\newblock In {\em 2022 IEEE 63rd Annual Symposium on Foundations of Computer Science (FOCS)}, pages 907--918. IEEE, 2022.

\end{thebibliography}

\appendix

\section{Limitation of Mincut Cover for Complete Characterization} \label{app : limitation of mincut cover}
In this section, we establish the following theorem.
 \begin{theorem} \label{thm : limitation of mincut cover}
    Let $G$ be an undirected weighted graph. There exists a subset $E_{cov}$ of vital edges such that keeping a mincut for each edge in $E_{cov}$ provides a mincut cover for $G$ but the set of DAGs, ${\mathcal D}_{PQ}(G_{(x,y)})$ for each edge $(x,y)\in E_{cov}$, fails to store and characterize all mincuts for all vital edges of $G$.  
 %
\end{theorem}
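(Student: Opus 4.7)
The plan is to exhibit a small undirected weighted graph $G$ and a specific subset $E_{cov}$ of vital edges such that the mincuts chosen for edges in $E_{cov}$ form a valid mincut cover of minimum cardinality, yet at least one mincut of a vital edge is absent from every DAG ${\mathcal D}_{PQ}(G_{(x,y)})$ for $(x,y) \in E_{cov}$.

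The key technical observation driving the construction is that an $(s,t)$-cut $C$ of $G$ is preserved as a $1$-transversal cut in ${\mathcal D}_{PQ}(G_{(x,y)})$ if and only if $C$ is itself a mincut for $(x,y)$ in $G$: this requires $x \in C$, $y \in \overline{C}$, and $c(C)$ to equal the minimum capacity over all $(s,t)$-cuts of $G$ containing $x$ and excluding $y$. Whenever $x \notin C$ or $y \in C$, one of the infinite-capacity edges $\{s,x\}$ or $\{y,t\}$ added in $G_{(x,y)}$ contributes to $C$ and inflates its capacity beyond the $(s,t)$-mincut capacity of $G_{(x,y)}$, so $C$ cannot be preserved. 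Therefore the counterexample reduces to exhibiting a mincut $C^{*}$ of some vital edge that fails this compatibility test against every edge in $E_{cov}$.

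I would use the four-vertex graph $G$ on $\{s, a, c, t\}$ with undirected edges $(s,a), (s,c), (a,t), (c,t)$, each of capacity $2$. A direct enumeration shows that the $(s,t)$-mincut has capacity $4$ and is attained by exactly four mincuts $\{s\}, \{s,a\}, \{s,c\}, \{s,a,c\}$; the vital edges are precisely the four existing edges, and each vital edge admits exactly the two mincuts in which it is a contributing edge (for instance, $(s,a)$ admits $\{s\}$ and $\{s,c\}$; $(a,t)$ admits $\{s,a\}$ and $\{s,a,c\}$). Taking $E_{cov} = \{(s,a), (c,t)\}$ with associated mincuts $\{s\}$ and $\{s,a,c\}$ respectively gives a valid mincut cover of size $2$, which is minimum since no single mincut among the four simultaneously covers all four vital edges. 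Then ${\mathcal D}_{PQ}(G_{(s,a)})$ stores exactly the mincuts for $(s,a)$, namely $\{s\}$ and $\{s,c\}$; and ${\mathcal D}_{PQ}(G_{(c,t)})$ stores exactly the mincuts for $(c,t)$, namely $\{s,c\}$ and $\{s,a,c\}$. Their union omits $\{s,a\}$, which is nonetheless a mincut for the vital edges $(s,c)$ and $(a,t)$: explicitly, $\{s,a\}$ fails compatibility with $(s,a) \in E_{cov}$ because $a \in \{s,a\}$, and fails with $(c,t) \in E_{cov}$ because $c \notin \{s,a\}$ so the added infinite edge $\{s,c\}$ in $G_{(c,t)}$ contributes to $\{s,a\}$.

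The main subtlety is the choice of $E_{cov}$. The alternative minimum cover $\{(s,a), (a,t)\}$ happens to store all four mincuts, since ${\mathcal D}_{PQ}(G_{(a,t)})$ stores $\{s,a\}$ and $\{s,a,c\}$; so the theorem requires exhibiting the specific ``symmetry-breaking'' cover that commits to the two extreme mincuts $\{s\}$ and $\{s,a,c\}$ and thereby blind-spots the two intermediate ones $\{s,a\}$ and $\{s,c\}$. Once the appropriate cover is identified, the verification is a straightforward case check on a four-vertex graph, and the result extends mutatis mutandis to larger examples where the analogous bipartite mincut structure persists.
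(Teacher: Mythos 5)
Your proposal is correct and takes essentially the same approach as the paper: an explicit counterexample graph together with a carefully chosen cover. The paper uses a six-vertex graph with a size-$3$ cover that misses the mincut $\{s,a,b,c,d\}$ for edge $(b,d)$, whereas your four-vertex diamond with uniform capacities and the cover $\{\{s\},\{s,a,c\}\}$ is a smaller instance of the same phenomenon; all of your claimed facts (the four mincuts, the vitality of all edges, the minimality of the size-$2$ cover, and the omission of $\{s,a\}$ from both DAGs) check out.
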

\begin{figure}[ht]
 \centering
    \includegraphics[width=0.4\textwidth]{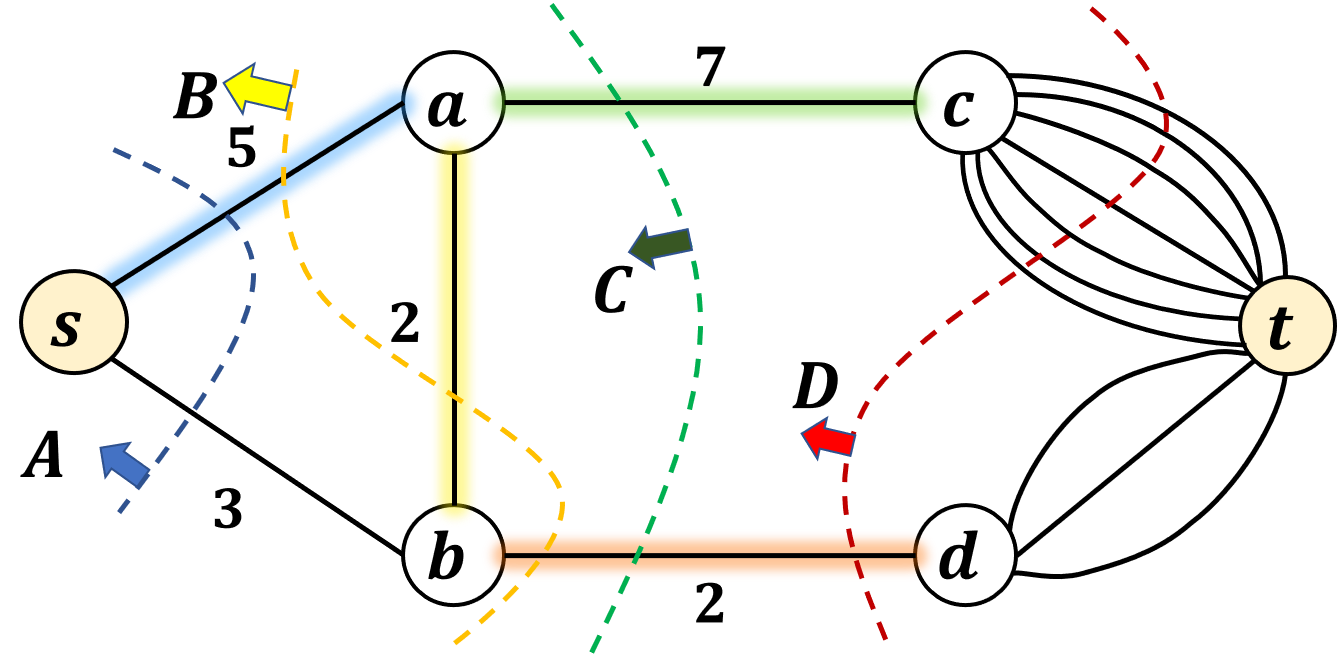} 
  \caption{Graph $H$. A vital edge $e$ and a mincut for edge $e$ appear in the same color.}
  \label{fig : limitation of mincut cover}
\end{figure}
We construct the following undirected weighted graph $H$ to establish Theorem \ref{thm : limitation of mincut cover}. We refer to Figure \ref{fig : limitation of mincut cover} for better understanding. \\

\noindent
\textbf{Construction of $H$:}  It consists of four vertices $a,b,c,$ and $d$ other than the source vertex $s$ and the sink vertex $t$. The edges of $H$ are defined as follows. There are $7$ edges of capacity $1$ between $c$ and $t$ and $3$ edges of capacity $1$ between $d$ and $t$. Other edges are $(s,a),(s,b),(a,b),(a,c),(b,d)$ with capacities $5,3,2,7,2$ respectively. 

\begin{lemma}
    Let us consider three cuts $A=\{s\}$, $B=\{s,b\},$ and $C=\{s,a,b\}$. The set ${\mathcal M}=\{A,B,C\}$ forms a mincut cover for $H$ and it is constructed using the set of edges ${E_{cov}}=\{(s,a),(a,b),(a,c)\}$.
\end{lemma}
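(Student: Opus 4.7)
The proof is essentially a direct verification on the small graph $H$, so the plan is to carry it out in three phases: compute the $(s,t)$-mincut and identify the vital edges, verify that each cut in $\mathcal{M}=\{A,B,C\}$ is a mincut for some edge of $E_{cov}$, and finally verify that $\mathcal{M}$ covers every vital edge.

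First, I would compute $c(A)$, $c(B)$, and $c(C)$ directly from the edge capacities. For $A=\{s\}$ the contributing edges are $(s,a)$ and $(s,b)$, giving $c(A)=5+3=8$; for $B=\{s,b\}$ the contributing edges are $(s,a), (a,b), (b,d)$, giving $c(B)=5+2+2=9$; for $C=\{s,a,b\}$ the contributing edges are $(a,c), (b,d)$, giving $c(C)=7+2=9$. Since $c(A)=8$ and every other $(s,t)$-cut one can write down in this 6-vertex graph has capacity $\geq 8$, the $(s,t)$-mincut capacity is $f^*=8$. I would enumerate $(s,t)$-cuts (there are only $2^4=16$ of them) to formally confirm this and to read off, for each edge $e$, the minimum capacity of an $(s,t)$-cut in which $e$ is a contributing edge.

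Next, I would identify $E_{vit}$. Using $f^*=8$, I check for each edge whether its removal drops the mincut below $8$. A brief case check gives $E_{vit}=\{(s,a), (s,b), (a,b), (a,c), (b,d)\}$; the parallel $(c,t)$ and $(d,t)$ edges are individually nonvital because they contribute only to cuts of capacity $\geq 9$. Then for each edge in $E_{cov}$ I would verify the claim that the associated cut in $\mathcal{M}$ is indeed a mincut for it: $(s,a)$ contributes to $A$ and no cut of capacity $<8$ contains it, so $A$ is a mincut for $(s,a)$; from the enumeration, $B$ achieves the minimum capacity $9$ among cuts in which $(a,b)$ is contributing; and $C$ achieves the minimum capacity $9$ among cuts in which $(a,c)$ is contributing. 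This establishes the second claim, that $\mathcal{M}$ is constructed from $E_{cov}$.

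Finally, for the mincut-cover claim, I would check the remaining vital edges $(s,b)$ and $(b,d)$. Edge $(s,b)$ is contributing to $A=\{s\}$, which has capacity $8$ equal to $f^*$, so $A$ is trivially a mincut for $(s,b)$ as well. Edge $(b,d)$ is contributing to $B=\{s,b\}$ of capacity $9$, and a short case check on $(s,t)$-cuts containing $(b,d)$ as a contributing edge shows $9$ is the minimum, so $B$ is a mincut for $(b,d)$. Hence every vital edge has at least one mincut in $\mathcal{M}$, completing the proof. The only real obstacle is bookkeeping: being careful with the parallel edges $c$–$t$ and $d$–$t$ and with the undirected convention that contributing edges are those with exactly one endpoint on each side. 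Since the graph has only six vertices, full enumeration of $(s,t)$-cuts makes every verification tight and mechanical.
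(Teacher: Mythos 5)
Your proof is correct and takes essentially the same approach as the paper: a direct verification on the small graph $H$ that $f^*=8$, that the vital edges are exactly $\{(s,a),(s,b),(a,b),(a,c),(b,d)\}$, and that $A$, $B$, $C$ are mincuts for the edges of $E_{cov}$ while also covering the remaining two vital edges. The only (immaterial) difference is that you cover $(b,d)$ by $B=\{s,b\}$ whereas the paper uses $C=\{s,a,b\}$; both are valid since $(b,d)$ has several mincuts of capacity $9$, including $B$, $C$, and $\{s,a,b,c\}$.
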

\begin{proof}
    The capacity of $(s,t)$-mincut in $H$ is $8$. The set of vital edges of $H$ is $\{ (s,a), (s,b), (a,b), (a,c), \\ (b,d) \}$. We arrange the set of all vital edges in the increasing order of their capacities of mincuts as follows -- $\{(s,a), (s,b), (a,b), (a,c), (b,d)\}$. Let us select the following mincuts for each edge in $E_{cov}$ according to the order -- $A$ is a mincut for edge $(s,a)$, $B$ is a mincut for edge $(a,b)$, and $C$ is a mincut for edge $(a,c)$. It is easy to observe that $A$ is also a mincut for edge $(s,b)$ and $C$ is also a mincut for edge $(b,d)$.  Therefore, it follows from the construction of mincut cover (Theorem \ref{thm : n-1 cuts}) in Section \ref{sec: mincut-cover} that ${\mathcal M}=\{A,B,C\}$ is a  mincut cover for $H$.   
\end{proof}
It is easy to observe that the set of DAGs, ${\mathcal D}_{PQ}(H_e)$ for each edge $e\in E_{cov}$, does not store the mincut $D=\{s,a,b,c,d\}$ for edge $(b,d)$. This completes the proof of Theorem \ref{thm : limitation of mincut cover}

\section{A Path Intersecting an (s,t)-cut At Least n Times} \label{app : unbounded transversality}
     
\begin{figure}[ht]
 \centering
    \includegraphics[width=0.6\textwidth]{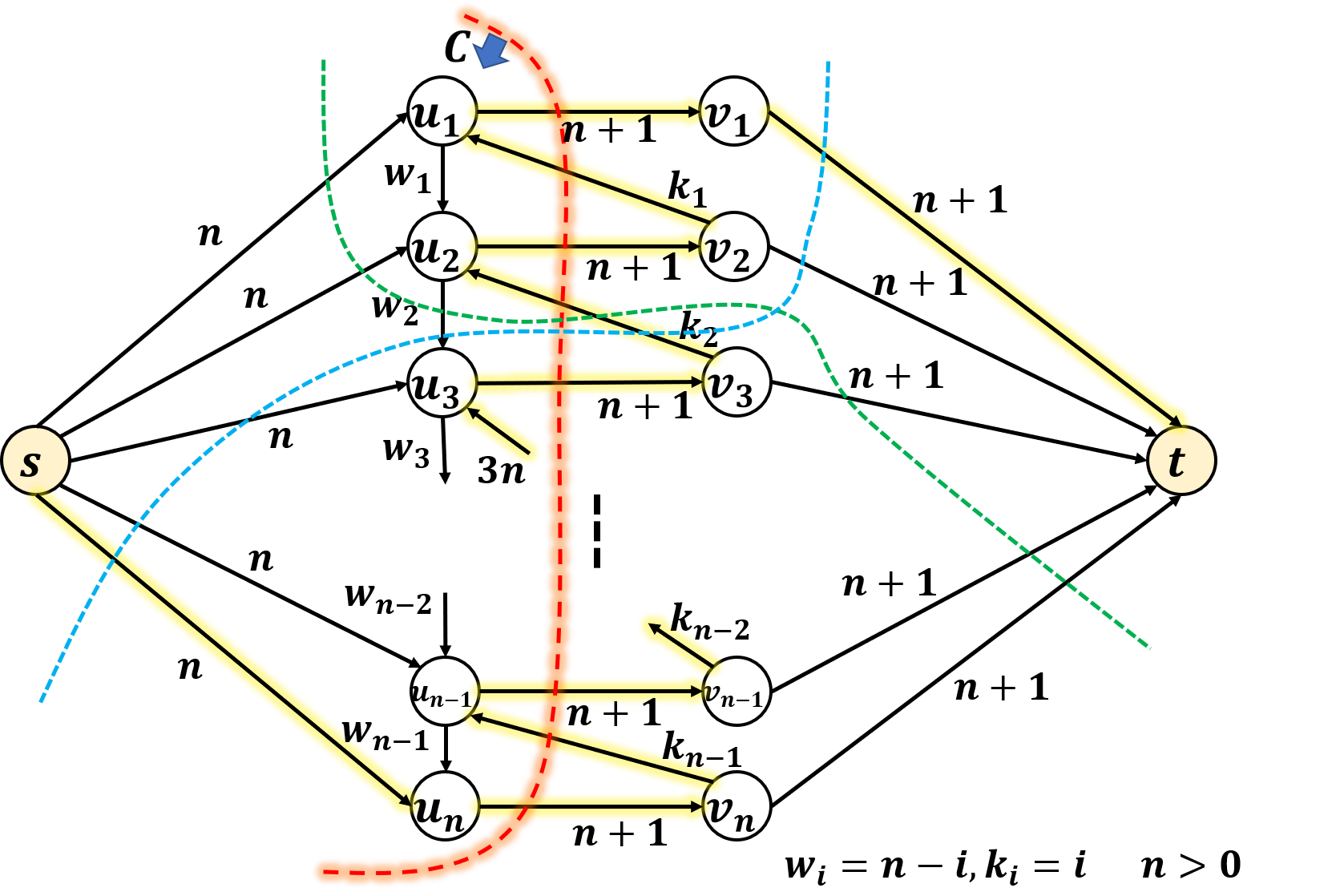} 
  \caption{  Graph $H$ is the same as the graph $Q'(H)$. In $Q'(H)$, mincut $C$ for vital edge $(u_1,v_1)$ intersects the $s$ to $t$ yellow path $2n-1$ times. }   \label{fig : n transversal with gamma non zero} 
\end{figure}

In order to establish $\Omega(n)$-transversality of mincuts for vital edges in $Q'(G)$, we construct the following graph on $2n+2$ vertices with a designated source vertex $s$ and a designated sink vertex $t$. \\

\noindent
\textbf{Construction of ${H}$ : }The vertex set contains two disjoint subsets $A=\{u_1,u_2,\ldots,u_n\}$ and $B=\{v_1,v_2,\ldots,v_n\}$ of $n$ vertices each with a source vertex $s$ and a sink vertex $t$. \\
The edge set consists of the following edges. There is an edge $(s,u)$ for each $u\in A$ of capacity $n$. Similarly, there is an edge $(v,t)$ for each $v\in B$ of capacity $n+1$. For each pair $\{u_i,v_i\}$, $1\le i \le n$ there is an edge $(u_i,v_i)$ of capacity $n+1$. For each consecutive pair $\{u_i,u_{i+1}\}$, $1\le i \le n-1$ there is an edge $(u_i,u_{i+1})$ of capacity $n-i$. Finally, for each pair $\{v_{i+1},u_i\}$, $1\le i \le n-1$, there is an edge $(v_{i+1},u_i)$ of capacity $i$. For better understanding we refer to Figure \ref{fig : n transversal with gamma non zero}.

\begin{lemma} \label{lem : graph is quotient graph}
    For each edge $e$ in $H$, there is a mincut for a vital edge in which edge $e$ is contributing.
\end{lemma}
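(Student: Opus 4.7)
My plan is to fix the value $f^* = n^2$ of the $(s,t)$-mincut of $H$, exhibit explicit $(s,t)$-cuts that are mincuts for certain vital edges, and verify by direct inspection that every one of the five edge families of $H$ is contributing to at least one of these mincuts. The argument is essentially a case analysis, one case per edge type, with each case following from a compact capacity calculation combined with Theorem \ref{thm : a special assignment of flow} to identify a mincut for a suitable vital edge.

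First I would establish $f^* = n^2$. The cut $\{s\}$ has capacity $\sum_{i=1}^{n} w(s, u_i) = n \cdot n = n^2$, and the flow sending $n$ units along each of the $n$ disjoint paths $s \to u_j \to v_j \to t$ is valid and has value $n^2$; by max-flow min-cut the claim follows and $\{s\}$ is an $(s,t)$-mincut. In particular, every edge $(s, u_i)$ is contributing to $\{s\}$, which is a mincut for the vital edge $(s, u_i)$ itself. Next I would analyse the prefix cuts $C_k := \{s, u_1, \ldots, u_k\}$ for $1 \le k \le n$. Enumerating their outgoing edges — $(s, u_j)$ for $j > k$, the forward edge $(u_k, u_{k+1})$ when $k < n$, and $(u_j, v_j)$ for $j \le k$ — gives
\[
c(C_k) \;=\; n(n-k) + (n-k) + k(n+1) \;=\; n^2 + n.
\]
Removing $(u_1, v_1)$ lowers $c(C_1)$ to $n^2 - 1$, so $(u_1, v_1)$ is vital with $w_{min}((u_1, v_1)) = 1$; by Theorem \ref{thm : a special assignment of flow} its mincut capacity equals $f^* + w((u_1, v_1)) - w_{min}((u_1, v_1)) = n^2 + n$. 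Hence each $C_k$ with $k \ge 1$ is a mincut for the vital edge $(u_1, v_1)$, and reading off its outgoing edges handles every $(u_i, v_i)$ (take $k \ge i$) and every forward edge $(u_i, u_{i+1})$ (take $k = i$).

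For each edge $(v_i, t)$ I would use the cut $V \setminus \{t\}$, whose capacity is $\sum_j w(v_j, t) = n(n+1) = n^2 + n$. Removing $(v_i, t)$ reduces $c(V \setminus \{t\})$ to $n^2 - 1$ while leaving $c(\{s\}) = n^2$ untouched, so $(v_i, t)$ is vital with $w_{min} = 1$ and mincut capacity $n^2 + n$; hence $V \setminus \{t\}$ is a mincut for the vital edge $(v_i, t)$ itself, with $(v_i, t)$ as one of its outgoing edges.

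The one genuinely non-trivial case is the backward edge $(v_{i+1}, u_i)$, which is easily checked to be nonvital (removing it affects no cut of capacity at most $n^2$). For it I would introduce the cut
\[
D_i \;:=\; \{s\} \,\cup\, \{u_{i+1}, \ldots, u_n\} \,\cup\, \{v_{i+1}, \ldots, v_n\}, \qquad 1 \le i \le n-1.
\]
A careful inventory shows that its outgoing edges are exactly $(s, u_j)$ for $1 \le j \le i$ (total $ni$), $(v_j, t)$ for $i+1 \le j \le n$ (total $(n-i)(n+1)$), and the single backward edge $(v_{i+1}, u_i)$ itself (contributing $i$); every other edge of $H$ is either internal to $D_i$, external to it, or incoming. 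Summing yields $c(D_i) = ni + (n-i)(n+1) + i = n^2 + n$. Since $(v_{i+1}, t)$ is a vital edge with mincut capacity $n^2 + n$ by the previous paragraph and is outgoing from $D_i$, the cut $D_i$ is a mincut for the vital edge $(v_{i+1}, t)$, and the backward edge $(v_{i+1}, u_i)$ lies in its edge-set as an outgoing edge.

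The main obstacle is the bookkeeping for $D_i$: correctly classifying the forward edges $(u_j, u_{j+1})$ (internal for $j \ge i+1$, incoming at $j = i$, external for $j \le i-1$) and the backward edges $(v_{j+1}, u_j)$ for $j \ne i$ (incoming for $j > i$, external for $j < i$), so that exactly one such edge — $(v_{i+1}, u_i)$ — turns out to be outgoing. Once this classification is verified, the equality $c(D_i) = n^2 + n$ follows, and the lemma is immediate from the definition of a mincut for a vital edge.
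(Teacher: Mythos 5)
Your proof is correct and follows essentially the same route as the paper: exhibit explicit $(s,t)$-cuts of capacity $n^2+n$, certify each as a mincut for a suitably chosen vital edge, and check that every edge family of $H$ contributes to one of them (your $D_i$ is exactly the paper's $C_2^{i+1}$, and your prefix cuts play the role of the paper's $C_1^i$). Two small points: for $j>i$ the backward edges $(v_{j+1},u_j)$ are internal to $D_i$ rather than incoming (harmless, since neither kind contributes to the capacity), and the assertion $w_{min}((u_1,v_1))=1$ does not follow from the drop in $c(C_1)$ alone (that only gives $w_{min}\ge 1$) — like the paper, you still need the minimality claim that no cut with $u_1$ on the source side and $v_1$ on the sink side is cheaper than $n^2+n$, so it is better to state that directly.
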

\begin{proof}
    The $(s,t)$-mincut is $\{s\}$ and has capacity $n^2$. All outgoing edges of source $s$ are contributing to the $(s,t)$-mincut. Hence, they are vital as well. 

    Let us consider an edge $(v_i,t)$ where $1\le i \le n$. There is an $(s,t)$-cut, say $C_1^i$, for $(v_i,t)$ (except $(v_n,t)$) that keeps all vertices $u_j$ and $v_j$, $1\le j \le i$ on the side of $s$. There is another $(s,t)$-cut, say $C_2^i$, for $(v_i,t)$ (except edge $(v_1,t)$), that keeps all vertices $u_j$ and $v_j$, $i\le j \le n$ on the side of $s$. The capacity of $C_1^i$ is $(n+1)i+(n-i)+n(n-i)=n^2+n$. Similarly the capacity of $C_2^i$ is $ni+i+(n-i)(n+1)=n^2+n$. Every other $(s,t)$-cut in which edge $(v_i,t)$ is contributing has capacity at least $n^2+n$. Hence both $C_1^i$ and $C_2^i$ are mincuts for edge $(v_i,t)$. Moreover, failure of edge $(v_i,t)$ reduces capacity of $(s,t)$-mincut to $n^2+n-n-1=n^2-1$. Therefore, each edge $(v_i,t)$, $1\le i \le n$ is a vital edge. 
    For an edge $(u_i,u_{i+1})$, $1\le i \le n-1$, $C_1^i$ is the mincut for vital edge $(v_i,t)$ in which the edge is contributing. Similarly, for an edge $(v_{i+1},u_i)$, $1\le i\le n-1$, $C_2^{i+1}$ is the mincut for edge $(v_{i+1},t)$ in which the edge is contributing.

    We now consider the set $E_{m}$ of all edges $(u_i,v_i)$, $1\le i \le n$. Each edge from $E_{m}$ contributes to the $(s,t)$-cut $C$ that keeps set $A$ on the side of $s$. $C$ has a capacity $n^2+n$. Removal of each edge in $E_{m}$ reduces capacity of $(s,t)$-mincut to $n^2-1$. Hence all the edge of $E_m$ are vital edges. Moreover, $C$ is a mincut for each edge in $E_m$.
 \end{proof}
 It follows from Lemma \ref{lem : graph is quotient graph} that the quotient graph of $H$ is the graph $H$. Moreover, after flipping the orientation of all edges that are only incoming to a mincut for a vital edge from $H$, $H$ is the resulting graph.  Therefore, $Q'(H)$ is the same graph as $H$. Again, it follows from the proof of Lemma \ref{lem : graph is quotient graph} that $C$ is a mincut for vital edge $(u_1,v_1)$. However, there is a path $\langle s,u_n,v_n,u_{n-1},v_{n-1},\ldots,u_2,v_2,u_1,v_1,t\rangle$ that intersectes edge-set of $C$ exactly $2n-1$ times. Hence $C$ appears as a $2n-1$-transversal cut. This establishes Theorem \ref{thm : n transversality}.


\end{document}